\documentclass[12pt]{article}
\usepackage{times}

\usepackage{amsmath,amsthm,amssymb,bbold, mathtools,setspace,enumitem,epsfig,titlesec,verbatim,color,array,multirow,comment,graphicx,hyperref,blkarray}
\usepackage[sort&compress,comma,round,numbers]{natbib} 
\usepackage[bf,small]{caption}
\usepackage[export]{adjustbox}
\usepackage[margin=2.5cm, includefoot, footskip=30pt]{geometry}

\makeatletter
\renewcommand{\fnum@figure}{Fig. \thefigure}
\makeatother

\definecolor{darkred}{rgb}{0.6,0,0}
\definecolor{darkblue}{rgb}{0,0.3,0.8}

\titleformat{\section}{\sffamily \fontsize{13}{20}\bfseries}{\thesection}{1em}{}
\titleformat{\subsection}{\sffamily \fontsize{12}{20}\bfseries}{\thesubsection}{1em}{}

\newtheoremstyle{plainCl1}
{9pt}
{15pt}
{\it}
{}
{\bfseries}
{.}
{2mm}
{}

\newtheoremstyle{plainCl2}
{9pt}
{15pt}
{\it}
{}
{\bfseries}
{.}
{4mm}
{}

\theoremstyle{plainCl1}

\theoremstyle{plainCl2}

\newcommand{\SupplementaryInformation}{\textbf{Supplementary Information}}
\newcommand{\Methods}{\textbf{Materials and Methods}}

\title{\sffamily \large {\bfseries The co-evolution of direct, indirect and generalized reciprocity}}
\date{\empty}
\author{\parbox[c]{16cm}{\centering \onehalfspacing \fontsize{11}{12}\selectfont Saptarshi Pal$^{1,2}$, Christian Hilbe$^{2}$, Nikoleta E. Glynatsi$^{2}$ \\[0.2cm]
$^1$ Department of Mathematics, Harvard University, Cambridge, USA.\\
$^2$Max Planck Research Group Dynamics of Social Behavior,\\ Max Planck Institute for Evolutionary Biology, Germany.}\\ \\
}

\begin{document}
\maketitle
\onehalfspacing

\section*{Abstract}
People often engage in costly cooperation, especially in repeated interactions.
When deciding whether to cooperate, individuals typically take into account how
others have acted in the past. For instance, when one person is deciding whether
to cooperate with another, they may consider how they were treated by the other
party (direct reciprocity), how the other party treated others (indirect
reciprocity), or how they themselves were treated by others in general
(generalized reciprocity). Given these different approaches, it is unclear which
strategy, or more specifically which mode of reciprocity, individuals will
prefer. This study introduces a model where individuals decide how much weight
to give each type of information when choosing to cooperate. Through equilibrium
analysis, we find that all three modes of reciprocity can be sustained when
individuals have sufficiently frequent interactions. However, the existence of
such equilibria does not guarantee that individuals will learn to use them.
Simulations show that when individuals mainly imitate others, generalized
reciprocity often hinders cooperation, leading to defection even under
conditions favorable to cooperation. In contrast, when individuals explore new
strategies during learning, stable cooperation emerges through direct
reciprocity. This study highlights the importance of studying all forms of
reciprocity in unison.
\newpage
\section*{INTRODUCTION}

Reciprocity is one of the primary reasons why people cooperate with
others~\citep{Nowak:Science:2006}. The promise of a reciprocal act often drives
us to help others, even when it is costly to do so. However, the concept of
reciprocity extends beyond direct reciprocity, the simple give and take between
two individuals. People can also reciprocate by helping those who help others,
even without benefiting
directly~\citep{Wedekind:Science:2000,Milinski:PRSB:2001,Seinen:EER:2006,Engelmann:GEB:2009,Yoeli:PNAS:2013},
or by extending help to others after having
received help
themselves~\citep{baker2014paying,melamed2020robustness,simpson2018roots}.
These alternative forms of repayment are referred to as downstream and upstream
reciprocity, or simply as indirect~\citep{Nowak:Nature:2005,Ohtsuki:JTB:2006a,
Ohtsuki:JTB:2004a,clark2020indirect} and generalized
reciprocity~\citep{Pfeiffer:PRSB:2005,barta2011cooperation,
Rutte:PLoSB:2007,Nowak:Nature:2005}.

Although the three kinds of reciprocity—direct, indirect, and generalized—are
conceptually different, they are intertwined in practice. For example, if Alice
has a positive experience with Bob but also notices that he defects with others,
should she employ direct reciprocity (to cooperate) or indirect reciprocity (to
defect) with him the next time they meet? Similarly, if Bob observes Alice
defecting with others and is therefore unwilling to cooperate with her (as per
indirect reciprocity), should he change his mind if Charlie just cooperated with
him and he is in a positive mood (as per generalized reciprocity)? Most previous
studies are unable to formally test these types of questions, either because
they only study one kind of
reciprocity~\citep{Axelrod:Science:1981,Ohtsuki:JTB:2004a,Pfeiffer:PRSB:2005,van2012evolution,fudenberg2012slow,Van-Veelen:PNAS:2012,Yoeli:PNAS:2013,Hilbe:ncomms:2014,Hilbe:NHB:2018},
or each kind is studied in
isolation~\citep{Boyd:SN:1989,dufwenberg2001direct,herne2013experimental,stanca2009measuring}.

By now, experimental~\citep{Molleman:PRSB:2016} and theoretical
papers~\citep{Schmid:NHB:2021,nakamaru2004evolution,seki2016model} have studied
how individuals weigh between direct and indirect reciprocity. The broad
conclusion from these studies is that people often integrate their own direct
experience and the indirect experience of others to decide whether to cooperate
with someone~\citep{Molleman:PRSB:2016}. In addition, there are situations where
individuals predominantly prefer direct over indirect reciprocity. For example,
when outside information is untrustworthy, individuals tend to ignore it and
rely on their direct
experience~\citep{Schmid:NHB:2021,seki2016model,nakamaru2004evolution}. These
results on direct and indirect reciprocity lead up to the natural follow-up
question: how do we integrate our general experience with others when we decide
to cooperate with someone? In particular, are there situations in which
individuals may prefer to cooperate via generalized reciprocity over the two
other modes? Most theoretical models of generalized reciprocity either treat it
as an isolated form of
reciprocity~\citep{rankin2009assortment,Pfeiffer:PRSB:2005,barta2011cooperation,van2012evolution},
or are, by construction, unable to study the competition between the different
modes of reciprocity~\citep{nowak2007upstream,sasaki2023evolution}. To address
the questions posed above, however, a unified model is crucial.

In this paper, we introduce a model in which players can adopt strategies that
cooperate through direct, indirect, or generalized reciprocity, or a combination
of these mechanisms. More precisely, while interacting repeatedly within a
population, individuals cooperate based on whether they perceive their current
co-player to be in good or bad standing. These standings are updated based on
strategies that assign different weights to direct, indirect, and generalized
experiences. Our model is a direct extension of the framework introduced by
Schmid et al.~\citep{Schmid:NHB:2021}, where they consider direct and indirect
reciprocity in a unified model. Schmid et al. were the first to propose a
framework that is explicitly analyzable and allows for analytical conclusions
when comparing direct and indirect reciprocity. Similarly, our model remains
mathematically tractable. We also extend the theory of zero-determinant
strategies from direct reciprocity to a framework that includes indirect and
generalized reciprocity, enabling us to derive analytical conclusions when
comparing the three mechanisms.

\section*{RESULTS}

\subsection*{A model of direct, indirect and generalized reciprocity}

We consider a population of \(n\) individuals who repeatedly interact in pairs.
More specifically, each interaction involves two
randomly selected players from the population, who simultaneously and
independently decide whether to cooperate or defect. If both cooperate, they
incur a cost \(c > 0\) to provide a benefit \(b > c\) to their co-player. If
they defect, they pay no cost and provide no benefit. All individuals not
participating in the current interaction observe its outcome. However, these
third-party observations are subject to errors: each observer independently
mistakes cooperation for defection (or \emph{vice versa}) with probability \(
\varepsilon\). We assume that players do not make observation errors in
interactions they are directly involved in. Once the interaction concludes,
another one occurs with probability \(d\) and with probability \(1 - d\),
the game ends. In the next interaction, two individuals are again randomly
paired to interact in the same manner as before. The game's setup is illustrated
in Fig.  \ref{fig:introductory}A. At the end of the game, players' payoffs are
calculated by averaging over all the interactions in which they participated.

The decision to cooperate or defect in an interaction depends on how players
currently perceive their co-player. Players can view their co-players in one of
two binary standings: ``good'' or ``bad''. If a player perceives their co-player as
good, they cooperate; otherwise, they defect (Fig. \ref{fig:introductory}B).
These interpersonal standings are
referred to as reputations. Reputations are updated throughout the game based on
the strategies of individuals (Fig. \ref{fig:introductory}C). A player's
strategy, denoted as \(\sigma = (y, p, q, \lambda, \gamma)\), is a tuple with
five components. The parameter \(y\) represents the initial probability that a
co-player is deemed good in the absence of prior information. This applies when
the player is interacting with someone for the first time and has yet to form an
opinion. The strategy components \(p\) and \(q\) determine the likelihood of
assigning a good reputation to a co-player. 
Specifically, \(p\) represents the probability of assigning a good reputation
after cooperation, while \(q\) represents the probability after defection.

Besides  direct interactions,  players may also update reputations based on
indirect information. If a co-player interacts with a third party, the focal
player updates that co-player's reputation with probability \(\lambda\)
(indirect reciprocity). In this case, the co-player is assigned a good
reputation with probability \(p\) (or \(q\)), if the co-player cooperated (or defected).
In addition, after a direct interaction with a third party,  the focal
player herself updates the reputation of every uninvolved co-player, independently, 
with probability \(\gamma\) (generalized reciprocity).
Once again, reputations are updated according to \(p\) or \(q\) based on whether 
the focal player's interaction partner cooperated or defected.
For a more detailed description of the strategies, see Section 1 of the
\SupplementaryInformation.

\begin{figure}[t!]
    \centering
    \includegraphics[width = \textwidth]{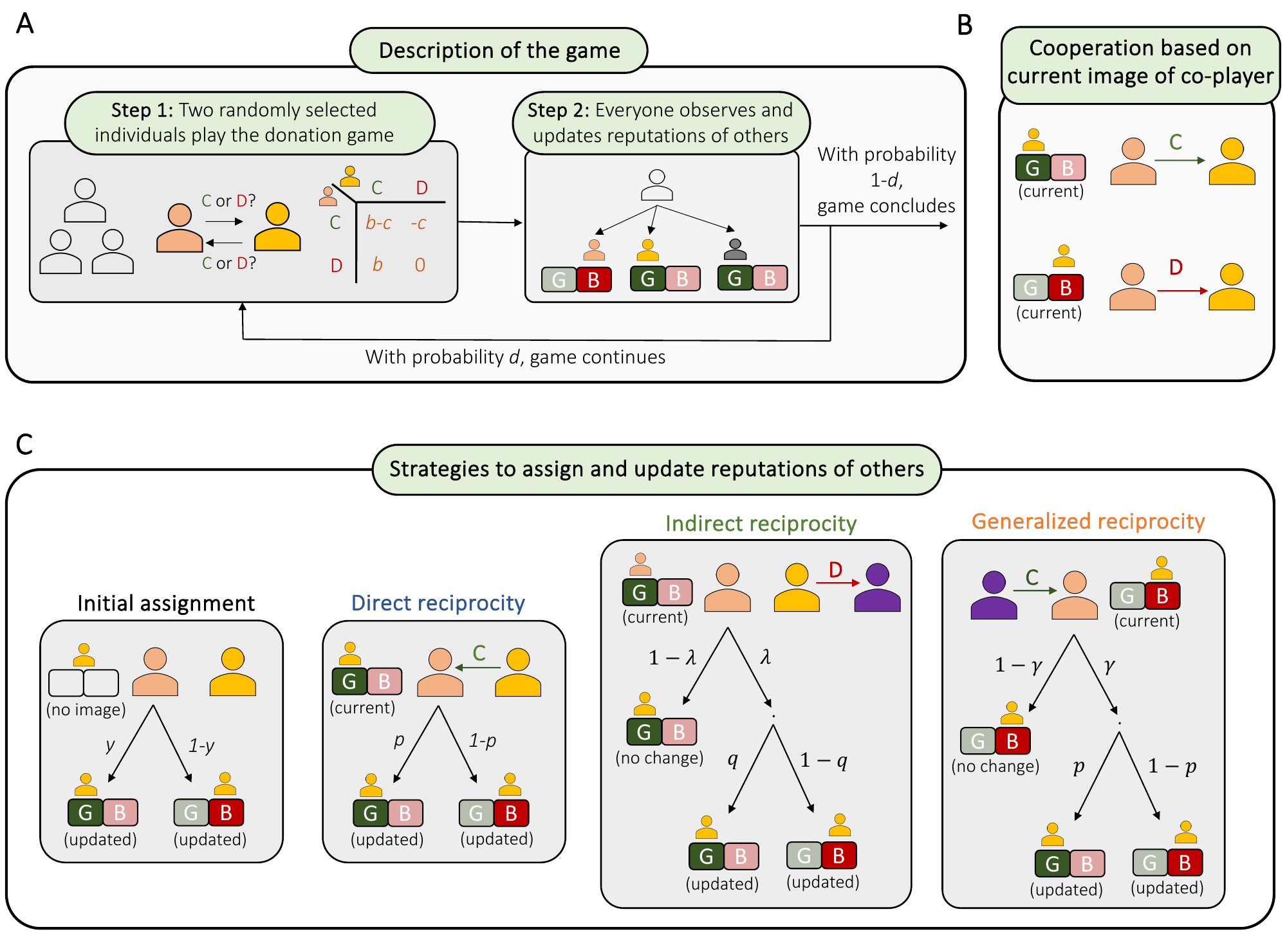}
    \caption{\textbf{A model with direct, indirect and generalized reciprocity.}
    \textbf{(A)} Individuals in a population of size \(n\) interact repeatedly in
    pairs, where each interaction involves two randomly selected players who
    independently choose whether to cooperate or defect. Cooperation results in
    a cost \(c\) and a benefit \(b > c\) to the co-player, while defection
    incurs no cost or benefit. Third-party observers, not involved in the
    interaction, observe the outcome, though these observations may be erroneous
    with probability \(\varepsilon\). Interactions continue with probability
    \(d\), and the game ends with probability \(1 - d\). 
    \textbf{(B)} Players assign reputations (good or bad) to their co-players,
    and their decision to cooperate or defect depends on the current perception
    of their co-player's standing.
    \textbf{(C)} Players assign a good reputation to co-players for whom there is no existing assessment
    with a probability \(y\).
    Reputations are updated based on both direct interactions and
    indirect information, with parameters \(p\) and \(q\) governing updates
    after cooperation or defection, respectively. Indirect
    reciprocity allows reputation updates from third-party interactions with
    probability \(\lambda\), and generalized reciprocity allows spillover
    updates for uninvolved individuals with probability \(\gamma\).}
    \label{fig:introductory}
\end{figure}

Let us now consider the extreme cases of the model. When \(\lambda = \gamma =
0\), players rely exclusively on direct reciprocity, meaning reputations are
updated solely based on their direct interactions with co-players. In this case,
reputations change only according to the outcomes of a player's personal
experiences. If \(\lambda = 0\) and \(\gamma = 1\), reputations are still
updated based on the player's own interactions, but the update does not involve
the co-player directly. Instead, players use their most recent interaction to
update the reputation of another individual. This scenario represents pure
generalized reciprocity. With \(\gamma = 0\) and \(\lambda = 1\),
players update the reputation of a third party whenever that individual
participates in an interaction, irrespective of the observer's direct
relationship with them. This process, which relies on interactions involving
others, is referred to as pure indirect reciprocity. Players combine the three
modes of reciprocity, direct, indirect and generalized when \(0 < \lambda < 1\)
and \(0 < \gamma < 1\).

Note that, for \(\gamma = 0\), the model reduces to the model studied by Schmid et
al.~\citep{Schmid:NHB:2021}, where no generalized reciprocity occurs. Our
framework is a natural extension of their work, incorporating generalized
reciprocity to provide a more unified framework of reciprocity. Similar to the
previous model, it can encompass several well-known strategies from both the
direct and indirect reciprocity literature, such as Tit-for-Tat (TFT), Generous
Tit-for-Tat (GTFT), and the simple scoring strategy from indirect
reciprocity~\citep{Berger:GEB:2011}, which assigns good reputations to
cooperators and bad reputations to defectors, while always cooperating in
anonymous interactions. These strategies are represented as \((1,1,0,0,0)\),
\((1,1,q,0,0)\), and \((1,1,0,1,0)\), respectively, using the current notation.
Additionally, our framework can also represent strategies from the generalized
reciprocity literature, such as Alternating Tit-for-Tat
(A-TFT)~\citep{Pfeiffer:PRSB:2005} or Upstream-TFT~\citep{Boyd:SN:1989}, denoted
as \((1,1,0,0,1)\). A-TFT is a strategy that repeats the last action
of their last co-player in the next interaction.

Our model has two salient properties: (i) conditional cooperation by an
individual depends solely on the current standing of their co-player, and (ii)
any reputation update is based on a single observed action within the
population. These properties make the model mathematically tractable, allowing
for the explicit computation of expected payoffs for strategies in any arbitrary
population. To this end, we derive a recursion formula that relates the
probability of a focal player considering a given co-player as ``good'' after
the next interaction to the probability that they currently see them as
``good''. For a complete description of how payoffs are computed, see Sections 1
and 2 in the \SupplementaryInformation.

\subsection*{Nash equilibria and the conditions for cooperation across reciprocities}

\begin{figure}[t!]
    \centering
    \includegraphics[width = \textwidth]{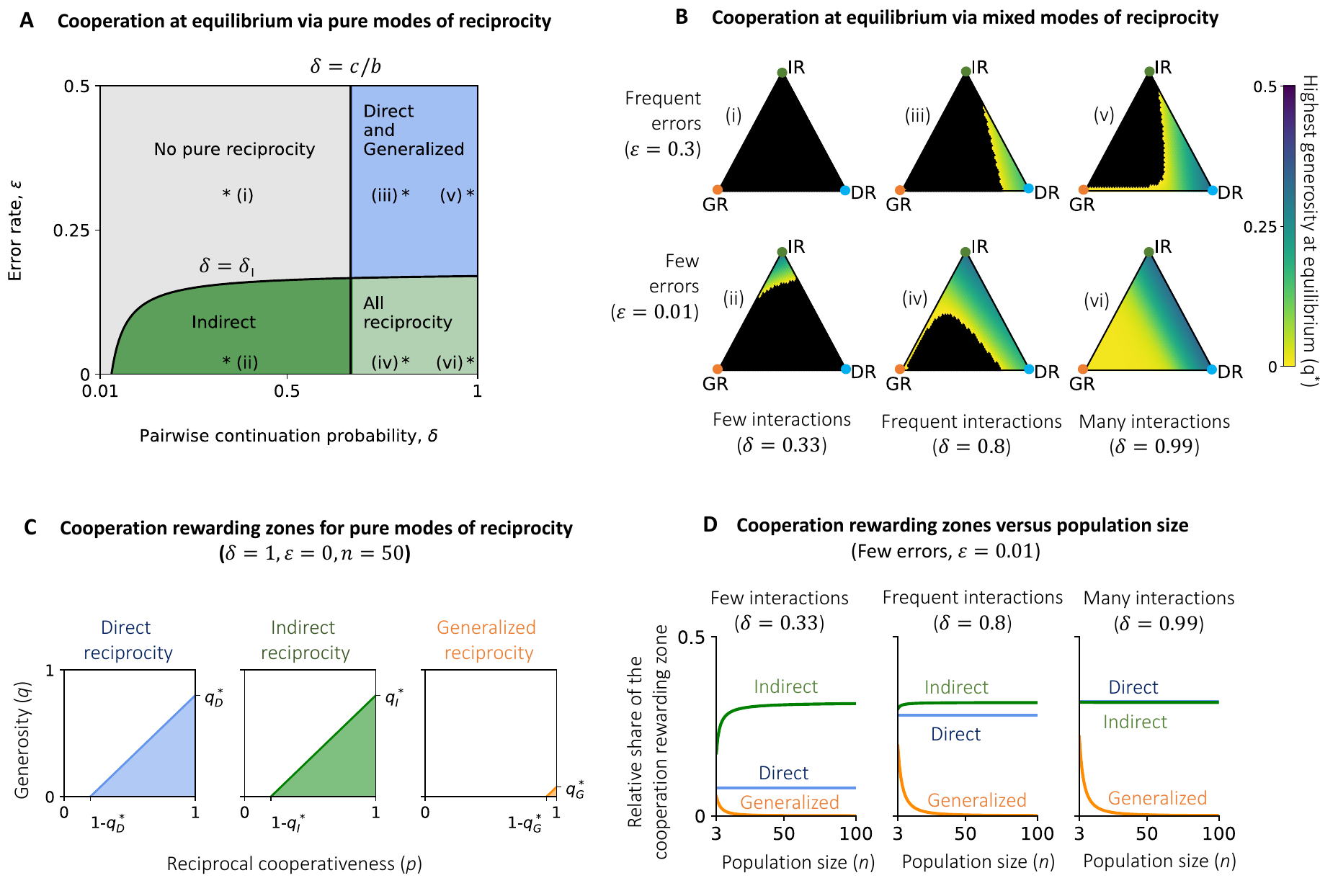}
    \caption
    {\textbf{Equilibrium analysis and cooperation rewarding zones.} (A),
    Pure direct and
    generalized reciprocity can sustain full cooperation at equilibrium if and
    only if the expected number of pairwise interactions is sufficiently high
    (i.e, the pairwise continuation probability, $\delta > c/b$). On the other
    hand, the condition for indirect reciprocity depends both on the expected
    number of pairwise interactions and the rate at which observation errors are
    made: $\delta > \delta_{\mathrm{I}}$; see Eq.~(\ref{Eq:IR-condition-1}).
    \textbf{(B),} All possible mixtures of
    direct, indirect and generalized reciprocity can be represented using a
    simplex wherein each corner represents pure direct (DR), indirect (IR) and
    generalized (GR) reciprocity and points inside (or on the edges) represent
    mixtures of all three (or two) modes of reciprocity. On these simplices, we
    plot the highest generosity towards defection, $q^*$, that sustains full
    cooperation at equilibrium (if full cooperation is feasible). The black
    regions in the simplices indicate that there is no cooperative Nash
    equilibrium for the corresponding mixtures of the parameters (see \Methods{} for more details).
    \textbf{(C),}
    We consider an infinitely long game ($\delta = 1$) where individuals make no
    errors in observation ($\varepsilon = 0$). The cooperation rewarding zones,
    for each mode of reciprocity is annotated with colors in the strategy space
    $(p,q)$. \textbf{(D),} We plot the relative size of the cooperation
    rewarding zone as a fraction of the entire strategy space, for each mode of
    reciprocity as we vary the population size, $n$. Unless otherwise stated, we
    consider a population of 50 individuals. For \textbf{A, B} we
    take benefit to cost ratio, $b/c = 1.5$. For \textbf{C, D} we use $b/c = 5$.}
    \label{fig:equilibrium}
\end{figure}

A strategy is a Nash equilibrium if no player has an incentive to deviate from
it,  given that the other players also follow it.  Our setup allows us to determine whether an arbitrary strategy \(\sigma\) is
a Nash equilibrium. The conditions for this are provided in Section 3 of the
\SupplementaryInformation. Here, we focus on a specific subset of Nash
equilibria that give rise to full cooperation, namely, cooperative Nash
equilibria. For a strategy to be a cooperative Nash equilibrium, it must be both
a Nash equilibrium and self-cooperative. Self-cooperative strategies ensure
cooperation in a homogeneous population, given that there are no errors. In our
framework, a strategy \(\sigma\) is self-cooperative if and only if \(y = p =
1\). We begin by analyzing cooperative equilibria when the population adopts one
of the three pure modes of reciprocity.

The existence of cooperative equilibria depends on the model's parameters. For
both pure direct and generalized reciprocity, full cooperation is sustainable at
equilibrium if, and only if, the expected number of pairwise interactions is
sufficiently high. Specifically, the pairwise continuation probability,
\(\delta\), must satisfy:

\begin{equation}
    \delta > \frac{c}{b}.
    \label{Eq:DR-condition-1}
\end{equation}
\vspace{0.5em}

In the case of pure direct reciprocity (\(\lambda = \gamma = 0\)) with \(y = p =
1\), full cooperation is a Nash equilibrium if

\begin{equation}
    q \leq q^*_{\mathrm{D}} := 1 - \frac{c}{b\delta}.
    \label{Eq:DR-condition-2}
\end{equation}
\vspace{0.5em}

Similarly, for generalized reciprocity (\(\lambda = 0, \gamma = 1\)), full
cooperation is maintained if

\begin{equation}
    q \leq q^*_{\mathrm{G}} := 1 - \frac{c + c\delta(n-2)}{b\delta + c\delta(n-2)}.
    \label{Eq:GR-condition-2}
\end{equation}
\vspace{0.5em}

From conditions~\eqref{Eq:DR-condition-2} and~\eqref{Eq:GR-condition-2}, it can be derived that a fully cooperating population using direct reciprocity can tolerate higher levels of generosity towards errors compared to a population using generalized reciprocity (i.e., \(q^*_{\mathrm{D}} > q^*_{\mathrm{G}}\)). Moreover, this difference becomes increasingly significant in larger populations (i.e., \(q^*_{\mathrm{D}} \gg q^*_{\mathrm{G}}\) when \(n \gg 2\)).

Pure indirect reciprocity allows full cooperation at equilibrium even when the
expected number of pairwise interactions is low. However, this requires
sufficiently reliable observations of third-party interactions. Specifically,
the continuation probability, \(\delta\), and the probability of observational
errors, \(\varepsilon\), must satisfy:

\begin{equation}
    \delta > \delta_\mathrm{I} := \frac{c}{b + (n-2)((1-2\varepsilon)b - c)},
    \label{Eq:IR-condition-1}
\end{equation}
\vspace{0.5em}

Additionally, the strategy must satisfy:

\begin{equation}
    q \leq q^*_{\mathrm{I}}:= 1 - \frac{1 + (n-2)\delta}{1 + (n-2)(1-2\varepsilon)} \frac{c}{b\delta}.
    \label{Eq:IR-condition-2}
\end{equation}
\vspace{0.5em}

We also compare the maximum level of generosity that a fully cooperating
population can adopt at equilibrium when using indirect reciprocity versus the
other modes of reciprocity. When observational errors are sufficiently low, or
the benefit-to-cost ratio is sufficiently high (i.e., when \(b/c >
1/(1-2\varepsilon)\)), a population cooperating via indirect reciprocity can be
more generous at equilibrium than one adopting generalized reciprocity (and
\emph{vice versa}). Moreover, when observational errors are sufficiently low
or the pairwise continuation probability is sufficiently high, a population
cooperating via indirect reciprocity can be more generous at equilibrium than
one using direct reciprocity. In particular, when \(\delta < 1-2\varepsilon\).

The differences in the maximum level of generosity that can be sustained at
equilibrium across the modes of reciprocity offer insight into how each reciprocity promotes
cooperation. To this end, we use the concept of a \textit{cooperation
rewarding zone} from the literature on the repeated prisoner's
dilemma~\citep{Sigmund:book:2010}. For a fixed mode of reciprocity, the
cooperation rewarding zone refers to the set of all strategies that, when
adopted uniformly by a population, make always-cooperate the best response for
any individual (see \Methods{} for details). In
Fig.~\ref{fig:equilibrium}C, we annotate these sets for all three modes of
reciprocity for the infinitely repeated game with no errors (\(\delta = 1,
\varepsilon = 0\)). We observe that the sizes of the cooperation rewarding zones
depend on the mode of reciprocity. In particular, for generalized reciprocity,
the zone is remarkably small. Furthermore, for generalized reciprocity, the size
of the cooperation rewarding zone decreases sharply with an increase in
population size (Fig.~\ref{fig:equilibrium}D), indicating that cooperation is
less likely to be sustained when large populations adopt generalized
reciprocity. In contrast, for direct and indirect reciprocity, the sizes of the
cooperation rewarding zones remain relatively constant across different
population sizes.

Beyond pure reciprocity modes, our equilibrium analysis predicts cooperative
equilibria when individuals use all three forms of reciprocity simultaneously
(see Proposition 1 in the \SupplementaryInformation). We perform numerical
computations to demonstrate this result in Fig.~\ref{fig:equilibrium}B. We observe
that more mixed cooperative Nash equilibria exist when pairwise interactions are
longer and observation errors are unlikely.  Furthermore, regardless of
model parameters,  generosity decreases as more generalized reciprocity is
used.

\subsection*{The evolution of cooperation when the mode of reciprocity is fixed}

The equilibrium analysis has shown that strategies capable of sustaining a
cooperative Nash equilibrium exist for all three modes of reciprocity.
However, it remains unclear whether
these strategies would evolve if individuals can adapt their strategies over
time through social learning.
To explore this, we conduct evolutionary simulations where individuals update
their strategies by imitating others in the population. Additionally, occasional
mutations introduce new strategies into the population.
We first consider simulations where everyone in the population is restricted to
the same mode of reciprocity (i.e., pure direct, indirect, or generalized
reciprocity). As a result, players are only able to evolve the strategy
components \(y\), \(p\), and \(q\). Furthermore, we assume that mutations are
relatively rare, leading to mostly homogeneous populations where all players
adopt the same strategy. When a mutant appears, it either becomes fixed in the
population (becoming the new resident) or goes extinct. This scenario is known
as the limit of rare exploration (or the limit of rare
mutation)~\citep{Imhof:PRSB:2010, baek2016comparing}. For a detailed description
of the evolutionary process, see \Methods.

\begin{figure}
    \centering
    \includegraphics[width = \textwidth]{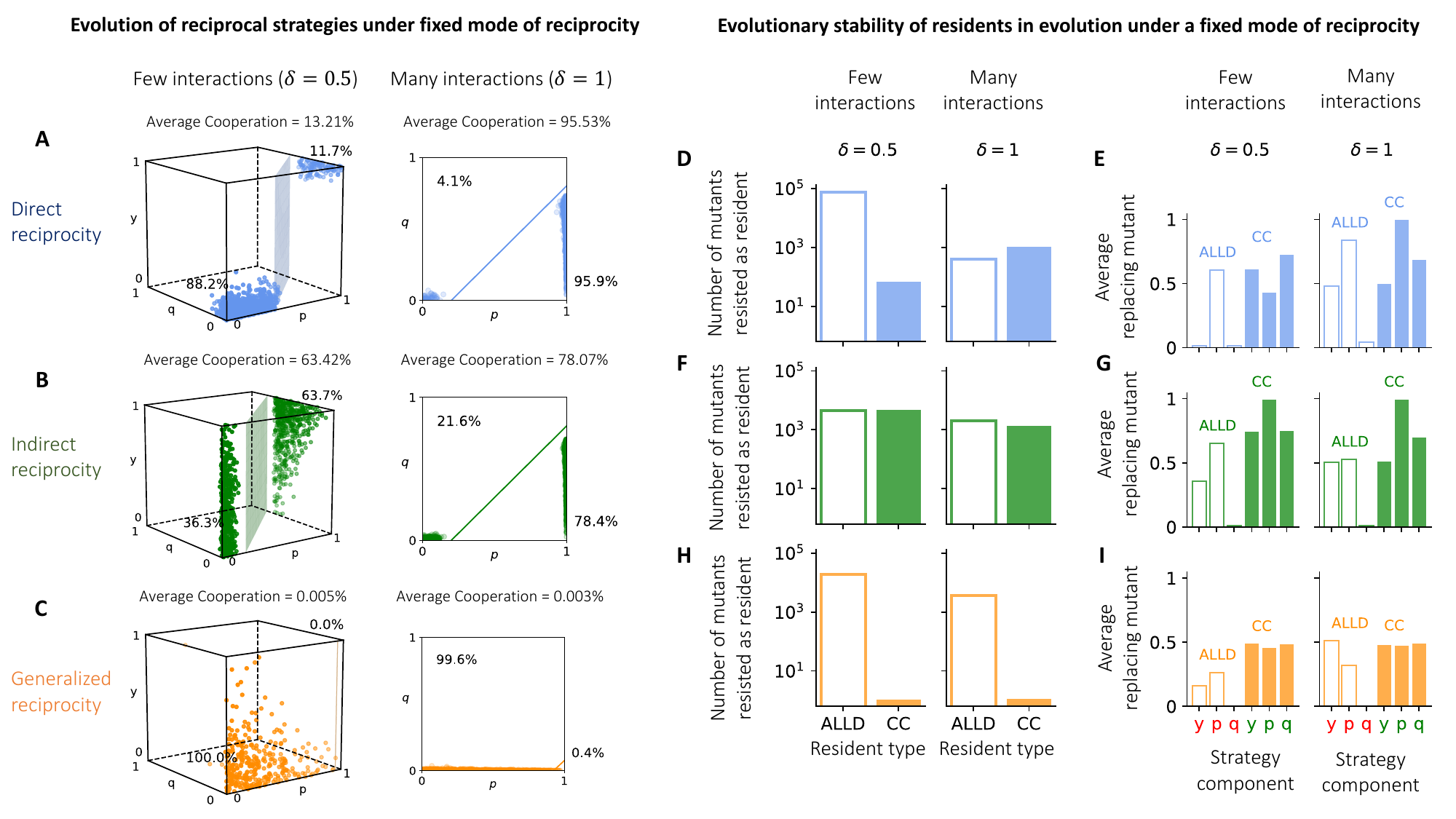}
    \caption{\textbf{Evolutionary dynamics at rare exploration under a single mode of pure reciprocity.} We perform evolutionary simulations in the limit of rare exploration wherein individuals were restricted to use either pure direct reciprocity (A,  D,  E), pure indirect reciprocity (B,  F,  G) or pure generalized reciprocity (C,  H,  I). In the evolutionary process, individuals were free to choose the strategy elements $y$, $p$ and $q$ from the unit cube. \textbf{(A),  (B),  (C):} In these plots, the dots indicate the most successful residents in terms of resisting invasion from mutants. In the case of direct and indirect reciprocity, the most successful residents form two major clusters in the strategy space - they are either in the vicinity of the strategy that always defects, $y \approx p \approx 0$, or in the vicinity of the strategy that cooperates conditionally: $p \approx 1$. While using generalized reciprocity, successful strategies almost always defect $y \approx q \approx 0$.
    The percentages represent the time spent by the whole evolutionary process on each side of the solid plane (or line) which contains all equalizer strategies. At $\delta = 1$, the effect of cooperation in the first round $y$, is negligible and therefore the strategy space is sufficiently represented by a square instead of a cube. \textbf{(D),  (F),  (H):}  For theses plots, we conduct simulations to study the evolutionary robustness of cooperation and defection under each mode of reciprocity (for the exact details of the strategies \textrm{ALLD} type and \textrm{CC} type see \Methods). Effectively, we measure time with the number of mutants the residents resist before being invaded by a mutant. \textbf{(E),  (G),  (I):} In here, we plot the strategy components of average mutant that successfully invade the \textrm{ALLD} type and \textrm{CC} type residents in evolutionary simulations. Parameters: $n = 50, b = 5, c = 1, \beta = 10, \varepsilon = 0$ and $\delta \in \{0.5,1\}$.}
    \label{fig:single-reciprocity-evolution}
\end{figure}

In simulations where players have very few interactions (\(\delta = 0.5\)),
cooperative Nash strategies are adopted by the population only in the case of
indirect reciprocity. For the other two modes of reciprocity, the population
predominantly adopts defecting strategies. However, as the number of
interactions increases (\(\delta = 0.99\)), direct reciprocity leads to a
predominantly cooperative population, surpassing the cooperation rate of
indirect reciprocity. In contrast, generalized reciprocity continues to result
in defecting strategies. Thus, consistent with earlier models~\citep{nowak2007upstream,
Pfeiffer:PRSB:2005}, we find that generalized reciprocity alone is insufficient
to produce cooperation in large populations. Further simulations (summarized in
Fig.  S2) confirm that generalized reciprocity is unable to stabilize
cooperation across varying population sizes, selection strengths, and
benefit-to-cost ratios.

To further clarify these results, we compare the evolutionary robustness of
cooperative and defecting strategies for each type of reciprocity. We focus on
two strategies: one that almost always reciprocates cooperation (CC) and one
that almost always defects (ALLD). Specifically, the ALLD strategy is defined as
\((0.01, 0.01, 0.01, \lambda, \gamma)\), while the CC strategy is defined as
\((0.99, 0.99, 0.5, \lambda, \gamma)\). 
We consider a homogeneous resident population that adopts either the CC or ALLD
strategy. We then simulate the evolutionary process by repeatedly introducing
random mutant strategies and record how many mutants, on average, are needed
until one successfully replaces the resident. Additionally, we record the
properties of the mutant strategies that successfully invade these populations.
The results are summarized in Fig.~\ref{fig:single-reciprocity-evolution}D-I.

For indirect reciprocity, the robustness of cooperative and defective strategies
is similar, regardless of the number of interactions. \textrm{ALLD} is always
replaced by more cooperative mutants, while \textrm{CC} is replaced by Generous
TFT like strategies. In contrast, for direct reciprocity, the \textrm{CC} strategy
is more stable than \textrm{ALLD} when the number of interactions is
high, but the opposite is true when the number of interactions is low. This
explains why the population adopts defecting strategies for small \(\delta\).
When players rely solely on generalized reciprocity, the \textrm{CC} strategy is
highly unstable, regardless of game length. In both short and long games,
\textrm{CC} is quickly replaced by almost any mutant due to its excessive
generosity toward defectors (Fig.~\ref{fig:single-reciprocity-evolution}I). In
contrast, \textrm{ALLD} remains stable under generalized reciprocity.

\subsection*{The co-evolution of direct, indirect and generalized reciprocity}

The analysis in the previous section highlights when cooperation is likely to
evolve when individuals rely on a single mode of reciprocity. However,
reciprocal strategies themselves are subject to selection. We now explore a more
general scenario where individuals can evolve their strategy elements \(y\),
\(p\), and \(q\), while also choosing between one of the three pure modes of
reciprocity. 
We perform evolutionary simulations in the limit of low mutations
and compare the outcomes of our unified model with those of Schmid et al. to
understand the impact of introducing generalized reciprocity. We refer to
Schmid's model as the ``reduced model'' and ours as the ``extended model''. 
In the reduced model, the population primarily relies on indirect reciprocity,
except when observation errors are high, making third-party observations less
reliable.  In the extended model,  however,  the population predominantly adopts generalized 
reciprocity, which leads to lower cooperation rates.  Across most of the parameter space,  by varying \(\delta\) (the
expected number of interactions) and \(\varepsilon\) (the probability of
errors), we observe a higher cooperation rate in the reduced model in comparison to the extended model.

\begin{figure}[t!]
    \centering
    \includegraphics[width = \textwidth]{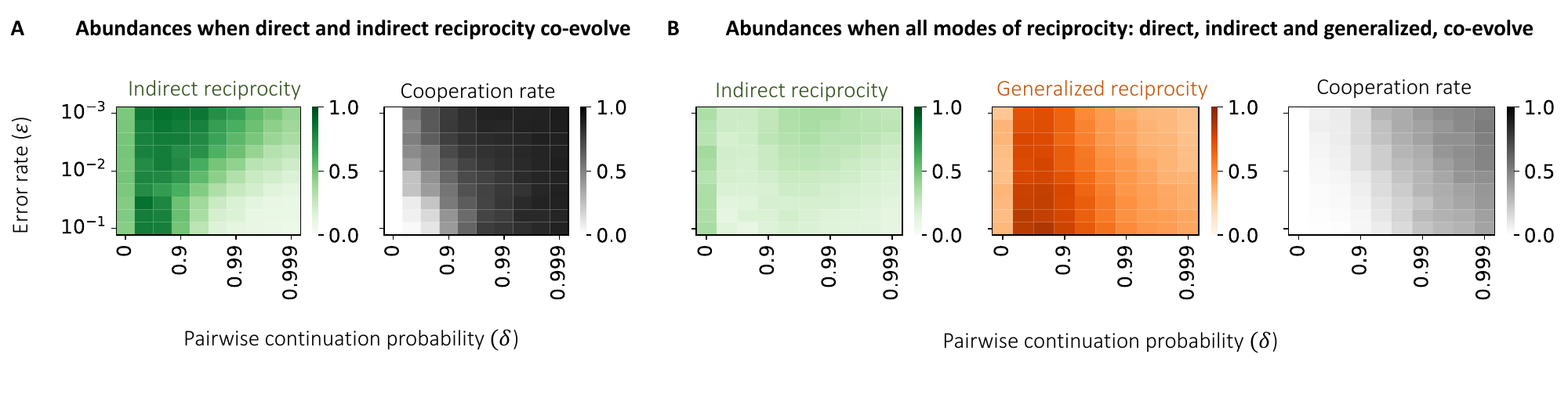}
    \caption{\textbf{A comparison between evolutionary simulations with and without generalized reciprocity demonstrates its effect on the evolution of cooperation.} In the evolutionary simulations that we conduct for this figure, individuals can choose to adopt different modes of reciprocity. In \textbf{(A)}, individuals are restricted to adopt either direct reciprocity or indirect reciprocity while in \textbf{(B)}, individuals can choose between the three modes of reciprocity. The evolutionary process is conducted for rare exploration. See \Methods{} for more details on the evolutionary simulations. We present the time average of the abundance of each mode of reciprocity and report how abundant cooperation is in the evolving population. We present these averages as we vary the parameters of the game, the pairwise continuation probability $\delta$ and the observation error rate, $\varepsilon$. All other parameters of the game and the evolutionary process are the same as in Fig.~\ref{fig:single-reciprocity-evolution}} 
    \label{fig:evolution-with-and-without-GR}
\end{figure}

To understand why the inclusion of generalized reciprocity undermines the
evolution of cooperation in the limit where individuals rarely explore, we
closely examine one particular simulation (i.e., where \(\delta = 0.999\) and
\(\varepsilon = 10^{-3}\)). For these parameters, we observe that while in the
reduced setup cooperation is abundant via direct reciprocity, in the extended
setup, individuals adopt direct (38\%), indirect (32\%), and generalized (30\%)
reciprocity equally often (see Fig.  S4A). In
particular, they cooperate rarely when adopting generalized reciprocity. By
studying the evolutionary transitions in this example, we identify that the key
mechanism behind the breakdown of cooperation in the extended setup is neutral
selection (see Fig.  S4B). In the extended setup, populations that
cooperate via direct reciprocity are neutrally invaded by mutants that also
cooperate strongly, but via generalized reciprocity. After invasion, these
cooperative pay-it-forward residents are themselves susceptible to invasion by
defectors. Once defection becomes prevalent, generalized reciprocity is strongly
favored. Afterward, transitions to any mode of reciprocal cooperation are
unlikely.

\begin{figure}[t!]
    \centering
    \includegraphics[width = \textwidth]{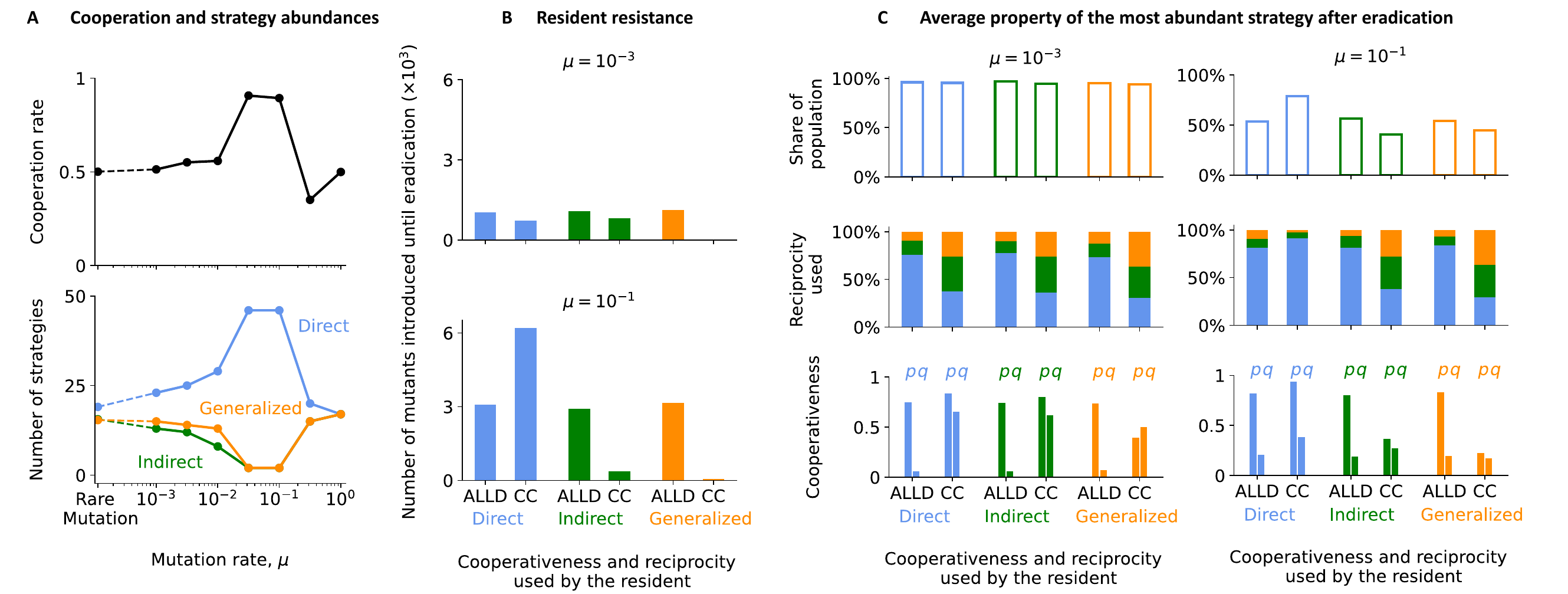}
    \caption{\textbf{The co-evolution of reciprocal strategies, when exploration
    is frequent, leads to the emergence of stable and abundant cooperation.} We
    conduct evolutionary simulations in which individuals frequently adopt
    random strategies thereby leading to mixed populations.
    \textbf{(A),} For each simulation we perform, we fix
    the rate at which individuals adopt a random strategy, $\mu$ (also called
    the mutation or the exploration rate). From each simulation, we report the
    average cooperation rate and the average composition of the population, with
    respect to the reciprocal strategies that individuals adopt. At intermediate
    values of $\mu$ we observe that cooperative individuals who adopt direct
    reciprocity dominate the evolving population. \textbf{(B),} We conduct an
    invasion experiment to study how resistant reciprocal strategies are to the
    process of mutation and selection. To this end, we take six different
    resident populations, two for each mode of reciprocity -- \textrm{ALLD}
    and \textrm{CC} (see \Methods{} for exact details) and report
    the average number of mutations necessary to completely eradicate the
    incumbent population. We conduct these simulations for two values of $\mu$:
    $10^{-3}$ and $10^{-1}$. When exploration is intermediate ($\mu = 10^{-1}$),
    the cooperative direct reciprocity population displays strong resistance.
    \textbf{(C)}, From the simulations we perform to study the invasion
    resistance, we report the average property of the most abundant strategy
    after the eradication is complete. We report the share of the population
    that the most abundant strategy occupies on average (top row), the estimated
    likelihood of the reciprocal strategy that it uses (middle row) and the
    average values of $p$ and $q$ of its strategy (bottom row). Parameters:
    $\delta = 0.999$, $\varepsilon = 0$, $b,c$ and $\beta$ are kept the same from
    Fig.~\ref{fig:single-reciprocity-evolution}}.
    \label{fig:frequent-mutation}
\end{figure}

\subsection*{Co-evolution of reciprocities when individuals frequently explore new strategies}

In the limit where individuals rarely explore new strategies, our analysis
predicts weak reciprocal cooperation. To study the role that exploration plays
in the evolution of reciprocal cooperation, we perform evolutionary simulations
in which individuals randomly adopt arbitrary strategies at a fixed rate, $\mu$
(Fig. \ref{fig:frequent-mutation}). In these simulations, we study outcomes
for long games ($\delta = 0.999$). Unlike simulations for rare exploration, here
populations are often composed of more than two strategies. For a detailed
description of this evolutionary process see \Methods.

Our analysis
identifies that an intermediate rate exploration has a positive effect on the
evolution of cooperation, even in the presence of generalized reciprocity. When
the rate of exploration is neither too high nor too low, individuals strictly
prefer to cooperate via direct reciprocity. In contrast, when
rates of exploration are either too high or too low, individuals try all modes
of reciprocity about equally but cooperation rates drop to approximately 50\%
(Fig.~\ref{fig:frequent-mutation}A).

We perform additional simulations to
examine the reason why cooperation via direct reciprocity dominates at
intermediate rates of exploration (Fig.~\ref{fig:frequent-mutation}B,C). We
conduct evolutionary simulations starting with a unique homogenous resident
population each time and study (i) the average time it takes to eradicate the
resident strategy completely and (ii) the average property of the most abundant
strategy after the eradication is complete. We observe that when individuals
explore at an intermediate rate ($\mu = 10^{-1}$), populations which use direct
reciprocity and cooperate highly take the longest to eradicate (on average).
Additionally, when they are indeed eradicated, the resulting population is
usually abundant with individuals who also use direct reciprocity to cooperate.
In contrast, other resident populations are relatively less stable to this
process of exploration and learning. Moreover, if the initial resident
population cooperates lowly, they are usually replaced by a population that is
abundant with individuals who cooperate via direct reciprocity. At intermediate
rates of exploration, cooperation through direct reciprocity exhibits greater
stability and is more prone to positive selection when contrasted with other
kinds of reciprocal behaviour.

\section*{DISCUSSION}

In repeated interactions, humans often use past outcomes to guide current
decisions~\citep{Molleman:PRSB:2016,Milinski:PRSB:2001,rockenbach2006efficient,Okada:SciRep:2018,Sommerfeld:PNAS:2007}.
They may recall how others have treated them (generalized reciprocity), their
own past interactions with the same person (direct reciprocity), or their
knowledge of how the person treated others (indirect reciprocity). Despite the
different ways humans reciprocate, it remains unclear when one form of
reciprocity is preferred over another. A recent study~\citep{Schmid:NHB:2021}
explored this question using a model limited to direct and indirect reciprocity.
Here, we extend this work by including generalized reciprocity. We derive a
method to compute payoffs when players combine all three strategies and
characterize the resulting Nash equilibria, providing explicit conditions for
each form of reciprocity to sustain cooperation.

We find that generalized reciprocity alone can stabilize cooperation in
populations expecting frequent interactions. However, generalized reciprocity
often undermines cooperation when players imitate others, leading to widespread
defection (Fig.~\ref{fig:evolution-with-and-without-GR}). Evolutionary
simulations show that populations reliant on generalized reciprocity are
vulnerable to defection, ultimately collapsing into non-cooperative behavior
(Fig.  S4). Stable cooperation is only maintained when players explore
new strategies at a steady rate. This observation resonates with a recent
finding which also reports a strong correspondence between exploration and
cooperation via direct reciprocity \citep{tkadlec2023mutation}. 

Our results on the limits of generalized reciprocity
(Fig.~\ref{fig:single-reciprocity-evolution}C) agree with previous
models~\citep{nowak2007upstream,rankin2009assortment,hamilton2005contingent,Pfeiffer:PRSB:2005}.
Cooperation through generalized reciprocity alone rarely emerges in random
interactions without additional mechanisms like
assortment~\citep{rankin2009assortment}, group
movement~\citep{hamilton2005contingent}, or immediate reciprocal
payback~\citep{nowak2007upstream}. 

However, our view on the emergence of generalized reciprocity diverges from
earlier work. While prior models explain why cooperation can evolve through
generalized reciprocity with supporting mechanisms, they do not explain why
generalized reciprocity evolves in the first place. Our results show that when
competing forms of reciprocity are allowed, generalized reciprocity is rarely
favored
(Figs.~\ref{fig:evolution-with-and-without-GR},~\ref{fig:frequent-mutation}).
Even if additional mechanisms favor it, it is unclear why direct or indirect
reciprocity wouldn't be equally selected, as seen in network-based models of
direct reciprocity~\citep{ohtsuki2007direct,Pacheco:JTB:2008}.

Nowak and Roch~\citep{nowak2007upstream} suggest that direct reciprocity can
promote generalized reciprocity, but our results show the opposite. Players tend
to avoid generalized reciprocity, favoring defection when direct and generalized
reciprocity compete (Fig.  S5). We attribute this discrepancy to
differences in how direct reciprocity is modeled. In our study, individuals
choose between their general experience and specific interactions, whereas in
Nowak and Roch's model, direct reciprocity only occurs after receiving
cooperation, functioning more like a reward system~\citep{sigmund2001reward}.

Sasaki et al.~\citep{sasaki2023evolution} propose a 
strategy that simultaneously applies generalized and indirect reciprocity while
interacting in an infinitely large population. When a player uses this strategy,
they invariably cooperate after receiving a cooperation. In case they do not
receive a cooperation, they cooperate based on the reputation of their
beneficiary (which could be assigned with a simple scoring rule). The authors
refer to this conditional strategy as integrated reciprocity. They observe that
integrated reciprocity is able to stably coexist with unconditional defectors in
games that are infinitely long. Even though defectors take advantage of
individuals who use integrated reciprocity, the strategy does not die out
because rewards for being good' (from other integrated reciprocators) balance
out the exploitation. This result indicates that a coexistence between
cooperative generalized reciprocity and cooperative indirect reciprocity may
stably emerge in our model too. However, we do not detect any such stable
coexistence in our evolutionary simulations (Fig.~\ref{fig:frequent-mutation}A).
We argue that this is because players can sometimes adopt random strategies in
our evolutionary process, and more importantly, these strategies can utilise
direct reciprocity (a strategy not included in Sasaki et al.). When games are
long, cooperation via direct reciprocity is very likely to spread in any
population that uses a combination of generalized and indirect reciprocity. 

Generalized reciprocity is often argued to evolve due to its lower cognitive
demands~\citep{barta2011cooperation,Pfeiffer:PRSB:2005,Rutte:PLoSB:2007},
compared to direct or indirect
reciprocity~\citep{Milinski:PNAS:1998,duffy2014cognitive,Santos:Nature:2018}. We
do not account for cognitive costs in our model. However, we note that it is
plausible that once (in)direct reciprocity is considered costly in this model,
the emergence of defection via generalized reciprocity becomes even more likely.
Therefore, it is still not clear how exactly individuals with limited memory
would learn to cooperate.

Lastly, we address the apparent conflict between our findings and empirical
studies that show humans engage in generalized
reciprocity~\citep{baker2014paying,melamed2020robustness,herne2013experimental,simpson2018roots,stanca2009measuring}.
We propose that pay-it-forward cooperation may have evolved independently of
direct or indirect reciprocity. Our model serves as a counterfactual where all
forms of reciprocity co-evolve to maximize payoffs. In this scenario, we would
expect minimal reciprocal cooperation or generalized reciprocity to be
undermined by direct reciprocity, as observed. This suggests that generalized
reciprocity may have emerged via mechanisms beyond material payoffs. Supporting
this view, a neurological study~\citep{watanabe2014two} shows that
pay-it-forward cooperation is driven by empathy, while reputation-based
reciprocity engages self-centered cognition, hinting that generalized
reciprocity may evolve through non-material motivations.

\section*{MATERIALS AND METHODS}

\subsection*{Cooperation rewarding zone}

We consider that a population of size $n$ individuals where everyone uses the
strategy $\sigma := (y,p,q,\lambda,\gamma)$. Then, the resident strategy
$\sigma$ is said to belong in the cooperation rewarding zone if a single
individual can best-respond by unilaterally switching to a strategy that
cooperates in every round. If a single individual's best-response is to switch
to always-defect, then $\sigma$ is said to belong in the defection rewarding
zone. If $\sigma$ belongs in neither the cooperation nor the defection rewarding
zone, it belongs in the set of equalizer strategies. In this case, any strategy
is a best-response for a single unilaterally deviating individual. These three
sets are disjoint and their union is the entire strategy space, $[0,1]^5$. If
the resident population uses pure direct, pure indirect or pure generalized
reciprocity, the cooperation rewarding zone is the subset of the strategy space
in which the strategy components $p$ and $q$ satisfy $p-q > 1 - q^*$. Here $q^*$
is either $q^*_\mathrm{D}$, $q^*_\mathrm{I}$ or $q^*_\mathrm{G}$, from Eqs.
(\ref{Eq:DR-condition-2}), (\ref{Eq:IR-condition-2}) and
(\ref{Eq:GR-condition-2}) respectively, depending on the mode of reciprocity
employed by the resident population. If $p-q < 1-q^*$, the strategy belongs in
the defection-rewarding zone. If $p-q = 1-q^*$, the strategy is an equalizer
strategy.

\subsection*{Evolutionary dynamics at any arbitrary mutation rate}

At each time step of our evolutionary process, a focal individual is picked
uniformly at random from the population to update their strategy. With
probability $\mu$, the focal player picks a strategy uniformly at random from
the set of permissible strategies, and adopts it. With probability
$1-\mu$, the focal player picks another individual from the population,
uniformly at randon (the role model), and adopts their current strategy with
probability

\begin{equation}
    \rho_{\mathrm{F} \to \mathrm{R}} = \displaystyle \frac{1}{1 + e^{\beta(\pi_\mathrm{F} - \pi_\mathrm{R})}}.
    \label{Eq:focal-role-model-imitation}
\end{equation}

Here $\pi_\mathrm{F}$ and $\pi_\mathrm{R}$ are the expected payoffs of the focal
and the role model individuals in the current population. The parameter $\beta$
is called the selection strength. It reflects the importance that players assign
to payoff differences while imitating a new strategy. The computation of
expected payoffs of individual strategies at any arbitrary population
composition is discussed in \SupplementaryInformation.

\subsection*{Evolutionary dynamics at rare exploration}
The evolutionary process in the limit of rare exploration ($\mu \to 0$) assumes
that a single mutant in a homogenous population either undergoes fixation or
goes extinct before the next mutant arises. In other words, the expected
time-interval between the arrival of two mutants is sufficiently larger than the
expected time of fixation (or extinction) of a single mutant. This assumption
largely simplifies the evolutionary process. With this, one can assume that the
population is almost always composed by a single strategy. The entire history of the
evolutionary process can be summarized by recording a) the sequence of resident
strategies that occupied the population and b) the length of evolutionary time
that each resident strategy occupied the population (in other words, the number
of mutants they resisted an invasion from). The outcome of each simulation is
thus a sequence $\{ \sigma_1, \sigma_2, \sigma_3, ...., \sigma_l, ....\}$ of
resident strategies, together with a list $\{t_1, t_2, t_3, ...., t_l, ....\}$
that denotes how long each resident was present.

For the mutant fixation dynamics, we use the imitation process described in the previous section. In this dynamical process, at every time step, a pair of focal and role model players are picked uniformly at random from the population. The focal player compares their current expected payoff $\pi_\mathrm{F}$, with the current expected payoff of the role model, $\pi_\mathrm{R}$, and switches to the role model's strategy with probability, $\rho_{\mathrm{F} \to \mathrm{R}}$ given in Eq. (\ref{Eq:focal-role-model-imitation}). The probability that a single mutant $\sigma_M$, attains fixation when introduced in a homogenous $\sigma_R$-playing population is computed exactly using the formula derived in Traulsen et al.~\citep{traulsen2006stochastic},
\begin{equation}
    f(\sigma_M, \sigma_R) =  \frac{1}{1 +\displaystyle \sum_{m=1}^{n-1} \prod_{k=1}^m e^{-\beta(\pi_M(k) - \pi_R(k))}}.
    \label{Eq:fixation-prob-imitation}
\end{equation}

Here, $\pi_M(k)$ and $\pi_R(k)$ are respectively the expected payoffs of the mutant and the resident strategy when $k$ individuals play $\sigma_M$ and $n-k$ individuals play $\sigma_R$. We compute this probability in our simulations to determine whether a newly introduced mutant takes over the population.

\subsection*{Methods for figures}
\textbf{Methods for Fig.~\ref{fig:equilibrium}}, In
Fig~\ref{fig:equilibrium}A, we consider a population of 50 individuals and plot
the regions in the $\delta$-$\varepsilon$ plane corresponding to regions where
strategies employing pure direct, pure indirect and pure generalized reciprocity
are able to constitute a cooperative Nash equilibria. The regions are marked by
the two boundaries: $\delta = \delta_{\mathrm{D}} = \delta_{\mathrm{G}}$  and
$\delta = \delta_{\mathrm{I}}$. The expressions for $\delta_{\mathrm{D}}$,
$\delta_{\mathrm{I}}$ and $\delta_{\mathrm{G}}$ are given in
Eq.~(\ref{Eq:DR-condition-1}) and Eq.~(\ref{Eq:IR-condition-1}).

In Fig~\ref{fig:equilibrium}B, we use the simplex representation of reciprocal
strategies to denote whether cooperative strategies of the form $(1,1,q,\lambda,
\gamma)$ are able to constitute a Nash equilibria. Each point inside the
triangular simplex $\mathcal{S}_{\mathrm{DIG}}(n)$ represents a tuple of three
complimentary probabilities $(\alpha_\mathrm{D}, \alpha_\mathrm{I},
\alpha_\mathrm{G})$ and corresponds to a unique pair $(\lambda, \gamma) \in
[0,1]^2$. The probabilities, $\alpha_\mathrm{D}, \alpha_\mathrm{I}$ and
$\alpha_\mathrm{G}$ represent the effective likelihood with which individuals in
a homogenous population use direct, indirect and generalized reciprocity
respectively. For a precise mathematical
definition of these probabilities, see the section on `effective likelihood of
reciprocity' in \SupplementaryInformation. The three corners of the triangular
simplex correspond to three modes of pure reciprocity. Pure direct ($\lambda =
0$, $\gamma = 0$), pure indirect ($\lambda = 1$, $\gamma = 0$) and pure
generalized ($\lambda = 0$, $\gamma = 1$) reciprocity correspond to the corners
of the simplex: $(1,0,0)$, $(\alpha_\mathrm{D}^{min},\alpha_\mathrm{I}^{max},0)$
and $(\alpha_\mathrm{D}^{min},0,\alpha_\mathrm{G}^{max})$ respectively. These
corners are marked by \textrm{DR}, \textrm{IR} and \textrm{GR} in Fig.
~\ref{fig:equilibrium}B. The relative distance of a point from the corners of
the simplex represents how often individuals use different kinds of reciprocity
to update reputations in the corresponding homogenous population. For several
discrete points inside the simplex, we numerically check whether there is a $q
\in [0,1)$ such that $(1,1,q,\lambda,\gamma)$ is a Nash equilibrium strategy.
The black regions on the simplices correspond to populations which cannot
constitute a fully-cooperative Nash equilibria. When cooperative Nash equilibria
exist, we plot the highest value of $q$ (generosity towards defection), $q^*$,
that sustains full cooperation in equilibrium.

In Fig~\ref{fig:equilibrium}C we consider a a game with $n=50$ players that is infinitely long ($\delta = 1$) and in which players do not make any observational errors ($\varepsilon = 0$). For each mode of reciprocity, we annotate regions in the strategy space $(p,q)$ in which every strategy rewards cooperation (or equivalently, falls in the cooperation rewarding zone. See the section on cooperation rewarding zone for more details). The cooperation rewarding zone for a fixed mode of reciprocity is a triangle subtended by the points $(1-q^*,0), (1,q^*)$ and $(1,0)$ in the strategy space $(p,q)$ where $q^*$ is $q^*_\mathrm{D}$, $q^*_\mathrm{I}$ or $q^*_\mathrm{G}$ depending on the mode of reciprocity. In Fig~\ref{fig:equilibrium}D we plot the sizes of these triangles, relative to the whole strategy space, for a game in which players make occasional errors ($\varepsilon = 0.01$), as we vary the population size $n$.  


\noindent \textbf{Methods for Fig.~\ref{fig:single-reciprocity-evolution}},
For this figure, we conduct evolutionary simulations at rare exploration
(mutations). For panels A, D and E, we constrain the permissible
set of strategies to $\lambda = \gamma = 0$ (pure direct reciprocity).
Similarly, for panels B,  F and G we constrain it to $\lambda = 1, \gamma = 0$
(pure indirect reciprocity) and for panels C,  H and I to $\lambda = 0, \gamma =
1$ (pure generalized reciprocity). In the simulations for panels A,  B,  C,
$10^7$ mutants attempted to invade resident populations over the course of the
resident replacement process. The simulations began with a random strategy
$(y,p,q,\lambda,\gamma)$ where $(y,p,q)$ was uniform randomly drawn from the set
$[0,1]^3$ and $\lambda, \gamma$ were set to either $(0,0)$, $(1,0)$ or $(0,1)$
depending on the mode of reciprocity being studied. Subsequent mutants were
drawn analogously. Over the course of a single simulation, we kept the
parameters fixed at $\varepsilon = 0, c = 1, b = 5, \beta = 10$ and $\delta \in
\{0.5,1\}$.

We use colored dots to denote the 1000 residents that were most successful in resisting invasions from mutants. For the simulation at $\delta = 0.5$ in panel C,  we show only 329 (and not 1000) residents, because the process only generated 329 different residents. In every plot in panels A,  B and C, the solid plane (or the solid line) in the strategy space $(y,p,q)$ denotes the subspace that contains the equalizer strategies~\citep{Press:PNAS:2012}. As residents, the equalizer strategies, set the payoff of any arbitrary mutant to a fixed value. As a result, these strategies also constitute Nash equilibria (see Theorem~1
in \SupplementaryInformation). These strategies satisfy $p-q = 1- q^*$. The
value of $q^*$ depends on the mode of pure reciprocity: it is $q^*_{D}$ for pure
direct reciprocity, $q^*_I$ for pure indirect reciprocity and $q^*_G$ for pure
generalized reciprocity. See
Eqs.~(\ref{Eq:DR-condition-2}),~(\ref{Eq:IR-condition-2})
and,~(\ref{Eq:GR-condition-2}).

For panels D-I, we conduct multiple independent evolutionary simulations that begin with either an \textrm{ALLD} type or a \textrm{CC} type strategy. Here, the \textrm{ALLD} type strategy is $(0.01, 0.01, 0.01, \lambda, \gamma)$ while the \textrm{CC} type is $(0.99,0.99,0.5,\lambda,\gamma)$. Each simulation stops when the initial population is taken over by a mutant. We record the number of mutants the initial population resisted before they are invaded. The average of these numbers across 1000 independent simulations are plotted in panels d,f and h. We also record the strategy that successfully invades these two residents in every evolutionary simulation. The average strategy that successfully invades, across 1000 independent simulations, are shown in panels e,g and i. 

\noindent \textbf{Methods for Fig.~\ref{fig:evolution-with-and-without-GR}}, In this figure we compare results from two types of evolutionary simulations. In the first type, strategies that individuals can adopt are restricted to pure direct and pure indirect reciprocity (panel a). That is, they adopt from the set
\begin{equation}
    \Sigma_{\mathrm{DI}} := \Big\{(y,p,q,\lambda,0) \; \Big|  \; (y,p,q) \in [0,1]^3 \land \lambda \in \{0,1\} \Big\}.
\end{equation}

In the second type, they are free to choose between all the three types of reciprocity (panels b). That is, they choose from, 
\begin{equation}
    \Sigma_{\mathrm{DIG}} := \Big\{(y,p,q,\lambda,\gamma) \; \Big| \; (y,p,q) \in [0,1]^3 \land (\lambda,\gamma) \in \{(0,0),(1,0),(0,1)\} \Big\}
    \label{Eq:Sigma_DIG}
\end{equation}

All simulations are conducted in the limit of rare exploration. The parameters $\delta$, $\varepsilon$, $b$, $c$ and $\beta$ remain fixed over the course of a single simulation. Each simulation is ran until $10^7$ mutants. We calculate average properties of the resident strategies that occupy the population in these simulations. We plot the time average of the likelihood that individuals use indirect reciprocity and generalized reciprocity to update reputations (for more details on these computations see the section on effective likelihood of reciprocity in \SupplementaryInformation). We also compute the likelihood that individuals cooperate in pairwise interactions. The cooperation rate in each homogenous population is calculated using the result in Proposition~1 in \SupplementaryInformation. In Fig.  S3, we plot the average time series of these likelihoods and cooperation rate for specific values of the parameters $\delta$ and $\varepsilon$.

\noindent \textbf{Methods for Fig.~\ref{fig:frequent-mutation}}, For this figure we conduct evolutionary simulations at arbitrary mutation rates. For all simulations we perform, we fix the set of permissible strategies $\Sigma$ to $\Sigma_{\mathrm{DIG}}$ as given by Eq. (\ref{Eq:Sigma_DIG}). All simulations begin with a homogenous population in which everyone adopts the strategy that defects in every round $(0,0,0,\lambda,\gamma)$. For figures in panel \textbf{A}, we ran our simulations for $10^7$ time steps when $\mu \in \{10^-3,10^{-2.5}\}$ and for $10^6$ times steps when $\mu \in \{10^{-2},10^{-1.5},10^{-1},10^{-0.5},1\}$. For each simulation we conduct, we report the expected frequency of cooperative interactions, averaged over all populations that emerge in the process. We also report the relative frequency of the three modes of reciprocity, averaged over entire simulations. For figures in panel \textbf{B}, we still keep $\Sigma = \Sigma_{\mathrm{DIG}}$ but start with different initial populations. We study six initial populations, two for each mode of reciprocity: the \textrm{ALLD} and the \textrm{CC} type (similar to Fig. \ref{fig:single-reciprocity-evolution}). Here the \textrm{ALLD} type is a strategy of the form $(0.01,0.01,0.01,\lambda,\gamma)$ while the \textrm{CC} type is $(0.99,0.99,0.4,\lambda,\gamma)$. For each initial population, we conduct 1000 independent simulations. We continue these simulations until the first time all individuals adopt a different strategy than the one they assumed at the beginning of the simulation. We estimate the time it takes for this event to occur by averaging over the independent simulations. Across simulations, we also compute average composition of the population when this event occurs. 
\\ \\
\noindent
{\bf Acknowledgements.} 
This work was supported by the European Research Council Starting Grant 850529 (E-DIRECT) and by the Max Planck Society. 
 \\ \\
\noindent
{\bf Author contributions.} 
S. P., N. G, and C. H. designed the research; 
S. P. performed the research; 
S. P. wrote the manuscript;
S. P.,  N. G, and C. H. edited the manuscript; \\

\noindent
{\bf Competing interests.} The authors declare no conflict of interest. \\

\noindent
{\bf Data and materials availability.} All code used to carry out the
evolutionary simulations is available in an online repository, and the data used
to produce the manuscript figures have been archived. More details can be found
in~\cite{Pal:Zenodo:2024}.


\bibliographystyle{unsrt}
\bibliography{bibliography.bib}
\end{document}


\maketitle
\onehalfspacing
\section{Model}
\label{Section:model-repeated-games}
\subsection*{Setup of repeated games in a finite population}
We consider a finite population of $n$ individuals (called players). In this population, players only have pairwise interactions with each other. At each (discrete) time point, a single pair interacts in the population. After a pairwise interaction, the next interaction in the population occurs with probability $d$. In an interaction, two players are randomly drawn from the $n$ players to play the donation game. In the donation game, a player can either cooperate, wherein they give a benefit of $b$ to their co-player while incurring a cost, $c< b$, or defect, in which case they do nothing. Their actions are observable to everyone in the population. 
\\

\noindent In our model, players can make errors in observing actions in interactions in which they are not taking part (we call them 3$^{rd}$ party interactions). Each player independently mistakes cooperation for defection (and \emph{vice versa}) with probability $\varepsilon \in [0,1/2]$ in the model. Players do not make an observation error if they are part of the interaction. 
\\

\noindent We define the pairwise continuation probability, $\delta$, as the conditional probability that a pair of players will meet again in the future if they meet in the present. This probability $\delta$, can be derived in terms of the population continuation probability, $d$ and population size, $n$ as~\cite{Schmid:NHB:2021}, \\
\begin{equation}
\delta = \frac{2d}{2d + n(n-1)(1-d)}.
\label{Eq:delta-with-d-n}
\end{equation}

\subsection*{Strategies of players}

In the model, players keep track of others' reputation. To every player in the population, everyone other player is in a binary standing -- they are either \textit{good} or \textit{bad}. A player cooperates with their co-player in an interaction if and only if they are in a \textit{good} standing (otherwise they defect).\\

\noindent  After an interaction, all players have the chance to update reputations of others by using their strategy and the (possibly erroneous) observation from the interaction. The strategy of a player $i$, $\sigma_i$, is a tuple with five elements: $\sigma_i = (y_i,p_i,q_i,\lambda_i,\gamma_i)$. Each element of the strategy is a probability and is therefore in the interval $[0,1]$. We describe the strategy components below:
\begin{enumerate}

\item $y_i$ (initialization): The probability that $i$ assigns a \textit{good} reputation to a player $j$ when a) $i$ has no reputational history of $j$ (i.e., she has never updated $j$'s reputation before) 

\item $p_i$ (reciprocal cooperativeness): The probability that $i$ assigns a \textit{good} reputation to an individual if she observes a cooperation by them in direct interaction.

\item $q_i$ (generosity): The probability that $i$ assigns a \textit{good} reputation to an individual if she observes a defection by them in direct interaction.

\item $\lambda_i$ (imporatance given to third party interactions, or \textit{indirect reciprocity}): The probability that $i$ updates the reputation of player $j$ after an interaction that involved $j$ and $k$. While updating, $i$ assumes that $j$ plays against her what she observes him playing against $k$. That is, she uses the probabilities $p_i$, $q_i$ and the observed action of $j$ against $k$ to update $j$'s reputation. 

\item $\gamma_i$ (generalizing the action of an interaction partner to the whole population or, \textit{generalized reciprocity}): The probability that $i$ updates the reputation of player $j$ after an interaction that involved $i$ and $k$ but not $j$. After her interaction with $k$, player $i$, chooses to update the reputation of all $j \neq k$ in the population independently with probability $\gamma_i$. While updating the reputation of a particular $j$, $i$ assumes that $j$ plays the same action that $k$ played against her in the interaction.

\end{enumerate}

\noindent If player $i$ uses $\lambda_i = 0$ in her strategy, she does not update the reputation of others if she, herself, was not part of the respective interaction. In other words, she only update reputations based on what was done to her (and ignore what others do between each other). If she uses $\gamma_i = 0$, she does not update the reputation of others based on what someone else did to her. We assume players always update the reputation of an individual after directly interacting with them. 

\subsection*{Recursion of reputations in an arbitrary population composition}

We denote with $x_{ij}(t)$ the probability that a player $i$ considers $j$ \textit{good} after $t$ interactions. This probability at time $t+1$, $x_{ij}(t+1)$, can be written using values of $x$ at time $t$ using the following recursive relation,
\begin{equation}
\begin{split}
x_{ij}(t + 1) &= (1 - 2\bar{w} + w)x_{ij}(t)  \\[10pt]
&+ w(x_{ji}(t)p_i + (1 - x_{ji}(t))q_i) \\[10pt]
&+(\bar{w} - w)(1 - \lambda_i)x_{ij}(t) \\[10pt]
&+ w\lambda_i \mathlarger{\sum}_{l \neq i,j} (\bar{\varepsilon}x_{jl}(t) + \varepsilon \bar{x}_{jl}(t))p_i + (\bar{\varepsilon} \bar{x}_{jl}(t) + \varepsilon x_{jl}(t))q_i \\[10pt]
&+(\bar{w} - w)(1 - \gamma_i)x_{ij}(t) \\[10pt]
&+ w \gamma_i \mathlarger{\sum}_{l \neq i,j} x_{li}(t)p_i + \bar{x}_{li}(t)q_i \\[12pt]
\end{split}
\label{Eq:recursion_1}
\end{equation}
Where $w = 2/(n(n-1))$ is the probability that a particular pair of individuals is selected for an interaction and $\bar{w} = 2/n$ is the probability that a particular player is selected in an interaction. We use the shorthand notation $\bar{\varepsilon} := 1 - \varepsilon$ to denote the probability that no error was made while observing 3$^{rd}$ party interactions. We also use the shorthand notation $\bar{x}$ to denote $1-x$. The above recursion covers all the cases in which $i$ may update $j$'s reputation after the interaction in round $t+1$. Each of these cases corresponds to a line in Eq.~(\ref{Eq:recursion_1}). 

\begin{enumerate}
    \item In the first line, we cover the case where neither $i$ nor $j$ is selected for an interaction in time $t+1$. This happens with probability $1 - 2\bar{w} + w$. In this case, $i$ makes no change to $j$'s reputation. 
    \item In the second line, we cover the case where both $i$ and $j$ were selected for an interaction in round~$t$. This happens with probability $w$. In this interaction, $j$ cooperates with probability $x_{ji}(t)$ and defects with with probability $1-x_{ji}(t)$. If $j$ cooperates, he is updated to \textit{good} by $i$ with probability~$p_i$. If he defects, he is updated to \textit{good} with probability $q_i$.
    
    \item In the third line, we cover the case where $j$ was part of an interaction but not $i$ (this happens with probability $\bar{w} - w$), \emph{and} $i$ chooses to ignore $j$'s interaction (with probability $1-\lambda_i$), therefore there is no update to $j$'s reputation in round $t+1$.
    
    \item In the fourth line, we cover all the cases where $j$ interacts with an arbitrary third party $l$ (with probability $w$) and ~$i$ takes this interaction into account (with probability $\lambda_i$). If $j$ appears to cooperate (``appears" because ~$i$'s observation is subject to errors), he is assigned \textit{good} by $i$ with probability ~$p_i$. Otherwise, if they appear to defect, they are assigned a \textit{good} reputation with probability $q_i$.
    
    \item In the fifth line, we cover the case where ~$i$ but not ~$j$ was part of an interaction (with probability ~$\bar{w} - w$) but ~$i$ chose to not update ~$j$'s reputation (with probability ~$1-\gamma_i$). 
    
    \item In the sixth line, we cover all the cases where $i$ has an interaction with third party $l$ (with probability $w$) and she chooses to update $j$'s reputation based on this interaction. If $l$ cooperates with~$i$, then $j$ is assigned a \textit{good} reputation with probability $p_i$ and if $l$ defects, $j$ is assigned a \textit{good} reputation with probability $q_i$.  \\[5pt]
\end{enumerate}

\noindent By denoting $p_i - q_i := r_i$, one can simplify the recursion in Eq.~(\ref{Eq:recursion_1}) as, 

\begin{equation}
\begin{split}
x_{ij}(t + 1) = &(1 - w - (\bar{w} - w)(\lambda_i + \gamma_i))x_{ij}(t)  \\[10pt]
&+ w r_i x_{ji}(t)  \\[10pt]
&+ w\lambda_i r_i (1 - 2 \varepsilon) \sum_{l \neq i,j} x_{jl}(t) \\[10pt]
&+ w \gamma_i r_i \sum_{l \neq i,j} x_{li}(t)  \\[10pt]
&+ wq_i + w \gamma_i q_i (n-2) + w \lambda_i (q_i + \varepsilon r_i) (n-2) \\[10pt]
\end{split}
\label{Eq: recursion_2}
\end{equation}
\clearpage
\section{Computation of expected payoffs}
\label{Section:computation-exact-payoffs-repeated}

\noindent We collect all the probabilities $x$ in a column vector of a size $n(n-1)$, 
\begin{equation}
\xbf(t) := (x_{12}(t),x_{13}(t),\cdots,x_{1n}(t); \cdots ; x_{n1}(t),x_{n2}(t),\cdots,x_{n(n-1)}(t))^\intercal
\label{Eq:vector-x}
\end{equation}
\noindent and define the square matrix $\Mbf := (m_{ij,kl})$ and the column vector $\vbf := (v_{ij})$ as the following, \\[2pt]
\begin{equation}
m_{ij,kl} = \begin{cases}
1 - w - (\bar{w} - w)(\lambda_i + \gamma_i) \quad &\text{ if } i = k, j = l \\[10pt]
wr_i \quad &\text{ if } i = l, j = k \\[10pt]
w\lambda_i r_i (1 - 2 \varepsilon) &\text{ if } j = k, \text{ and } i,j \neq l \\[10pt]
w\gamma_ir_i  &\text{ if } i = l \text{ and } i,j \neq k \\[10pt]
0 &\text{ otherwise } \\[10pt]
\end{cases}
\label{Eq:matrix-M}
\end{equation}
\noindent and, \\ 
\begin{equation}
v_{ij} = wq_i + w \gamma_i q_i (n-2) + w \lambda_i (q_i + \varepsilon r_i) (n-2). \\[10pt]
\label{Eq:vector-v}
\end{equation}
\noindent Then, the recursion in Eq.~(\ref{Eq: recursion_2}) takes the following vector form,
\begin{equation}
\xbf(t+1) = \Mbf \xbf(t) + \vbf
\label{Eq:recursion-vector}
\end{equation}
\noindent where $\xbf(0)$ is just the collection of the first elements, $y$, from players' strategies. That is, $x_{ik}(0) = y_i$ for all $i$ and $k$. We can compute the average probability that players consider each other \textit{good}, $\xbf$, by time discounting the vector $\xbf(t)$ in the following way, 
\begin{equation}
    \xbf := (1-d) \sum_{t = 0}^\infty d^t \xbf(t). 
    \label{Eq:discounted_x}
\end{equation}
\noindent Now, by using the recursive relation in Eq.~(\ref{Eq:recursion-vector}) and Eq~(\ref{Eq:discounted_x}), one can derive the following expression for the average probability vector, $\xbf$,
\begin{equation}
    \xbf = (\mathbb{1}-d\Mbf)^{-1}((1-d)\xbf_0 + d\vbf)
\end{equation}
\noindent where $\xbf_0 = \xbf(0)$. One can interpret elements of $\xbf$ as the probability that players consider each other \textit{good} (and therefore cooperate with others) in a randomly selected round. The expected payoff of player $i$ is therefore,
\begin{equation}
\pi_i = \frac{1}{n-1} \sum_{j \neq i} x_{ji}b - x_{ij}c. \\[10pt]
\label{Eq:payoff-equation}
\end{equation}
\section{The Nash equilibria of the game}
In this section we characterize all Nash equilibria strategies of the game with the form $(y,p,q,\lambda,\gamma)$. The strategy $\sigma := (y,p,q,\lambda,\gamma)$ is a Nash equilibrium strategy if no unilateral deviation by a single player in a homogeneous population playing $\sigma$ leads the deviating player to a strictly better (expected) payoff. In our characterization of Nash strategies, we allow deviating players to deviate from $\sigma$ to \emph{any} arbitrary strategy, even strategies that cannot be represented by a 5-tuple. We first present some auxiliary results.

\begin{Prop} \label{Prop:coop-payoff-homogenous}
(Cooperation and payoffs in homogenous population) Consider a population of size $n$ where every individual plays $(y,p,q,\lambda,\gamma)$ in a game where $\delta < 1$. Let $r := p - q$. Then, the average probability that players cooperate with each other is \\[10pt]
\begin{equation}
x = \frac{(1 - \delta)y + \delta q + \delta (n-2)(q(\lambda + \gamma) + \lambda \varepsilon r)}{1 - \delta r + \delta (n-2)((\lambda + \gamma)(1 - r) + 2 \lambda \varepsilon r)}. \\[10pt]
\label{Eq:coop-rate-homogenous}
\end{equation}
\noindent In this homogenous population each player receives an average payoff, $\pi = (b-c)x$. \\ \\ 
Further, if the game is generic, i.e, $n > 2, \varepsilon > 0$, a population is 
\begin{enumerate}
    \item fully defective ($x = 0$) if and only if, $y = 0$, $q = 0$ and, in addition, $\lambda = 0$ or $p = 0$.
    \item fully cooperative ($x = 1$) if and only if, $y = 1, p =1$ and, in addition, $\lambda = 0$ or $q = 1$. 
\end{enumerate}
If the game is non-generic, i.e., $n = 2$ or $\varepsilon = 0$, a population is,
\begin{enumerate}
    \item fully defective if and only if $y = q = 0$. 
    \item fully cooperative if and only if $y = p = 1$. 
\end{enumerate}
\end{Prop}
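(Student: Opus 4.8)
The plan is to exploit the full permutation symmetry of a homogeneous population. Since every player uses the same $(y,p,q,\lambda,\gamma)$ and the initial reputations satisfy $x_{ij}(0)=y$ for all ordered pairs, the recursion preserves this symmetry, so $x_{ij}(t)=x(t)$ for all $i\neq j$ and all $t$. Substituting this ansatz into the simplified recursion Eq.~(\ref{Eq: recursion_2}), using that each inner sum over $l\neq i,j$ contributes $n-2$ identical terms together with the identity $\bar w-w=w(n-2)$, collapses the $n(n-1)$-dimensional system to a single scalar affine recursion $x(t+1)=A\,x(t)+B$. First I would read off $A$ and $B$ and check that $A$ collects precisely the coefficient $1-w\big[1-r+(n-2)((\lambda+\gamma)(1-r)+2\lambda\varepsilon r)\big]$ while $B=w\big[q+(n-2)(q(\lambda+\gamma)+\lambda\varepsilon r)\big]$.

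Next I would solve the discounted average Eq.~(\ref{Eq:discounted_x}) for this scalar recursion. Writing $S=\sum_{t\ge 0}d^t x(t)$ and using $x(0)=y$, the recursion gives $S(1-dA)=y+dB/(1-d)$, hence $x=(1-d)S=\big((1-d)y+dB\big)/(1-dA)$. The crucial simplification is the change of variable from $d$ to $\delta$: from Eq.~(\ref{Eq:delta-with-d-n}) and $w=2/(n(n-1))$ one obtains the compact relation $\delta=dw/(dw+(1-d))$, so that with $Z:=dw+(1-d)$ we have $1-\delta=(1-d)/Z$ and $\delta=dw/Z$. The key step is then to verify that both numerator and denominator factor through $Z$: a short computation shows $1-dA=Z\big[1-\delta r+\delta(n-2)((\lambda+\gamma)(1-r)+2\lambda\varepsilon r)\big]$ and $(1-d)y+dB=Z\big[(1-\delta)y+\delta q+\delta(n-2)(q(\lambda+\gamma)+\lambda\varepsilon r)\big]$. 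The common factor $Z$ cancels, yielding exactly Eq.~(\ref{Eq:coop-rate-homogenous}). The payoff claim is then immediate: by symmetry $x_{ij}=x_{ji}=x$, so Eq.~(\ref{Eq:payoff-equation}) reduces to $\pi=\tfrac{1}{n-1}\sum_{j\neq i}(xb-xc)=(b-c)x$.

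For the characterizations I would write $x=N/D$ with $N,D$ the numerator and denominator of Eq.~(\ref{Eq:coop-rate-homogenous}) and assume $0<\delta<1$ (the continuation probability is strictly positive and, by hypothesis, below one). First I would note $D>0$: since $\varepsilon\in[0,1/2]$ gives $1-2\varepsilon\ge 0$ and $r\in[-1,1]$, one has $1-\delta r\ge 1-\delta>0$ and the bracketed term in $D$ is nonnegative, so $x$ is a genuine probability with $N\ge 0$ and $D-N\ge 0$. The heart of the argument is to display $N$ and $D-N$ as sums of manifestly nonnegative terms, so that each vanishes iff every summand does. Using $r=p-q$, I would rewrite $N=(1-\delta)y+\delta q+\delta(n-2)\big[q\gamma+\lambda(q\bar\varepsilon+\varepsilon p)\big]$, every summand of which is nonnegative; hence $x=0$ forces $y=0$, $q=0$, and (from the surviving term $\delta(n-2)\lambda\varepsilon p$) the extra condition $\lambda\varepsilon p=0$.

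For full cooperation I would compute $D-N=(1-\delta p)-y(1-\delta)+\delta(n-2)\big[\lambda(1-p\bar\varepsilon-\varepsilon q)+\gamma(1-p)\big]$, again a sum of nonnegative terms because $p\bar\varepsilon+\varepsilon q\le 1$ and $(1-\delta p)\ge(1-\delta)\ge y(1-\delta)$; setting each to zero gives $y=p=1$ together with $\lambda\varepsilon(1-q)=0$. The two regimes then fall out by inspecting which optional condition survives: in the generic case $n>2$ and $\varepsilon>0$ make the factor $\delta(n-2)\varepsilon$ strictly positive, so $\lambda\varepsilon p=0$ becomes ``$\lambda=0$ or $p=0$'' and $\lambda\varepsilon(1-q)=0$ becomes ``$\lambda=0$ or $q=1$''; in the non-generic case $n=2$ or $\varepsilon=0$ annihilates these terms identically, leaving only $y=q=0$ and $y=p=1$ respectively. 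The main obstacle I anticipate is the $d\mapsto\delta$ reduction, specifically spotting and verifying the clean factorization by $Z$ so that the formula matches Eq.~(\ref{Eq:coop-rate-homogenous}) on the nose; the remainder is bookkeeping, with the only subtlety being to group $N$ and $D-N$ into the right nonnegative summands so that the ``zero iff each term zero'' deductions remain valid across both regimes.
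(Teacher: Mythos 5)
Your proposal is correct and follows essentially the same route as the paper: both take the discounted average of the recursion Eq.~(\ref{Eq: recursion_2}) (the paper via a telescoping sum of the general heterogeneous recursion before imposing homogeneity, you by collapsing to a scalar affine recursion first via the symmetry ansatz), solve the resulting linear equation, and convert from $d,w,\bar w$ to $\delta$; your $Z$-factorization is exactly the algebra the paper performs implicitly at that step. Your characterization arguments also mirror the paper's (analysis of the numerator for $x=0$, and your $D-N\ge 0$ decomposition is equivalent to the paper's observation that plugging $x=1$ forces the left side $\le 1-\delta$ and the right side $\ge 1-\delta$), with your explicit nonnegative-summand bookkeeping being a slightly more detailed rendering of the same idea.
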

\noindent All proofs of propositions, lemmas and theorems are in the appendix.
\\ \\ 
\noindent Now consider a population where players $1,2,...,n-1$ are playing the same strategy $\sigma = (y,p,q,\lambda,\gamma)$ while player $n$ is playing an arbitrary strategy. For simplicity, we call these $n-1$ players as \emph{residents} and the last player, $n$ as \emph{mutant}. In the following lemmas we derive some linear relationships between the average rates of cooperation in this population, 

\begin{lemma}
Consider $n > 2$, and the game with pairwise continuation probability, $\delta$, and observation error probability, $\varepsilon$. Let $i,k \in \{1,2,...,n-1\}$ such that $i \neq k$. All players in $\{1,2,...,n-1\}$ play the strategy $(y,p,q,\lambda,\gamma)$. Then, the following linear relationships are satisfied, independent of the arbitrary strategy that player $n$ adopts,
\begin{equation}
\begin{split}
x_{in} &= B_1 + B_2 x_{ni} + B_3 x_{ik} \\[10pt]
x_{ik} &= C_1 + C_2 x_{ni} + C_3 x_{in}
\end{split}.
\label{Eq: first-linear-equation}
\end{equation} \\
\noindent The coefficients $B_1$, $B_2$, $B_3$ and $C_1$, $C_2$, $C_3$ are given by\\[10pt]
\begin{equation}
\begin{split}
B_1  &= \frac{(1 - \delta)y + \delta q + \delta (n-2)(q(\lambda + \gamma) + \lambda \varepsilon r)}{1 + \delta(n-2)(\lambda + \gamma)} \\[10pt]
B_2  &= \frac{\delta r (1  + \lambda (n-2)(1 - 2 \varepsilon))}{1 + \delta(n-2)(\lambda + \gamma)} \\[10pt]
B_3  &= \frac{\delta \gamma r (n-2)}{1 + \delta(n-2)(\lambda + \gamma)} \\[15pt]
C_1 
&= \frac{(1 - \delta)y + \delta q + \delta (n - 2)(q(\lambda + \gamma) + \lambda \varepsilon r)}{1 - \delta r + \delta (\lambda + \gamma)(n-2) - \delta r (n-3)(\lambda + \gamma - 2 \lambda \varepsilon)
} \\[10pt]
C_2 
&= \frac{\delta \gamma r}{1 - \delta r + \delta (\lambda + \gamma)(n-2) - \delta r (n-3)(\lambda + \gamma - 2 \lambda \varepsilon)
)} \\[10pt]
C_3
&= \frac{\delta \lambda r (1 - 2 \varepsilon)}{1 - \delta r + \delta (\lambda + \gamma)(n-2) - \delta r (n-3)(\lambda + \gamma - 2 \lambda \varepsilon)
}
\end{split}
\label{Eq:coefficients-Bs}
\end{equation}
\label{claim: first linear relationship}
\end{lemma}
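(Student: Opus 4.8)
The plan is to work directly from the averaged, time-discounted fixed-point equation implied by Eq.~(\ref{Eq:recursion-vector}) and Eq.~(\ref{Eq:discounted_x}). Applying the operator $(1-d)\sum_t d^t(\cdot)$ to both sides of $\xbf(t+1)=\Mbf\xbf(t)+\vbf$ and shifting the summation index yields the single relation $\xbf=(1-d)\xbf_0+d(\Mbf\xbf+\vbf)$; that is, the averaged reputations obey the same local balance as the per-round recursion Eq.~(\ref{Eq: recursion_2}) but with the update part weighted by $d$ and an extra $(1-d)$ weight on $\xbf_0$ (whose resident rows all equal $y$). I would then write out only the two components that matter: the equation for how a resident $i$ perceives the mutant ($x_{in}$) and for how resident $i$ perceives another resident $k$ ($x_{ik}$), substituting $p_i=p,\ q_i=q,\ r_i=r,\ \lambda_i=\lambda,\ \gamma_i=\gamma,\ y_i=y$ since $i$ is a resident.

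The key structural step is to collapse the sums $\sum_l x_{nl}$, $\sum_l x_{li}$ and $\sum_l x_{kl}$ appearing in these equations down to the three unknowns $x_{in}, x_{ni}, x_{ik}$. For this I would invoke the exchangeability of the $n-1$ residents: since they all play $(y,p,q,\lambda,\gamma)$ and strategies respond to reputations and actions rather than to the identities of co-players, every transposition of two residents is an automorphism of the averaged dynamics, so those quantities are permutation invariant. Concretely this gives $x_{nl}=x_{ni}$, $x_{ln}=x_{in}$ for every resident $l$, a common resident--resident value with $x_{lk}=x_{ik}=x_{ki}$, and hence $\sum_{l\neq i,n}x_{nl}=(n-2)x_{ni}$, $\sum_{l\neq i,n}x_{li}=(n-2)x_{ik}$, $\sum_{l\neq i,k}x_{kl}=(n-3)x_{ik}+x_{in}$ and $\sum_{l\neq i,k}x_{li}=(n-3)x_{ik}+x_{ni}$. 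After substitution each component equation is a single linear equation in the three unknowns, with coefficients built from $d$, $w$ and $\bar{w}$.

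The remaining work is algebraic: convert the $(d,w,\bar{w})$-coefficients into the $\delta$-coefficients of Eq.~(\ref{Eq:coefficients-Bs}). Two identities suffice. First, $\bar{w}-w=(n-2)w$ turns every third-party weight into a clean multiple of $w$. Second, dividing each equation through by $1-d+dw$ and using the definition of $\delta$ in Eq.~(\ref{Eq:delta-with-d-n}) gives $dw/(1-d+dw)=\delta$, and hence $(1-d)/(1-d+dw)=1-\delta$ and $d(\bar{w}-w)/(1-d+dw)=(n-2)\delta$. Carrying out this normalization on the $x_{in}$ equation and solving for $x_{in}$ produces exactly $B_1,B_2,B_3$; doing the same on the $x_{ik}$ equation—where the self-coefficient also absorbs the term $wr\,x_{ki}=wr\,x_{ik}$ and the $(n-3)$ diagonal contributions of the two sums—and solving for $x_{ik}$ produces $C_1,C_2,C_3$, with the observation $\lambda(1-2\varepsilon)+\gamma=\lambda+\gamma-2\lambda\varepsilon$ yielding the stated $C$-denominator.

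I expect the main obstacle to be the symmetry reduction rather than the algebra: one must establish that the mutant's averaged perception of each resident is identical and that resident--resident perceptions are symmetric, since without these equalities the sums do not close on $\{x_{in},x_{ni},x_{ik}\}$ and the relations would not be independent of the mutant's strategy. The justification rests on reading ``arbitrary strategy'' as an \emph{anonymous} (identity-blind) strategy, so that resident relabelings are genuine symmetries of the averaged system; once that is granted, the passage from $d$ to $\delta$ is only careful bookkeeping.
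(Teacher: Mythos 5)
Your core mechanics match the paper's own proof: the paper also works from the discounted fixed-point identity (derived there via a telescoping-sum argument in the proof of Proposition~\ref{Prop:coop-payoff-homogenous}, giving exactly your relation $\xbf = (1-d)\xbf_0 + d(\Mbf\xbf + \vbf)$ written component-wise as Eq.~(\ref{Eq:master-equation-work})), then specializes that identity to a resident--mutant pair and a resident--resident pair, collapses the sums onto the three unknowns, and performs the same $(d,w,\bar{w})\to\delta$ normalization you describe, using $\bar{w}-w=(n-2)w$ and $dw/(1-d+dw)=\delta$. Your coefficient bookkeeping is correct, and the fact you implicitly rely on --- that the equation for $x_{ij}$ with resident observer $i$ has coefficients depending only on the resident strategy, so it holds whatever player $n$ does --- is exactly what makes the lemma mutant-independent. (A minor slip: for an arbitrary mutant the full vector recursion $\xbf(t+1)=\Mbf\xbf(t)+\vbf$ is not valid in the mutant's own rows, but since you only use the resident rows this is presentational.)

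The genuine gap is your justification for collapsing $\sum_l x_{nl}$, $\sum_l x_{li}$, $\sum_l x_{kl}$. You argue via exchangeability that the individual pair values coincide ($x_{nl}=x_{ni}$ for every resident $l$, etc.), and you concede that this forces you to read ``arbitrary strategy'' as ``anonymous strategy.'' That restriction is not harmless: the paper's equilibrium analysis explicitly allows deviations to \emph{any} strategy, including identity-discriminating ones (say, always cooperate with resident $1$ and always defect against everyone else), and for such a mutant the pairwise equalities you need are simply false --- different residents then hold different opinions of the mutant, and conversely. The paper's proof avoids this entirely by \emph{defining} $x_{in}$, $x_{ni}$, $x_{ik}$ as averages over the corresponding ordered pairs. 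Because the master equation is linear in the $x$'s and its coefficients are identical across all resident observers, summing it over all resident--mutant pairs (resp.\ all resident--resident pairs) closes exactly on these three averages: for instance $\tfrac{1}{n-1}\sum_i \sum_{l\neq i,n} x_{nl} = (n-2)\,x_{ni}$ with $x_{ni}$ the average, and $\sum_{l\neq i,k}x_{li}$ averages to $(n-3)x_{ik}+x_{ni}$, with no symmetry assumption on the mutant at all. Replace your exchangeability step by this averaging-by-linearity step (interpreting the lemma's $x$'s as averages, as the paper's proof does), and your argument goes through for genuinely arbitrary mutants; the rest is, as you say, bookkeeping.
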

\noindent The average probability that a resident cooperates with player $n$, $x_{in}$, varies linearly with the average probability that residents cooperate with residents $x_{ik}$, and player $n$ cooperates with residents, $x_{ni}$. 
Using Lemma~\ref{claim: first linear relationship} we can show that a linear relationship exists between $x_{in}$ and $x_{ni}$ for any strategy that the mutant adopts,

\begin{Prop}
Consider the game with $n$ players, pairwise continuation probability, $\delta$, and observation error probability $\varepsilon$. All players in set $\{1,2,...,n-1\}$ play the strategy $(y,p,q,\lambda,\gamma)$ and player $n$ plays an arbitrary strategy. Then, the following relationship between average cooperation shown by residents to the mutant, $x_{in}$, and the average cooperation shown by the mutant to the residents, $x_{ni}$, is satisfied irrespective of the strategy the mutant (player $n$) adopts,
\begin{equation}
x_{in} = K_1 + K_2x_{ni}.
\label{Eq:linear-eq-more-than-2}
\end{equation}
The coefficients $K_1$ and $K_2$ can be expressed in terms of the coefficients given in Eq.~(\ref{Eq:coefficients-Bs}) as,
\begin{equation}
K_1 = 
\begin{cases}
\displaystyle \frac{B_1 + B_3 C_1}{1 - B_3 C_3} &\quad \text{ if } n > 2 \\[10pt]
(1-\delta)y + \delta q &\quad \text{ if } n = 2 
\end{cases}\\[20pt]
\label{Eq:coefficients-K1}.
\end{equation}
\begin{equation}
K_2 = 
\begin{cases}
\displaystyle \frac{B_2 + B_3 C_2}{1 - B_3 C_3} &\quad \text{ if } n > 2 \\[10pt]
\delta r &\quad \text{ if } n = 2
\end{cases}
\label{Eq:coefficients-K2}.
\end{equation}
As a result, the mutant's payoff can be written as,
\begin{equation}
\pi_n = K_1b + (K_2b - c)x_{ni}
\label{Eq:payoff-mutant-linear}
\end{equation}
\label{Prop:average-payoff-mutant-linear}
\end{Prop}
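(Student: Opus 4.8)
The plan is to treat the two cases $n>2$ and $n=2$ separately, because Lemma~\ref{claim: first linear relationship} is only asserted for $n>2$, and then to read the payoff off the resulting affine relation. For $n>2$ I would regard the two identities of Lemma~\ref{claim: first linear relationship} as a linear system in the two unknowns $x_{in}$ and $x_{ik}$, with $x_{ni}$ held as a parameter. Substituting the expression for $x_{ik}$ from the second identity into the first and collecting the $x_{in}$ terms yields
\begin{equation}
x_{in}(1 - B_3 C_3) = (B_1 + B_3 C_1) + (B_2 + B_3 C_2)\,x_{ni},
\label{Eq:elimination-step}
\end{equation}
so that dividing through by $1-B_3C_3$ reproduces exactly the claimed $K_1 = (B_1+B_3C_1)/(1-B_3C_3)$ and $K_2 = (B_2+B_3C_2)/(1-B_3C_3)$. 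This is a one-line elimination once Lemma~\ref{claim: first linear relationship} is in hand.

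The only genuine point of care in this case is the non-degeneracy $1 - B_3 C_3 \neq 0$ needed to justify the division. Here I would use the explicit forms in Eq.~(\ref{Eq:coefficients-Bs}): since $B_3$ and $C_3$ are each proportional to $\delta r$ with factors bounded by their (positive) denominators, the product $B_3C_3$ is of order $\delta^2 r^2$ and I expect $|B_3 C_3| < 1$ to follow from $\delta<1$ and $|r|\le 1$, giving $1-B_3C_3 > 0$. For $n=2$ the lemma does not apply, so I would instead return to the recursion Eq.~(\ref{Eq: recursion_2}). At $n=2$ one has $w=\bar w=1$, hence $\bar w - w = 0$, $n-2=0$, and every sum $\sum_{l\neq i,j}$ is empty, so the recursion collapses to $x_{in}(t+1) = r\,x_{ni}(t) + q$ for the single resident $i$ and the mutant $n$. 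Applying the discounting of Eq.~(\ref{Eq:discounted_x}) with $x_{in}(0)=y$, and using that Eq.~(\ref{Eq:delta-with-d-n}) gives $\delta=d$ when $n=2$, converts this into $x_{in} = (1-\delta)y + \delta q + \delta r\,x_{ni}$, which matches the stated $K_1=(1-\delta)y+\delta q$ and $K_2=\delta r$.

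For the payoff I would start from Eq.~(\ref{Eq:payoff-equation}) specialized to player $n$, namely $\pi_n = \tfrac{1}{n-1}\sum_{j\neq n}(x_{jn}b - x_{nj}c)$. Because $x_{in}$ and $x_{ni}$ denote the averages of $x_{jn}$ and $x_{nj}$ over the residents $j$, the two sums are precisely $(n-1)x_{in}$ and $(n-1)x_{ni}$, so the prefactor cancels and $\pi_n = b\,x_{in} - c\,x_{ni}$. Substituting $x_{in}=K_1+K_2x_{ni}$ and regrouping then gives $\pi_n = K_1 b + (K_2 b - c)x_{ni}$, as in Eq.~(\ref{Eq:payoff-mutant-linear}). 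I anticipate the main obstacle to be the careful bookkeeping in the $n=2$ reduction of Eq.~(\ref{Eq: recursion_2}) (confirming that all third-party and $\lambda,\gamma$ terms vanish and that $\delta=d$) together with establishing the non-degeneracy $1-B_3C_3\neq 0$ in the $n>2$ case; the elimination in Eq.~(\ref{Eq:elimination-step}) and the payoff computation are otherwise routine.
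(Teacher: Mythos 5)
Your proposal is correct and follows essentially the same route as the paper's proof: eliminate $x_{ik}$ between the two identities of Lemma~\ref{claim: first linear relationship} for $n>2$, treat $n=2$ separately via the pure direct-reciprocity relation $x_{in}=(1-\delta)y+\delta q+\delta r\,x_{ni}$, and substitute the affine relation into Eq.~(\ref{Eq:payoff-equation}) to get $\pi_n = K_1b+(K_2b-c)x_{ni}$. You are in fact slightly more careful than the paper at this point, since you flag the non-degeneracy $1-B_3C_3\neq 0$ (which the paper only verifies later, in the proof of Lemma~\ref{claim:algebraic-relationship}, via $|B_3|<1$ and $|C_3|<1$) and you spell out the $n=2$ collapse of the recursion that the paper merely asserts.
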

\noindent The proposition above demonstrates that the resident strategy enforces a linear relationship on the payoffs of the mutant to how much the mutant cooperates with them (this is similar to zero-determinant strategies in the literature of direct reciprocity~\cite{Press:PNAS:2012}). When the slope, $K_2b - c < 0$ is set to be strictly negative by the resident strategy, the best response for the mutant is to always defect. Similarly, when the slope is set to be strictly positive, the best response is to always cooperate. When the slope is exactly zero, any strategy that the mutant adopts is a best response as all strategies earn the same payoff of $K_1b$. In this case, the resident strategy is also a best-response to itself. While characterizing all Nash equilibria strategies of the form $(y,p,q,\lambda,\gamma)$, we study the sign of the slope, $K_2b - c$, of this linear relationship. In the next lemma, we compare the signs of the slope $K_2b - c$ and a quadratic in $r := p - q$. This simplifies the characterization of Nash strategies in the next section. 

\begin{lemma}
\label{claim:algebraic-relationship}
Consider $n > 2$. Then, if we define the coefficients $\Lambda_1, \Lambda_2, \Lambda_3$ of the quadratic in $r$, $\Lambda_1 r^2 + \Lambda_2 r - \Lambda_3$ in the following way, 


\begin{equation}
    \begin{split}
        \Lambda_1 &=  \delta^2 \gamma (n-2)(\gamma b + \lambda c (1 - 2 \varepsilon)) - b \delta D (1 + \lambda(n-2)(1 - 2\varepsilon))\\
        \Lambda_2 &= b \delta A (1 + \lambda (n-2)(1 - 2 \varepsilon)) + cAD  
        \\
        \Lambda_3 &= cA^2
    \end{split}
\label{Eq:coefficients-Lambda}
\end{equation}

\noindent where $A$ and $D$ are given by, 
\begin{equation}
    \begin{split}
        A &= 1 + \delta (n-2)(\lambda + \gamma) \\[10pt]
        D &= \delta (1 + (n-3)(\lambda + \gamma - 2\lambda \varepsilon))
    \end{split}
\label{Eq:coefficients-AD}
\end{equation}
then, the following statements are true, 
\begin{enumerate}
    \item $K_2b - c = 0 \Leftrightarrow \Lambda_1 r^2 + \Lambda_2 r - \Lambda_3 = 0$
    \item $K_2b - c > 0 \Leftrightarrow \Lambda_1 r^2 + \Lambda_2 r - \Lambda_3 > 0$ 
    \item $K_2b - c < 0 \Leftrightarrow \Lambda_1 r^2 + \Lambda_2 r - \Lambda_3 < 0$
\end{enumerate}
\noindent where $K_2$ is given by Eq.~(\ref{Eq:coefficients-K2}). 
\end{lemma}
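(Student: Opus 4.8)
The plan is to compute $K_2 b - c$ explicitly as a single rational function of $r$, show that its numerator is \emph{identically} the quadratic $\Lambda_1 r^2 + \Lambda_2 r - \Lambda_3$, and show that its denominator is strictly positive for all admissible parameters. Once both are established, the three claimed sign equivalences follow at once, since a fraction with positive denominator has the same sign as its numerator.

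First I would substitute the closed forms of $B_2, B_3, C_2, C_3$ from Eq.~(\ref{Eq:coefficients-Bs}) into the expression $K_2 = (B_2 + B_3 C_2)/(1 - B_3 C_3)$ of Eq.~(\ref{Eq:coefficients-K2}). Writing $A = 1 + \delta(n-2)(\lambda+\gamma)$ for the common denominator of the $B$-coefficients and $E$ for the common denominator of the $C$-coefficients, the key preliminary observation is that $E = A - rD$, with $D$ as in Eq.~(\ref{Eq:coefficients-AD}); this merges the two denominators into one linear expression. Clearing denominators then gives
$$K_2 = \frac{\delta r\,[\,PE + \delta\gamma^2(n-2)r\,]}{AE - \delta^2\gamma\lambda(n-2)(1-2\varepsilon)r^2}, \qquad P := 1 + \lambda(n-2)(1-2\varepsilon),$$
so that $K_2 b - c = N/F$ with $F := AE - \delta^2\gamma\lambda(n-2)(1-2\varepsilon)r^2$ and $N := b\delta r[PE + \delta\gamma^2(n-2)r] - cF$.

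Next I would expand $N$, using $E = A - rD$ to reduce it to a genuine polynomial in $r$, and collect the coefficients of $r^2$, $r^1$, and $r^0$. A direct match shows the $r^2$-coefficient equals $\Lambda_1$, the $r$-coefficient equals $\Lambda_2$, and the constant term equals $-\Lambda_3 = -cA^2$, giving $N = \Lambda_1 r^2 + \Lambda_2 r - \Lambda_3$ identically. This step is routine bookkeeping once $E = A - rD$ is inserted, since $P$ already appears inside $\Lambda_1$ and $\Lambda_2$ through the factor $1 + \lambda(n-2)(1-2\varepsilon)$.

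The main obstacle — and the only genuinely substantive step — is proving $F > 0$, which is what licenses transferring the sign from $N/F$ to $N$. Here I would factor $F = AE(1 - B_3 C_3)$ and treat the three factors separately. Positivity of $A$ is immediate. For $E = A - rD$, note it is affine in $r$ with $D \ge \delta > 0$ (because $\lambda(1-2\varepsilon)+\gamma \ge 0$ when $\varepsilon \le 1/2$), so it suffices to check the endpoints $r = \pm 1$: at $r = -1$ positivity is clear, and at $r = 1$ one computes $A - D = 1 - \delta(1-\lambda-\gamma) + 2\delta(n-3)\lambda\varepsilon > 0$ using $\delta < 1$ and $\lambda+\gamma \in [0,2]$. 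Finally, to obtain $1 - B_3 C_3 > 0$ I would show $|B_3| < 1$ and $|C_3| < 1$ using $|r| \le 1$ and $\varepsilon \le 1/2$: the bound $|B_3| < 1$ reduces to $1 + \delta(n-2)\lambda > 0$, while $|C_3| < 1$ (via $E \ge A - D$) reduces to $A - D - \delta\lambda(1-2\varepsilon) = 1 - \delta(1-\gamma) + 2\delta(n-2)\lambda\varepsilon > 0$. Together these give $|B_3 C_3| < 1$, hence $F = AE(1 - B_3 C_3) > 0$, which completes the argument and yields all three equivalences.
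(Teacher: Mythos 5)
Your proposal is correct and follows essentially the same route as the paper: express $K_2 b - c$ as the quadratic $\Lambda_1 r^2 + \Lambda_2 r - \Lambda_3$ divided by a manifestly positive denominator, then transfer the sign. In fact your version is slightly more careful than the paper's own proof, which writes the denominator as $1 - B_3 C_3$ alone (omitting the factor $AE$ that your computation makes explicit, and hence never checking $E > 0$) and asserts $0 \leq B_3 < 1$ and $0 \leq C_3 < 1$, which fails when $r < 0$; your absolute-value bounds $|B_3| < 1$ and $|C_3| < 1$ are the correct statement there.
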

\noindent The above lemma shows that the signs of $K_2b - c$ and the quadratic in $r$, $\Lambda_1 r^2 + \Lambda_2 r - \Lambda_3$ are the same when $n>2$. The result from this lemma allows us to focus only on the properties of the quadratic in $r$ instead of studying the properties of $K_2b - c$ while characterizing the Nash equilibria. 

\subsection*{Characterization of Nash equilibria}

\noindent In the two following theorems, we characterize all Nash equilibria of the form $(y,p,q,\lambda, \gamma)$ for games where $\delta < 1$.
\begin{theorem}
(characterization of Nash equilibria for $n > 2$) \\ For $n >2$ and $\delta < 1$, a strategy $(y,p,q,\lambda,\gamma)$ is a Nash equilibrium if and only if atleast one of the following conditions holds,
\begin{enumerate}
    \item The strategy always defects. That is, $y = p = q = 0$ (a generic Nash)
    \item It is an equalizer strategy that sets a mutant's payoff to a constant value. \\That is, $\Lambda_1 r^2 + \Lambda_2 r - \Lambda_3 = 0$ (a generic Nash)
    \item $y = p = 1$ \textbf{and} $\Lambda_1 (1-q)^2 + \Lambda_2 (1-q) - \Lambda_3 > 0$ \textbf{and} $\varepsilon = 0$ or $\lambda = 0$
    \item $y = q = 0$ \textbf{and} $\Lambda_1 p^2 + \Lambda_2 p - \Lambda_3 < 0$ \textbf{and} $\varepsilon = 0$ or $\lambda = 0$.
\end{enumerate}
\noindent where $r := p - q$ and $\Lambda_1$, $\Lambda_2$ and $\Lambda_3$ are given by Eq.~(\ref{Eq:coefficients-Lambda}) and Eq.~(\ref{Eq:coefficients-AD}).
\label{Th:generic-Nash-equilibria}
\end{theorem}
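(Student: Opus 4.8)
The plan is to exploit the affine payoff identity from Proposition~\ref{Prop:average-payoff-mutant-linear}, which states that against a resident population playing $(y,p,q,\lambda,\gamma)$ the mutant's payoff is $\pi_n = K_1 b + (K_2 b - c)\,x_{ni}$, and to note that the only quantity the mutant controls is its average cooperation toward residents, $x_{ni}\in[0,1]$. Writing $s := K_2 b - c$ for the slope, the mutant maximizes an affine function of $x_{ni}$ over a subset of $[0,1]$: always defecting realizes $x_{ni}=0$ and always cooperating realizes $x_{ni}=1$. Since every feasible $x_{ni}$ lies in $[0,1]$ and both endpoints are achievable, the best attainable deviation payoff is exactly $K_1 b + \max(0,s)$. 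Playing the resident strategy itself yields $x_{ni}=x$ and payoff $\pi = K_1 b + s x$, which equals $(b-c)x$ from Proposition~\ref{Prop:coop-payoff-homogenous} via the homogeneous fixed point $x = K_1 + K_2 x$. Hence the Nash requirement ``no deviation strictly improves'' collapses to the single inequality $s\,x \ge \max(0,s)$.

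First I would record the three-way case split on the sign of $s$. If $s=0$ the inequality holds trivially, every deviation earns the constant $K_1 b$, and the strategy is an equalizer that is automatically Nash. If $s>0$ the inequality reads $x\ge 1$, so it holds iff the population is fully cooperative ($x=1$); if $s<0$ it reads $x\le 0$, so it holds iff the population is fully defective ($x=0$). Thus the Nash set is precisely $\{s=0\}\cup\{s>0,\,x=1\}\cup\{s<0,\,x=0\}$.

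Next I would convert each regime into the explicit strategy conditions using two inputs: Lemma~\ref{claim:algebraic-relationship}, which replaces the sign of $s$ by the sign of the quadratic $\Lambda_1 r^2 + \Lambda_2 r - \Lambda_3$ in $r=p-q$; and the $x=0$ / $x=1$ characterizations of Proposition~\ref{Prop:coop-payoff-homogenous}. The regime $s=0$ becomes $\Lambda_1 r^2+\Lambda_2 r-\Lambda_3=0$ (condition~2). For $s>0,\,x=1$: full cooperation forces $y=p=1$, whence $r=1-q$, and the strict inequality $\Lambda_1(1-q)^2+\Lambda_2(1-q)-\Lambda_3>0$ excludes $q=1$ (there $r=0$ and the quadratic equals $-\Lambda_3=-cA^2<0$); in the generic case $\varepsilon>0$ this forces the remaining full-cooperation branch $\lambda=0$, while $\varepsilon=0$ needs no such constraint, yielding the clause ``$\varepsilon=0$ or $\lambda=0$'' of condition~3. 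The regime $s<0,\,x=0$ is symmetric with $r=p$ and $\Lambda_1 p^2+\Lambda_2 p-\Lambda_3<0$ (condition~4); the subcase $p=0$ (always defect) gives $x=0$ and $s<0$ for every $\lambda,\gamma,\varepsilon$ and must be split off as the unconditional condition~1, since it is Nash even when neither $\varepsilon=0$ nor $\lambda=0$.

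I expect the main obstacle to be this last translation rather than the optimization argument, which is short. Care is needed to align the generic/non-generic dichotomy of Proposition~\ref{Prop:coop-payoff-homogenous} with the clause ``$\varepsilon=0$ or $\lambda=0$'', to verify that the strict quadratic inequalities are exactly what rule out the degenerate corners $q=1$ and $p=0$ so that the forced branch of Proposition~\ref{Prop:coop-payoff-homogenous} is always $\lambda=0$ in the generic regime, and to confirm that always-defect is the unique strategy requiring separate listing. In particular I would check that always-cooperate, $y=p=q=1$, is \emph{not} Nash (there $r=0$ gives $s<0$ while $x=1\neq 0$), so no analogous clause is needed for condition~3. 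Finally I would verify both inclusions, that each listed condition implies Nash and conversely, so that the union of conditions~1--4 coincides exactly with the Nash set $\{s=0\}\cup\{s>0,\,x=1\}\cup\{s<0,\,x=0\}$ derived above.
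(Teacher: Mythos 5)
Your proposal is correct and follows essentially the same route as the paper's proof: classify the resident by the sign of the slope $K_2b-c$ using Proposition~\ref{Prop:average-payoff-mutant-linear} and Lemma~\ref{claim:algebraic-relationship}, then translate the resulting requirements ($x=1$ for positive slope, $x=0$ for negative slope) into strategy conditions via Proposition~\ref{Prop:coop-payoff-homogenous}. Your write-up is in fact more explicit than the paper's at several points that it leaves implicit --- the reduction of the Nash condition to $sx \geq \max(0,s)$, the exclusion of $q=1$ by the strict quadratic inequality, the separate listing of $\textrm{ALLD}$, and the check that $y=p=q=1$ is not Nash --- but the underlying argument is identical.
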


\noindent We call the set containing \textrm{ALLD} and all equalizer strategies (the strategies that set mutant's payoff to a constant value), the set of \textit{generic} Nash equilibria.  
\begin{theorem}
(characterization of Nash equilibria for $n = 2$) \\For $n=2$ and $\delta < 1$, a strategy $(y,p,q,\lambda,\gamma)$ is a Nash equilibrium of the game if and only if one of the following is true:
\begin{enumerate}
    \item $p-q =c/(b\delta)$,
    \item $y = q = 0$ and $p < c/(b\delta)$
    \item $y = p = 1$ and $q < 1 - c/(b\delta)$
\end{enumerate}
\label{Th:two-player-Nash-equilibria}
\end{theorem}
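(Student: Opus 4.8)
The plan is to specialize Proposition~\ref{Prop:average-payoff-mutant-linear} to $n=2$ and then read the best responses directly off the sign of the slope. For $n=2$ there is a single resident (player~$1$) and the mutant (player~$2$), and the proposition gives $K_1 = (1-\delta)y + \delta q$, $K_2 = \delta r$, so that $\pi_2 = K_1 b + (K_2 b - c)\,x_{21}$ with slope $K_2 b - c = \delta r b - c$. Crucially, by Proposition~\ref{Prop:average-payoff-mutant-linear} this affine relation holds for \emph{every} mutant strategy. Since $x_{21}\in[0,1]$ is always a probability and $\pi_2$ is affine in it, the entire problem reduces to two ingredients: the extreme values of $x_{21}$ the mutant can realize, and the value $x_{21}$ takes when the mutant itself plays the resident strategy.

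First I would record the achievability facts I need. The mutant can force $x_{21}=1$ by unconditionally treating the resident as \emph{good} (always cooperating), and can force $x_{21}=0$ by always treating it as \emph{bad} (always defecting); and $x_{21}$ never leaves $[0,1]$. Hence the supremum of $\pi_2$ over all mutant strategies is $K_1 b + (K_2 b - c)$ when $K_2 b - c>0$, equals $K_1 b$ when $K_2 b - c<0$, and equals $K_1 b$ for \emph{every} strategy when $K_2 b - c=0$. Next I would compute the value $x^\ast := x_{21}$ obtained in the homogeneous population. Proposition~\ref{Prop:coop-payoff-homogenous} with $n=2$ (all $(n-2)$ terms drop) gives $x^\ast = \big((1-\delta)y+\delta q\big)/(1-\delta r)$. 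The resident is a Nash equilibrium precisely when this $x^\ast$ realizes the supremum above, so I split into three cases according to the sign of $\delta r b - c$, equivalently by comparing $r$ with $c/(b\delta)$.

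If $\delta r b - c = 0$, i.e.\ $r = c/(b\delta)$, every deviation earns $K_1 b$, so the resident is trivially a best response to itself, with no constraint on $y,q,\lambda,\gamma$ — this is condition~1. If $\delta r b - c>0$, the best response demands $x^\ast=1$; solving $\big((1-\delta)y+\delta q\big)/(1-\delta r)=1$ subject to $y,p,q\in[0,1]$ and $1-\delta>0$ forces $y=p=1$, and then the slope condition $r=1-q>c/(b\delta)$ reads $q<1-c/(b\delta)$ — this is condition~3. Symmetrically, if $\delta r b - c<0$, the best response demands $x^\ast=0$; since $1-\delta>0$, $\delta>0$ and $y,q\ge 0$, this forces $y=q=0$, whence $r=p$ and the slope condition becomes $p<c/(b\delta)$ — this is condition~2. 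Conversely, each listed condition places the strategy in its case and makes $x^\ast$ attain the supremum, giving the equivalence in both directions.

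The one step needing genuine care is the boundary bookkeeping between the cases. I must verify that the strict inequalities in conditions~2 and~3 correspond exactly to the open slope conditions, while the knife-edge values ($p=c/(b\delta)$ together with $y=q=0$, or $q=1-c/(b\delta)$ together with $y=p=1$) fold into the equalizer case~1 — so that the three conditions are jointly exhaustive and overlap only on that boundary. Everything else is the short linear algebra of solving $x^\ast=\big((1-\delta)y+\delta q\big)/(1-\delta r)$ for the endpoints $0$ and $1$, which I do not expect to cause difficulty.
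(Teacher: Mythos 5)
Your proposal is correct and follows essentially the same route as the paper: specialize Proposition~\ref{Prop:average-payoff-mutant-linear} to $n=2$ to get the affine payoff relation $\pi_n = \bigl((1-\delta)y+\delta q\bigr)b + (\delta r b - c)\,x_{ni}$, then do a case analysis on the sign of the slope, using the homogeneous cooperation rate from Proposition~\ref{Prop:coop-payoff-homogenous} (which for $n=2$ forces $y=p=1$ when $x^\ast=1$ and $y=q=0$ when $x^\ast=0$). If anything, your write-up is more complete than the paper's, which only argues sufficiency explicitly, whereas your requirement that $x^\ast$ attain the supremum of the affine payoff also delivers the ``only if'' direction and the boundary bookkeeping.
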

\noindent The theorem above states the Nash equilibria for games with only two players in the population. In this case, by definition, indirect and generalized reciprocity are absent.

\subsection*{Cooperative Nash equilibria}
\noindent The theorems in the previous subsection fully characterize the set of all Nash equilibria. In this subsection, we study the subset of all Nash equilibria which sustain full cooperation in the population in the limit where errors are absent.

\begin{definition} (generic cooperative Nash equilibrium):
A Nash equilibrium strategy, $\sigma$, is a (generic) cooperative Nash equilibrium when it is 
\begin{enumerate}
    \item a generic Nash equilibrium and, 
    \item in a homogenous population playing $\sigma$, the cooperation rate according to Eq.~(\ref{Eq:coop-rate-homogenous}) approaches one in the limit of rare errors. That is $x \to 1$ as $\varepsilon \to 0$.
\end{enumerate}
\end{definition}

\noindent The definition above distinguishes between a cooperative Nash equilibrium and a \emph{generic} cooperative Nash equilibrium. Generic cooperative Nash equilibria are a subset of cooperative Nash equilibrium. They require that the Nash equilibrium strategies are fully cooperative between themselves and also equalize the payoff of any mutant. 

\begin{Prop} (characterization of generic cooperative Nash equilibria) \\
For a game where $\delta < 1$, a strategy $(y,p,q,\lambda,\gamma)$ is a generic cooperative Nash equilibrium if and only if 
\begin{enumerate}
    \item $y = p = 1$, and, 
    \item $\Lambda_1 (1-q)^2 + \Lambda_2 (1-q) - \Lambda_3 = 0$
\end{enumerate}
\label{Prop:cooperative-Nash}
\end{Prop}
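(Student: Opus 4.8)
The plan is to reduce the proposition to a single clean computation by leaning on the classification already established in Theorem~\ref{Th:generic-Nash-equilibria}. By definition a generic cooperative Nash equilibrium is a generic Nash equilibrium — hence either \ALLD{} ($y=p=q=0$) or an equalizer satisfying $\Lambda_1 r^2 + \Lambda_2 r - \Lambda_3 = 0$ — that \emph{in addition} satisfies $x \to 1$ as $\varepsilon \to 0$, where $x$ is the homogeneous cooperation rate of Eq.~(\ref{Eq:coop-rate-homogenous}). So the entire statement will follow from two facts: (i) the limit condition $x \to 1$ is equivalent to $y = p = 1$, and (ii) once $y = p = 1$ one has $r = p - q = 1 - q$, so the equalizer equation is literally condition~2 of the proposition. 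I would isolate (i) as the technical core and then wrap the two directions of the ``if and only if'' around it.

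For the core, I would first argue that the right-hand side of Eq.~(\ref{Eq:coop-rate-homogenous}) is a rational function of $\varepsilon$ whose denominator at $\varepsilon = 0$ equals $1 - \delta r + \delta(n-2)(\lambda+\gamma)(1-r) \ge 1 - \delta > 0$ (using $r \le 1$, $\delta < 1$, and nonnegativity of the remaining terms), so $x$ is continuous there and $\lim_{\varepsilon\to 0} x = x|_{\varepsilon = 0}$. Writing $S := (n-2)(\lambda+\gamma)$ and substituting $r = p-q$, the equation $x|_{\varepsilon=0} = 1$ (numerator equals denominator) simplifies, after cancelling the common $\delta q(1+S)$ terms, to
\[
(1-\delta)\,y + \delta(1+S)\,p = 1 + \delta S .
\]
For $\delta \in (0,1)$ the two coefficients $(1-\delta)$ and $\delta(1+S)$ are strictly positive and sum to exactly $1 + \delta S$, so the left side is a weighted average of $y,p \in [0,1]$ that never exceeds $1+\delta S$ and attains it \emph{only} at $y=p=1$. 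Since $x \le 1$ always (it averages probabilities), this establishes $x \to 1 \iff y = p = 1$.

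With the core in place both directions are short. For necessity, \ALLD{} has identically vanishing numerator in Eq.~(\ref{Eq:coop-rate-homogenous}), hence $x=0$, so the requirement $x\to1$ excludes it and forces $\sigma$ to be an equalizer; the same requirement forces $y=p=1$, whence $r = 1-q$ and the equalizer equation becomes $\Lambda_1(1-q)^2 + \Lambda_2(1-q) - \Lambda_3 = 0$. For sufficiency, given $y=p=1$ and condition~2, the substitution $r = 1-q$ turns condition~2 back into the equalizer equation, so $\sigma$ is a generic Nash equilibrium by Theorem~\ref{Th:generic-Nash-equilibria}, while $y=p=1$ delivers $x\to1$ by the core; thus $\sigma$ is a generic cooperative Nash equilibrium.

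The one place I would tread carefully — and the main obstacle — is the degenerate boundary $\delta = 0$, where the coefficient $\delta(1+S)$ vanishes, the weighting argument collapses, and $x|_{\varepsilon=0} = y$ no longer constrains $p$. There I would instead note directly that at $\delta = 0$ one has $A = 1$, $D = 0$, and $\Lambda_1 = \Lambda_2 = 0$, so the quadratic reduces to $-\Lambda_3 = -cA^2 = -c < 0$; hence no equalizer exists, so neither side of the claimed equivalence can be satisfied and it holds vacuously. Away from this edge case the proof is just the algebra of Eq.~(\ref{Eq:coop-rate-homogenous}) together with the bookkeeping of Theorem~\ref{Th:generic-Nash-equilibria}, so I expect no further difficulty.
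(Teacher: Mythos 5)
Your proof is correct and takes essentially the same route as the paper's: both use the definition together with Theorem~\ref{Th:generic-Nash-equilibria} to reduce the statement to showing that $\lim_{\varepsilon\to 0}x=1$ in Eq.~(\ref{Eq:coop-rate-homogenous}) holds exactly when $y=p=1$, after which substituting $r=1-q$ into the equalizer equation yields condition~2. Your write-up is merely more explicit than the paper's about the continuity of $x$ at $\varepsilon=0$, the exclusion of \ALLD{}, the sufficiency direction, and the degenerate case $\delta=0$, all of which the paper leaves implicit.
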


\noindent The conditions for a strategy with only direct and indirect reciprocity to be a cooperative Nash equilibria have been studied in detail in the original model from Schmid et al.~\cite{Schmid:NHB:2021}. In the following propositions, we present properties of generic cooperative Nash equilibrium with direct (DR) and generalized reciprocity (GR) only ($\lambda = 0$). In these propositions, we naturally assume $n > 2$ since the conditions involve generalized reciprocity. 

\begin{Prop}
(Existence and uniqueness of a generic cooperative Nash with pure GR) 
\\A strategy of the form $(1,1,q^{*}_{\mathrm{G}},0,1)$ is a cooperative Nash equilibrium if and only if $\delta > c/b$ and, 
\begin{equation}
q^{*}_{\mathrm{G}} = 1 - \frac{c + c\delta(n-2)}{b\delta + c\delta(n-2)}.
\label{eq:pure-GR-coop}
\end{equation}
\label{Prop:GR-coop-Nash}
\end{Prop}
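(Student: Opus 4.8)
The plan is to specialize the general characterization of generic cooperative Nash equilibria from Proposition~\ref{Prop:cooperative-Nash} to the case $\lambda = 0$, $\gamma = 1$. By that proposition, any strategy of the form $(1,1,q,\lambda,\gamma)$ is a generic cooperative Nash equilibrium if and only if the quadratic condition $\Lambda_1 (1-q)^2 + \Lambda_2 (1-q) - \Lambda_3 = 0$ holds, where the coefficients $\Lambda_1,\Lambda_2,\Lambda_3$ are given by Eq.~(\ref{Eq:coefficients-Lambda}). So the candidate strategy $(1,1,q^{*}_{\mathrm{G}},0,1)$ is a cooperative Nash equilibrium precisely when $q^{*}_{\mathrm{G}}$ satisfies this quadratic. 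The first step is therefore to substitute $\lambda = 0$ and $\gamma = 1$ into the expressions for $A$, $D$, and the $\Lambda_i$ in Eq.~(\ref{Eq:coefficients-AD}) and Eq.~(\ref{Eq:coefficients-Lambda}).

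Once the substitution is made, the key simplification is that setting $\lambda = 0$ kills every term containing the factor $(1-2\varepsilon)$ or $\lambda\varepsilon$, so the error probability $\varepsilon$ drops out entirely; this is why pure generalized reciprocity gives an $\varepsilon$-independent threshold. With $\lambda = 0, \gamma = 1$ I expect $A = 1 + \delta(n-2)$ and $D = \delta(1 + (n-3)) = \delta(n-2)$, and the $\Lambda_i$ collapse to $\Lambda_1 = \delta^2(n-2)b - b\delta D$, $\Lambda_2 = b\delta A + cAD$, $\Lambda_3 = cA^2$. The next step is to recognize that, writing $s := 1 - q$, the quadratic $\Lambda_1 s^2 + \Lambda_2 s - \Lambda_3 = 0$ should factor. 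In fact I anticipate $\Lambda_1 = 0$ after substituting $D = \delta(n-2)$, since $\delta^2(n-2)b - b\delta\cdot\delta(n-2) = 0$. This is the crucial structural observation: the quadratic degenerates to a \emph{linear} equation in $s$, namely $\Lambda_2 s - \Lambda_3 = 0$, whose unique solution is $s = \Lambda_3/\Lambda_2 = cA^2/(b\delta A + cAD) = cA/(b\delta + cD)$.

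The third step is to back-substitute $A = 1 + \delta(n-2)$ and $D = \delta(n-2)$ to obtain $1 - q^{*}_{\mathrm{G}} = c(1 + \delta(n-2))/(b\delta + c\delta(n-2)) = (c + c\delta(n-2))/(b\delta + c\delta(n-2))$, which is exactly Eq.~(\ref{eq:pure-GR-coop}). Since $\Lambda_1 = 0$ forces the quadratic to have a unique root, this automatically yields both existence and uniqueness of the equalizer value once the root lies in the admissible range. The final step is to enforce the feasibility constraint $q^{*}_{\mathrm{G}} \in [0,1]$, equivalently $1 - q^{*}_{\mathrm{G}} \in [0,1]$. The expression $1 - q^{*}_{\mathrm{G}}$ is manifestly positive, so the binding requirement is $1 - q^{*}_{\mathrm{G}} \le 1$, i.e. $c + c\delta(n-2) \le b\delta + c\delta(n-2)$, which simplifies to $c \le b\delta$, i.e. $\delta \ge c/b$; the strict inequality $\delta > c/b$ corresponds to $q^{*}_{\mathrm{G}} > 0$ being well-defined as a genuine interior generosity value (and guarantees the equalizer is not the degenerate ALLD case).

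The main obstacle I anticipate is confirming the degeneracy $\Lambda_1 = 0$ cleanly, since it hinges on the exact identity $D = \delta(n-2)$ when $\lambda = 0, \gamma = 1$, and then verifying that this degeneracy is not an artifact but reflects that pure-GR equalizer strategies form a one-parameter family pinned down by a single linear constraint. A secondary subtlety is interpreting the boundary: I must check that at $\delta = c/b$ the solution gives $q^{*}_{\mathrm{G}} = 0$ (so the strategy coincides with a generous-but-threshold case) and argue why the statement uses the strict inequality $\delta > c/b$ rather than $\delta \ge c/b$, presumably because at equality the slope $K_2 b - c$ vanishes identically and the cooperative equalizer collapses into the ALLD-adjacent generic Nash, failing the full-cooperation requirement in a degenerate way.
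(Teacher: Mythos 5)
Your proposal is correct and follows essentially the same route as the paper: specialize Proposition~\ref{Prop:cooperative-Nash} to $\lambda=0,\gamma=1$, observe that $\Lambda_1 = b\delta^2(n-2) - b\delta D = 0$ so the quadratic in $1-q$ degenerates to a linear equation, and solve $\Lambda_2(1-q) = \Lambda_3$ to get Eq.~(\ref{eq:pure-GR-coop}). In fact you are slightly more complete than the paper's own proof, which derives the formula but never explicitly connects the feasibility constraint $q^{*}_{\mathrm{G}} \in [0,1]$ to the threshold $\delta > c/b$ as you do (your closing speculation about why the inequality is strict rather than weak is not quite right --- at $\delta = c/b$ the strategy $(1,1,0,0,1)$ is still a fully cooperative equalizer --- but the paper itself leaves that boundary case unaddressed).
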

\noindent Interestingly, the condition $\delta > c/b$ is the same necessary condition as in the case of direct reciprocity. However, for a fixed value of $\delta > c/b$, the generosity, $q^*_{\mathrm{D}}$, at a pure DR equilibrium is strictly greater than the generosity, $q^*_{\mathrm{G}}$, at a pure GR equilibrium,
\begin{equation}
    1 - \frac{c}{b\delta} = q^*_{\mathrm{D}} > q^*_{\mathrm{G}} \\[5pt].
\end{equation}

\begin{Prop}
(Existence of a unique generic cooperative Nash with only DR and GR) \\ Consider a $\gamma \in [0,1]$. Then, if and only if the following implicit condition is satisfied,
\begin{equation}
\delta(1 - \gamma)((n-2)\gamma + 1) \leq (1 + \delta (n-2) \gamma)\left( 1 + \frac{c}{b} - \frac{c \gamma}{b} - \frac{c}{b \delta} \right) \\[10pt]
\label{Eq:delta-min-GR}
\end{equation}
then there is a unique $q^* \in [0,1)$ such that the strategy  $(1,1,q^*,0,\gamma)$ is a generic cooperative Nash equilibrium.
\label{Prop:Existence-Uniqueness-GR-equilibira}
\end{Prop}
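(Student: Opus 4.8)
The plan is to reduce the statement to a question about the roots of a single concave quadratic in $u := 1-q$, and then to control those roots through the sign of the quadratic at $u=1$. By Proposition~\ref{Prop:cooperative-Nash} (with $\lambda = 0$ fixed), the strategy $(1,1,q,0,\gamma)$ is a generic cooperative Nash equilibrium exactly when $y=p=1$, which we have, and $g(u) := \Lambda_1 u^2 + \Lambda_2 u - \Lambda_3 = 0$. First I would specialize the coefficients of Eq.~(\ref{Eq:coefficients-Lambda}) and Eq.~(\ref{Eq:coefficients-AD}) to $\lambda=0$: this gives $A = 1 + \delta(n-2)\gamma$, $D = \delta(1+(n-3)\gamma)$, and, after factoring, $\Lambda_1 = \delta^2 b(\gamma-1)((n-2)\gamma+1)$, $\Lambda_2 = A\,\delta\,(b + c + c(n-3)\gamma)$, and $\Lambda_3 = cA^2$. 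Since $n>2$ and $\gamma\le 1$, these satisfy $\Lambda_1\le 0$ (with equality iff $\gamma=1$), $\Lambda_2>0$, and $\Lambda_3>0$; hence $g$ is concave with $g(0)=-\Lambda_3<0$. Because $q\in[0,1)\Leftrightarrow u\in(0,1]$, the proposition reduces to showing that $g$ has a unique root in $(0,1]$ if and only if Eq.~(\ref{Eq:delta-min-GR}) holds.

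Next I would pin down the root structure. Computing the discriminant of $g$ yields the identity $\Lambda_2^2 + 4\Lambda_1\Lambda_3 = A^2\delta^2\big[(b - c(1+(n-3)\gamma))^2 + 4bc(n-2)\gamma^2\big]$, which is strictly positive: the bracket is a sum of two nonnegative terms that cannot vanish simultaneously, since that would force $\gamma=0$ and $b=c$, contradicting $c<b$. Thus $g$ has two distinct real roots $u_1<u_2$, both positive (for $\gamma<1$ their product $-\Lambda_3/\Lambda_1$ and sum $-\Lambda_2/\Lambda_1$ are positive); the degenerate case $\gamma=1$ makes $g$ affine with a single positive root $\Lambda_3/\Lambda_2$, handled in parallel under the same criterion. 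Concavity together with $g(0)<0$ then reduces the count of roots in $(0,1]$ to the sign of $g(1)$: if $g(1)>0$ then $u_1\in(0,1)$ and $u_2>1$, giving exactly one root; if $g(1)<0$ then $1$ lies strictly outside $[u_1,u_2]$, so both roots fall on the same side of $1$ and $(0,1]$ contains either zero or two of them, never exactly one.

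The delicate case, and the main obstacle, is the boundary $g(1)=0$: here $1$ is a root, and uniqueness holds if and only if $1$ is the \emph{smaller} root $u_1$ (equivalently $g'(1)\ge 0$), for otherwise the companion root $u_1<1$ would also lie in $(0,1]$. Using $\Lambda_3=\Lambda_1+\Lambda_2$ at $g(1)=0$ I would rewrite $g'(1)=2\Lambda_1+\Lambda_2=\Lambda_1+\Lambda_3$ and show it is strictly positive. The plan is to eliminate $b$ via the boundary relation and reduce $\Lambda_1+\Lambda_3$ to $A\,c\,N/S$, with $S = 1-\delta+\delta\gamma((n-2)\gamma+1)>0$ and
\begin{equation*}
N = (1-\delta+\delta\gamma)^2 + \delta(n-2)\gamma^2\,(2-\delta+\delta\gamma) + \delta^2(n-2)^2\gamma^3,
\end{equation*}
which is manifestly positive since the first term alone is positive ($\delta<1$). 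Hence $g'(1)>0$, so at the boundary $u=1$ is the smaller root and $q^*=0$ is the unique solution. This substitution and the reduction to a manifestly positive $N$ is where the real work lies; I expect it to be the step most prone to sign and bookkeeping errors.

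Finally I would translate the criterion $g(1)\ge 0$ into Eq.~(\ref{Eq:delta-min-GR}). Writing $g(1)=\Lambda_1+\Lambda_2-\Lambda_3$ and using $A=1+\delta(n-2)\gamma$, a direct simplification produces the clean identity $g(1)=\delta\big[A(b+c-c\gamma-c/\delta) - b\delta(1-\gamma)((n-2)\gamma+1)\big]$, the key cancellation being that $\delta(n-2)\gamma+1=A$. The bracket is exactly $b$ times the difference of the right- and left-hand sides of Eq.~(\ref{Eq:delta-min-GR}), so with $\delta>0$ and $b>0$ we get $g(1)\ge 0$ iff the implicit condition holds. Combining this with the root-counting above gives the claimed equivalence, with $q^*=0$ precisely on the boundary. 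As a consistency check, setting $\gamma=1$ collapses the condition to $\delta\ge c/b$ and the unique root to the value $q^*_{\mathrm{G}}$ of Proposition~\ref{Prop:GR-coop-Nash}.
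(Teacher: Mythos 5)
Your proof is correct, and its skeleton matches the paper's: both arguments reduce, via Proposition~\ref{Prop:cooperative-Nash} with $\lambda=0$, to counting roots of the concave quadratic $F(r)=\Lambda_1 r^2+\Lambda_2 r-\Lambda_3$ (with $r=1-q$) in the interval $(0,1]$, and both identify the criterion $F(1)\geq 0$ with Eq.~(\ref{Eq:delta-min-GR}). The differences lie in the degenerate cases, and there your version is tighter. To exclude a double root, the paper argues by contradiction: it posits $F(r)=-u(r-r^*)^2$, derives $\Lambda_2^2=-4\Lambda_1\Lambda_3$, and shows the resulting quadratic in $\gamma$ has negative discriminant; you instead prove the discriminant is strictly positive outright via the identity $\Lambda_2^2+4\Lambda_1\Lambda_3=A^2\delta^2\left[\bigl(b-c(1+(n-3)\gamma)\bigr)^2+4bc(n-2)\gamma^2\right]$, which I verified and which makes the impossibility transparent. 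More importantly, you handle the boundary $F(1)=0$ explicitly: there $1$ is a root, and uniqueness holds only if $1$ is the \emph{smaller} root, which you establish by eliminating $b$ through the boundary relation and showing $F'(1)=\Lambda_1+\Lambda_3=cAN/S>0$; I checked your expressions $S=1-\delta+\delta\gamma+\delta(n-2)\gamma^2$ and $N=(1-\delta+\delta\gamma)^2+\delta(n-2)\gamma^2(2-\delta+\delta\gamma)+\delta^2(n-2)^2\gamma^3$, both positive for $\delta<1$. The paper's forward direction simply asserts that $F(1)\geq 0$ gives exactly one root in $(0,1]$; its double-root exclusion does not cover the possibility that at $F(1)=0$ the point $1$ is the larger of two distinct roots with its companion in $(0,1)$, which would yield two admissible values of $q$ and break the ``if'' direction of the proposition. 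Your computation closes exactly that gap, so your route buys a complete proof of the equivalence at the cost of one extra elimination; the remaining steps (signs of the $\Lambda_i$, the affine case $\gamma=1$, and the algebraic translation of $F(1)\geq 0$ into Eq.~(\ref{Eq:delta-min-GR})) coincide with the paper's.
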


\noindent The above proposition lays down a necessary and sufficient condition on the pairwise continuation probability, $\delta$, for the existence of a unique generic cooperative Nash equilibria with a specific mixture of direct and generalized reciprocity (determined by the strategy element $\gamma$) and no indirect reciprocity (strategy element $\lambda = 0$). By plugging in $\gamma = 0$ and $\gamma = 1$ in the implicit condition above, one recovers identical necessary conditions, $\delta > c/b$. \\ 
\begin{Prop}
\noindent Consider a $\gamma \in [0,1]$. If $(1,1,q^*,0,\gamma)$ is the unique generic cooperative Nash equilibrium of the repeated game, then for any $q \in [0,q^*)$, $(1,1,q,0,\gamma)$ is a (non-generic) cooperative Nash equilibrium. Furthermore, there is no $q > q^*$ such that $(1,1,q,0,\gamma)$ is a cooperative Nash equilibrium. 
\label{Prop:highest-generosity-among-coop-NE}
\end{Prop}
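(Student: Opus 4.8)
The plan is to reduce the whole statement to a sign analysis of the single quadratic $F(q) := \Lambda_1(1-q)^2 + \Lambda_2(1-q) - \Lambda_3$ on $[0,1]$, where $\Lambda_1,\Lambda_2,\Lambda_3$ are the coefficients of Eq.~(\ref{Eq:coefficients-Lambda}) evaluated at $\lambda = 0$. First I would observe that for every strategy of the form $(1,1,q,0,\gamma)$ the homogeneous cooperation rate of Eq.~(\ref{Eq:coop-rate-homogenous}) collapses to $x = 1$ identically in $\varepsilon$: with $y=p=1$, $\lambda = 0$ and $r = 1-q$, numerator and denominator of Eq.~(\ref{Eq:coop-rate-homogenous}) both equal $1 - \delta + \delta q\,(1 + (n-2)\gamma)$. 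Hence the cooperativeness requirement $x \to 1$ in the definition of a (generic or non-generic) cooperative Nash equilibrium is automatic, and only the Nash property remains to be decided. Since $y=p=1$ and $\lambda = 0$, Theorem~\ref{Th:generic-Nash-equilibria} shows that $(1,1,q,0,\gamma)$ is a Nash equilibrium precisely when $F(q) \ge 0$ (clause~2 if $F(q)=0$, clause~3 if $F(q)>0$), that it is a \emph{generic} cooperative Nash equilibrium exactly when $F(q)=0$ (consistent with Proposition~\ref{Prop:cooperative-Nash}), and that it is a \emph{non-generic} cooperative Nash equilibrium exactly when $F(q)>0$. The proposition is therefore equivalent to showing $F>0$ on $[0,q^*)$ and $F<0$ on $(q^*,1]$.

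Next I would determine the shape of $F$. Substituting $\lambda = 0$ into Eqs.~(\ref{Eq:coefficients-Lambda})--(\ref{Eq:coefficients-AD}) and factoring the resulting quadratic in $\gamma$ gives
\begin{equation*}
\Lambda_1 = b\delta^2(\gamma-1)\big((n-2)\gamma + 1\big) \le 0, \qquad \Lambda_2 = A\,(b\delta + cD) > 0, \qquad \Lambda_3 = cA^2 > 0,
\end{equation*}
with $A = 1 + \delta(n-2)\gamma$ and $D = \delta(1 + (n-3)\gamma)$. Thus $F$ is a concave (downward, or linear when $\gamma = 1$) parabola in $q$, and at its right endpoint $F(1) = -\Lambda_3 < 0$.

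The crucial input is the left endpoint $F(0) = \Lambda_1 + \Lambda_2 - \Lambda_3$. Using $D - A = \delta - 1 - \delta\gamma$ one simplifies $\Lambda_2 - \Lambda_3 = A\big(b\delta + c\delta(1-\gamma) - c\big)$, so that after dividing by $b\delta$ the inequality $F(0) \ge 0$ becomes exactly the implicit condition~(\ref{Eq:delta-min-GR}). Since by hypothesis $(1,1,q^*,0,\gamma)$ is the unique generic cooperative Nash equilibrium, Proposition~\ref{Prop:Existence-Uniqueness-GR-equilibira} guarantees that~(\ref{Eq:delta-min-GR}) holds, whence $F(0) \ge 0$. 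I expect this identification of $F(0)\ge 0$ with~(\ref{Eq:delta-min-GR}) to be the main obstacle: it is precisely the step that excludes the otherwise dangerous configuration in which $F$ has a double root strictly inside $(0,1)$, a configuration that would force $F < 0$ on both sides of $q^*$ and make the first assertion false.

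Finally I would combine these facts. The function $F$ is concave with $F(0) \ge 0 > F(1)$, and by hypothesis $q^*$ is its unique zero in $[0,1)$. By concavity the set $\{F \ge 0\}$ is an interval containing $0$ but not $1$, so $F$ crosses zero exactly once on $[0,1]$, descending from nonnegative to negative; uniqueness identifies this crossing as $q^*$ and pins the signs down to $F(q) > 0$ for $q \in [0,q^*)$ and $F(q) < 0$ for $q \in (q^*,1]$ (the degenerate cases $q^*=0$ and $\gamma=1$ fit the same picture, the first interval then being empty). By the reduction of the first paragraph, $F(q)>0$ on $[0,q^*)$ makes each $(1,1,q,0,\gamma)$ a non-generic cooperative Nash equilibrium, while $F(q)<0$ on $(q^*,1]$ violates every clause of Theorem~\ref{Th:generic-Nash-equilibria} (the strategy is neither ALLD nor of the form $y=q=0$, and $F \neq 0$, $F \not> 0$), so no such strategy with $q > q^*$ is even a Nash equilibrium, let alone a cooperative one. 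This establishes both assertions.
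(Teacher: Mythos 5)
Your proposal is correct and takes essentially the same route as the paper's own proof: both reduce the statement via Theorem~\ref{Th:generic-Nash-equilibria} to the sign of the quadratic (yours in the variable $q$, the paper's in $r=1-q$), both obtain the decisive endpoint inequality $\Lambda_1+\Lambda_2-\Lambda_3\geq 0$ by invoking Proposition~\ref{Prop:Existence-Uniqueness-GR-equilibira} (which, as you correctly note, is exactly where the dangerous interior double-root configuration is excluded), and both finish by concavity of the quadratic. Your additions are purely expository -- the explicit check that $x=1$ identically for $(1,1,q,0,\gamma)$ and the clause-by-clause reading of Theorem~\ref{Th:generic-Nash-equilibria} -- which the paper leaves implicit.
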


\noindent This proposition is about the set of cooperative Nash equilibrium strategies that do not use indirect reciprocity ($\lambda = 0$) and only use a particular mixture of direct and generalized reciprocity (determined by their shared $\gamma$ value). It claims that if this set contains a unique generic Nash equilibrium, then, the only other element in this set is a continuum of non-generic Nash equilibrium. In addition, all Nash equilibrium in this continuum display a generosity $q$ strictly less than the generosity of the generic Nash equilibrium $q^*$.\\

\noindent \textbf{Q: What is the minimum pairwise continuation probability necessary for a combination of direct and generalized reciprocity to support a cooperative equilibrium?} \\

\noindent A: The minimum continuation probability needed for direct or generalized reciprocity to individually support a cooperative equilibrium ($c/b$) is too low for a combination of them to do the same. We find that the minimum continuation probability necessary for a DR-GR mixture to form a Nash equilibrium depends on how much weight is put on each reciprocity (i.e., the value of $\gamma$). More precisely, with numerical calculations, we find that the condition in Eq. (\ref{Eq:delta-min-GR}) must be satisfied by the pairwise continuation probability $\delta$. \\

\noindent It was observed in Schmid et al. that if both direct and indirect reciprocity are able to independently constitute a cooperative Nash equilibrium, any possible mixture of them can do the same~\cite{Schmid:NHB:2021}. Interestingly, an equivalent statement involving direct and generalized reciprocity does not hold. 
We show that for certain $\delta > c/b$, it is possible that direct and generalized reciprocity are able to independently stabilize cooperation but a strategy that uses a mixture of them cannot. 
To elucidate this result, we perform a deviation analysis (Supplementary Fig.~\ref{fig:GR-equilibrium}). We imagine a single individual deviating to pure defection (the mutant) from an otherwise homogenous resident population. In the resident population, everyone plays the strategy $(1,1,0.01,0,\gamma)$. When residents only use direct reciprocity ($\gamma = 0$), they cooperate with other residents but rarely forgive the defector. As a result the mutant's payoff is below $b-c$, the payoff of the homogenous resident population. In contrast, when residents only use generalized reciprocity ($\gamma = 1$), cooperative residents treat everyone equally but they almost always defect. Also here the mutant receives less than $b-c$. However, when residents use a mixture of the two kinds of reciprocity ($0<\gamma<1$), switching to defection can be strictly beneficial (Supplementary Fig.~\ref{fig:GR-equilibrium}B). 

\begin{figure}
    \centering
    \includegraphics[width = \textwidth]{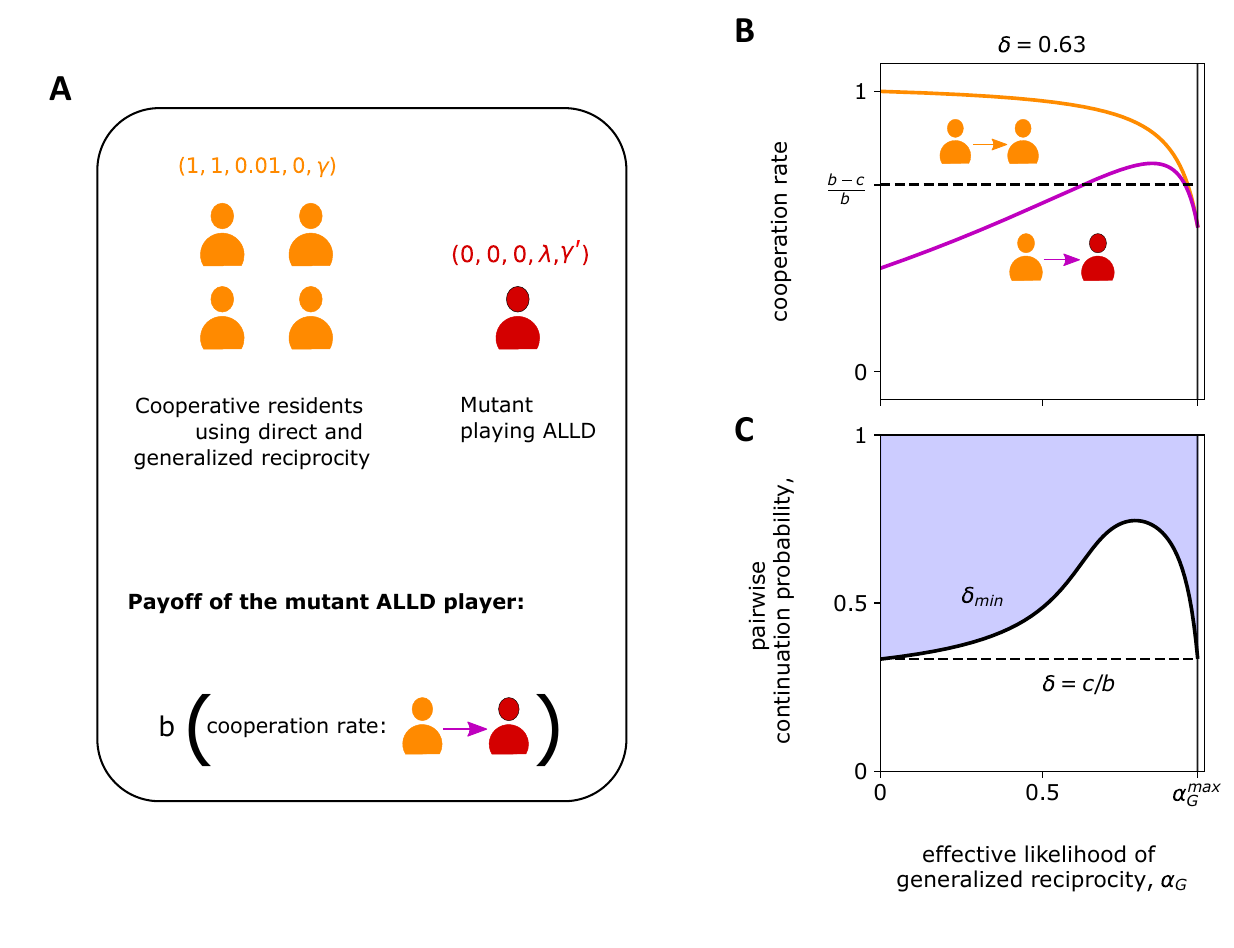}
    \caption{\textbf{A mixture of direct and generalized reciprocity may be less conducive to cooperation than either pure direct or pure generalized reciprocity (A),} We consider the situation where an always defecting  (\textrm{ALLD}) mutant is introduced into a homogenous population of cooperative residents who play the strategy $(1,1,q,0,\gamma)$. Before the mutant is introduced, the residents receive an expected payoff of $b-c$. To be a Nash equilibrium, the average probability that residents cooperate with the mutant must not exceed $(b-c)/b$. \textbf{(B),} The way the residents react to each other (orange curve) and to the \textrm{ALLD} mutant (pink curve) depends on how often they use generalized reciprocity to update reputations, $\alpha_\mathrm{G}$. When the residents use pure direct reciprocity, i.e., $\alpha_\mathrm{G} = 0$, they employ generous tit-for-tat and discriminate against the \textrm{ALLD} mutant. They cooperate fully with other residents but rarely forgive the mutant. As $\alpha_\mathrm{G}$ increases, the two curves tend to converge. When residents use pure generalized reciprocity, i.e., $\alpha_\mathrm{G} = \alpha_\mathrm{G}^{max}$, they treat everyone equally. If the pairwise continuation probability, $\delta$, is not sufficiently high (for example in this case where $\delta = 0.63$), the \textrm{ALLD} mutant earns a payoff greater than $b-c$ for certain $0< \alpha_\mathrm{G} < 1$. \textbf{(C),} Depending on $\alpha_\mathrm{G}$, the minimum value of the pairwise continuation probability, $\delta_{min}$, that is necessary for a resident strategy to be a best response to itself, varies. It is at its lowest, $\delta_{min} = c/b$, when residents use pure direct or pure generalized reciprocity. Parameters: $b = 3$, $c = 1$ and $n=50$.}
    \label{fig:GR-equilibrium}
\end{figure}

\noindent As a result, stable cooperation via a mixture of direct and generalized reciprocity requires more interactions than stable cooperation by either pure direct or pure generalized reciprocity (Supplementary Fig~\ref{fig:GR-equilibrium}C). 

\section{Effective likelihood of direct, indirect and generalized reciprocity}

In this section we introduce three probabilities to describe the effective likelihood with which players use direct, indirect and generalized reciprocity in a homogenous population. We define these probabilities below, 

\begin{enumerate}
    \item \textbf{The effective likelihood of direct reciprocity}, $\alpha_\mathrm{D}$: the probability that the reputation of an arbitrary individual $j$ from the perspective of another arbitrary individual $i$ is not altered between two direct interactions involving $i$ and $j$. 
    
    \item \textbf{The effective likelihood of indirect reciprocity}, $\alpha_\mathrm{I}$: the probability that between two direct interactions involving $i$ and $j$, $i$ updates $j$'s reputation \emph{and}, the last update he makes about $j$'s reputation before meeting her is due to $j$'s interaction with a third party $k \neq i$. 
    
    \item \textbf{The effective likelihood of generalized reciprocity}, $\alpha_\mathrm{G}$:  the probability that between two direct interactions involving $i$ and $j$, $i$ updates $j$'s reputation \emph{and}, the last update he makes about $j$'s reputation before meeting her is due to $i$'s interaction with a third party $k \neq i$. 
\end{enumerate}

\noindent By definition then $\alpha_\mathrm{D} + \alpha_\mathrm{I} + \alpha_\mathrm{G} = 1$. This normalizing condition allows us to quantify how much weight players in a homogenous population put on direct, indirect and generalized reciprocity. In a population where all individuals play $(y,p,q,\lambda,\gamma)$, the effective likelihood of direct reciprocity, $\alpha_\mathrm{D}$, can be calculated as

\begin{equation}
\begin{split}
    \alpha_\mathrm{D} &= \frac{1}{\binom{n}{2}} \cdot \sum_{s = 0}^{\infty} \sum_{l + m = 0}^{s} \frac{s\,!}{l\,! m\,! (s-l-m)\,!} \, \cdot \zeta_1^{l+m} \, \cdot \zeta_2^{s-l-m} \cdot (1 - \lambda)^l (1 - \gamma)^m \\ \\
    &= \frac{1}{1+(n-2)(\lambda + \gamma)}.
\end{split}
\label{Eq:direct_reciprocity_TL}
\end{equation}
\\
\noindent Where $\zeta_1$ is the probability that either $i$ or $j$ (not both) is part of an interaction at a particular round and $\zeta_2$ is probability that neither $i$ nor $j$ is part of a particular interaction. These probabilities are given by, 
\begin{equation}
\begin{split}
    \zeta_1 &=  \frac{n-2}{\binom{n}{2}},\\ \\
    \zeta_2 &= \frac{\binom{n-2}{2}}{\binom{n}{2}}.
\end{split}
\label{s_1_s_2}
\end{equation}

\noindent The probability that between two direct interactions involving $i$ and $j$, the reputation of $j$ with respect to $i$ is not updated depends on the likelihood of two events. Either neither of them are involved in an interaction between these two interactions (in this way there is no opportunity to update reputations) or, one of them is part of an interaction, but the reputation consequences for $j$ because of this interaction are ignored. In the expression inside the sums in Eq.~(\ref{Eq:direct_reciprocity_TL}), we consider that out of the $s$ interactions between two direct interactions involving $i$ and $j$, $l$ interactions involve only $j$, $m$ interactions involve only $i$, and, $s-l-m$ interactions involve neither $i$ nor $j$. These events jointly occur with probability from the multinomial distribution, 
$$\frac{s\,!}{l\,! m\,! (s-l-m)\,!} \cdot \zeta_1^{l+m} \zeta_2^{s-l-m}$$. \\
\noindent In all the $l$ interactions in which only $j$ is involved, $i$ ignores $j$'s action with probability $1-\lambda$. Similarly in all the $m$ interactions involving only $i$, $i$ does not update $j$'s reputation based on what is done to him, with probability $1-\gamma$. Thus, the probability that $i$ makes no update to $j$'s reputation in this exact distribution of events (i.e., for particular values of $s$, $l$ and $m$) is, 

$$\frac{s\,!}{l\,! m\,! (s-l-m)\,!} \, \cdot \zeta_1^{l+m} \, \cdot \zeta_2^{s-l-m} \cdot (1 - \lambda)^l (1 - \gamma)^m.$$
\\
\noindent To compute $\alpha_\mathrm{D}$ we sum up the above probability for all values of $l$ and $m$ that sum up to $s$ and then sum for all values of $s$. The probability $1/(\binom{n} {2})$ on the outside of the sum in Eq.~(\ref{Eq:direct_reciprocity_TL}) is the probability that $i$ and $j$ are chosen for their second direct interaction. \\

\noindent The probability $1 - \alpha_\mathrm{D}$ is the probability that the reputation of $j$ from the perspective of $i$ is updated between two direct interactions involving $i$ and $j$. The final reputation of $j$ from $i$'s perspective before they meet could be because of an interaction that involved $j$ but not $i$ (indirect reciprocity) or an interaction that involved $i$ but not $j$ (generalized reciprocity). Then, the likelihood $\alpha_\mathrm{I}$ and $\alpha_\mathrm{G}$ can be expressed as 

\begin{equation}
\begin{split}
    \alpha_\mathrm{I} &= (1 - \alpha_\mathrm{D}) \cdot \zeta_3 \cdot \lambda, \\ 
    \alpha_\mathrm{G} &= (1 - \alpha_\mathrm{D}) \cdot \zeta_3 \cdot \gamma.
\end{split}
\end{equation}
\\
\noindent where $\zeta_3$ is a constant of proportionality. Now as, $\alpha_\mathrm{I} + \alpha_\mathrm{G} = 1 - \alpha_\mathrm{D}$, 
\begin{equation}
    (1 - \alpha_\mathrm{D}) = (1 - \alpha_\mathrm{D}) \cdot \zeta_3 \cdot (\lambda + \gamma) \\[10pt]
\end{equation}
\noindent Since $1 - \alpha_\mathrm{D}$ is non-zero only when $\lambda, \gamma > 0$, 

\begin{equation}
\alpha_\mathrm{I} = \begin{cases}
(1 - \alpha_\mathrm{D}) \cdot \frac{\lambda}{\lambda + \gamma} \quad &\text{ if }\lambda + \gamma \neq 0 \\[10pt]
0 \quad &\text{otherwise }
\end{cases}
\end{equation}
\noindent and, 
\begin{equation}
\alpha_\mathrm{G} = \begin{cases}
(1 - \alpha_\mathrm{D}) \cdot \frac{\gamma}{\lambda + \gamma} \quad &\text{ if }\lambda + \gamma \neq 0 \\[10pt]
0 \quad &\text{otherwise. }
\end{cases}
\end{equation}
\\[10pt]
\noindent \textbf{Theoretical minimum of the likelihood of direct reciprocity:} In this model, as long as $\delta > 0$, there is a non-zero probability that direct reciprocity is employed in updating reputations. This is because there is always a non-zero chance that a pair of players will meet again after they meet in the present \emph{and} in-between their two direct interactions, one of them will not update the other's reputation (either because neither of them participate in an interaction or the player choose to ignore the other's interactions). From Eq.~(\ref{Eq:direct_reciprocity_TL}), we can see that this theoretical minimum occurs when $\lambda + \gamma = 1$,
\begin{equation}
    \alpha_\mathrm{D}^\mathrm{min}  = \frac{1}{n-1}. \\[10pt]
    \label{Eq:DR-theoretical-min}
\end{equation}
\noindent This also imposes a theoretical maximum on the likelihood of indirect reciprocity and the likelihood of generalized reciprocity which occur at $(\lambda,\gamma)$ = $(1,0)$ and $(0,1)$ respectively, 
\begin{equation}
      \alpha_\mathrm{I}^\mathrm{max}  = \alpha_\mathrm{G}^\mathrm{max} = 1 - \alpha_\mathrm{D}^\mathrm{min} = \frac{n-2}{n-1}.
\end{equation}
\noindent Any tuple of the effective likelihoods, $(\alpha_\mathrm{D}, \alpha_\mathrm{I}, \alpha_\mathrm{G})$ can be expressed as a convex combination of the corners of the triangular simplex, $\mathcal{S}_{\mathrm{DIG}}(n)$, formed by the points: $(1,0,0)$, $(\alpha_\mathrm{D}^\mathrm{min}, \alpha_\mathrm{I}^\mathrm{max},0)$, $(\alpha_\mathrm{D}^\mathrm{min}, 0, \alpha_\mathrm{G}^\mathrm{max})$. For a fixed population size, $n$, every tuple $(\lambda, \gamma) \in [0,1]^2$ maps to a unique point inside the simplex, $\mathcal{S}_{\mathrm{DIG}}(n)$.  Formally, every triplet $(n, \lambda, \gamma) \in \mathbb{Z}_{\geq 2} \times [0,1]^2$ has a unique map in  $\mathcal{S}_{\mathrm{DIG}}(n)$. That is, there is a one-to-one mapping between the two spaces. 
\section{Numerically efficient computation of payoffs}
For evolutionary simulations, we numerically compute the players' payoffs. The payoffs in a homogenous population can be explicitly computed using the result in Proposition ~\ref{Prop:coop-payoff-homogenous}. However, in heterogenous populations, one needs to calculate the payoffs of players using Eq.~(\ref{Eq:payoff-equation}). This requires computing the inverse of a matrix of size $n(n-1)$, which could result in large computation times for large populations. In this section we discuss a more efficient way to compute payoffs. This efficient way takes advantage of the fact that players using the same strategy have the same reputation from the perspective of all other players. 
\\

\noindent We consider that there are $s$ distinct strategies in the population (indexed from $1$ to $s$) and the sizes of these sub-populations are $k_1, k_2,..,k_s$. The $i$-th strategy plays $\sigma_i := (y_i, p_i, q_i, \lambda_i, \gamma_i)$. The recursion equation describing the change in the probability that players consider each other \textit{good} Eq.~(\ref{Eq: recursion_2}), can now be written to describe the change in the probability that \emph{strategies} consider each other \textit{good}. In the following recursions, $x_{ij}(t)$ indicates the probability that at time $t$, a player using strategy $i$ considers a player using strategy $j$ as \textit{good}, 

\begin{equation}
\begin{split}
x_{ii}(t + 1) =  &(1 - w - (\bar{w} - w)(\lambda_i + \gamma_i) + wr_i + (k_i - 2)w r_i \lambda_i (1 - 2 \varepsilon) + (k_i - 2)w r_i \gamma_i)x_{ii}(t)  \\[5pt]
&+ w\lambda_i r_i (1 - 2 \varepsilon) \sum_{l \neq i}^s k_l x_{il}(t) \\[5pt]
&+ w \gamma_i r_i \sum_{l \neq i}^s k_l x_{li}(t) \\[5pt] 
&+ wq_i + w \gamma_i q_i (n-2) + w \lambda_i (q_i + \varepsilon r_i) (n-2) \\[20pt]
x_{ij}(t + 1) =  &(1 - w - (\bar{w} - w)(\lambda_i + \gamma_i))x_{ij}(t)  \\[5pt]
&+ (wr_i + w \lambda_i r_i (1 - 2 \varepsilon)(k_i - 1) + w \gamma_i r_i (k_j - 1))x_{ji}\\
&+ w\lambda_i r_i (1 - 2 \varepsilon) \sum_{l \neq i}^s (k_l - 1_{jl}) x_{jl}(t) \\
&+ w \gamma_i r_i \sum_{l \neq i,j}^s k_l x_{li}(t) \\ 
&+ w \gamma_i r_i (k_i - 1)x_{ii} \\[5pt] 
&+ wq_i + w \gamma_i q_i (n-2) + w \lambda_i (q_i + \varepsilon r_i) (n-2)
\end{split}
\label{Eq:efficient-recursion}
\end{equation}
\\
\noindent Here $r_i := p_i - q_i$ and $1_{ij}$ is the Kronecker delta function which is $1$ if $i=j$ and 0 if $i \neq j$. In our evolutionary simulations we always compute payoffs for populations with at most two strategies (that is, in the limit of rare mutations in the Imhof-Nowak process~\cite{Imhof:PRSB:2010, fudenberg:JET:2006}). We describe the remaining steps of the efficient computation method for only this special case. However, one can already see that recursions have a general form for any $s$ in Eq.~(\ref{Eq:efficient-recursion}). Using the same technique, it is also possible to construct the remaining steps for arbitrary number of strategies in the population. \\

\noindent We assume that the two strategies in the population are $\sigma_1 = (y_1, p_1, q_1, \lambda_1, \gamma_1)$ and $\sigma_2 = (y_2, p_2, q_2, \lambda_2, \gamma_2)$. The expected probability that strategies consider each other \textit{good} at a randomly selected round can be computed as,  

\begin{equation}
    \mathbf{x} = (\mathbb{1} - d\mathbf{M})^{-1}((1 - d)\mathbf{x_0} + d\mathbf{v}) \\[10pt]
\label{easy_reputation_inverse}
\end{equation}

\noindent Here $\mathbf{x}$ is a column vector of size 4,

\begin{equation}
\mathbf{x} = \begin{pmatrix}
x_{11} \\
x_{12} \\
x_{21}\\
x_{22}\\
\end{pmatrix}
\\[10pt]
\end{equation}

\noindent and the $4 \times 4$ square matrix $\mathbf{M} = (m_{ab,cd})_{a,b,c,d \in \{1,2\}}$, the four-dimensional vectors $\mathbf{x}_0$ and $\mathbf{v}$ are given by

\begin{equation}
m_{ab,cd} = \begin{cases}
1-w+wr_a-(\bar{w}-w)(\lambda_a + \gamma_a) \\ + wr_a(k_a - 2)\lambda_a(1-2\varepsilon)+ wr_a(k_a-2)\gamma_a &\quad \text{ if } a=b=c=d,\\
1-w-(\bar{w}-w)(\lambda_a + \gamma_a) &\quad \text{ if } a=c\neq b = d,\\ 
0 &\quad \text{ if }  a=b \neq c=d,\\
w\lambda_ar_a(1-2\varepsilon)k_d &\quad \text{ if }  a=b=c \neq d,\\
w\gamma_ar_ak_c &\quad \text{ if } a=b=d \neq c,\\
w\gamma_ar_a(k_a-1) &\quad \text{ if } a=c=d \neq b,\\
wr_a + wr_a\gamma_a(k_b-1) \\+ wr_a\lambda_a(1-2\varepsilon)(k_a-1) &\quad \text{ if } a=d \neq b = c,\\
w\lambda_ar_a(1-2\varepsilon)(k_b-1) &\quad \text{ if } a \neq b = c = d.
\end{cases}
\\[20pt]
\end{equation}

\begin{equation}
\mathbf{x_0} = \begin{pmatrix}
y_1\\
y_1\\
y_2\\
y_2\\
\end{pmatrix}, \\
\end{equation}
\\
\begin{equation}
\mathbf{v} = \begin{pmatrix}
wq_1 + w \gamma_1 q_1 (n-2) + w \lambda_1 (q_1 + \varepsilon r_1)(n-2)\\ \\
wq_1 + w \gamma_1 q_1 (n-2) + w \lambda_1 (q_1 + \varepsilon r_1)(n-2)\\ \\
wq_2 + w \gamma_2 q_2 (n-2) + w \lambda_2 (q_2 + \varepsilon r_2)(n-2)\\ \\
wq_2 + w \gamma_2 q_2 (n-2) + w \lambda_2 (q_2 + \varepsilon r_2)(n-2)\\ \\
\end{pmatrix}
\end{equation}

\noindent respectively. Payoff for players using strategy $i$ can then be calculated from the computed $\mathbf{x}$ from Eq.~(\ref{easy_reputation_inverse}) using the expression,
\begin{equation}
    \pi_i = \sum_{j = 1}^{s} \frac{k_j - 1_{ij}}{n-1} (x_{ji}b - x_{ij}c).
\label{efficient_payoff}
\end{equation}
\\
\noindent For $s$ strategies in the population, the method of efficiently computing payoffs requires the computation of the inverse of a matrix of size $s^2$ (as opposed to the original method where the matrix size was $n(n-1)$).
\clearpage
\section*{Supplementary Figures}
\begin{figure}[h!]
    \centering
    \includegraphics[width = \textwidth]{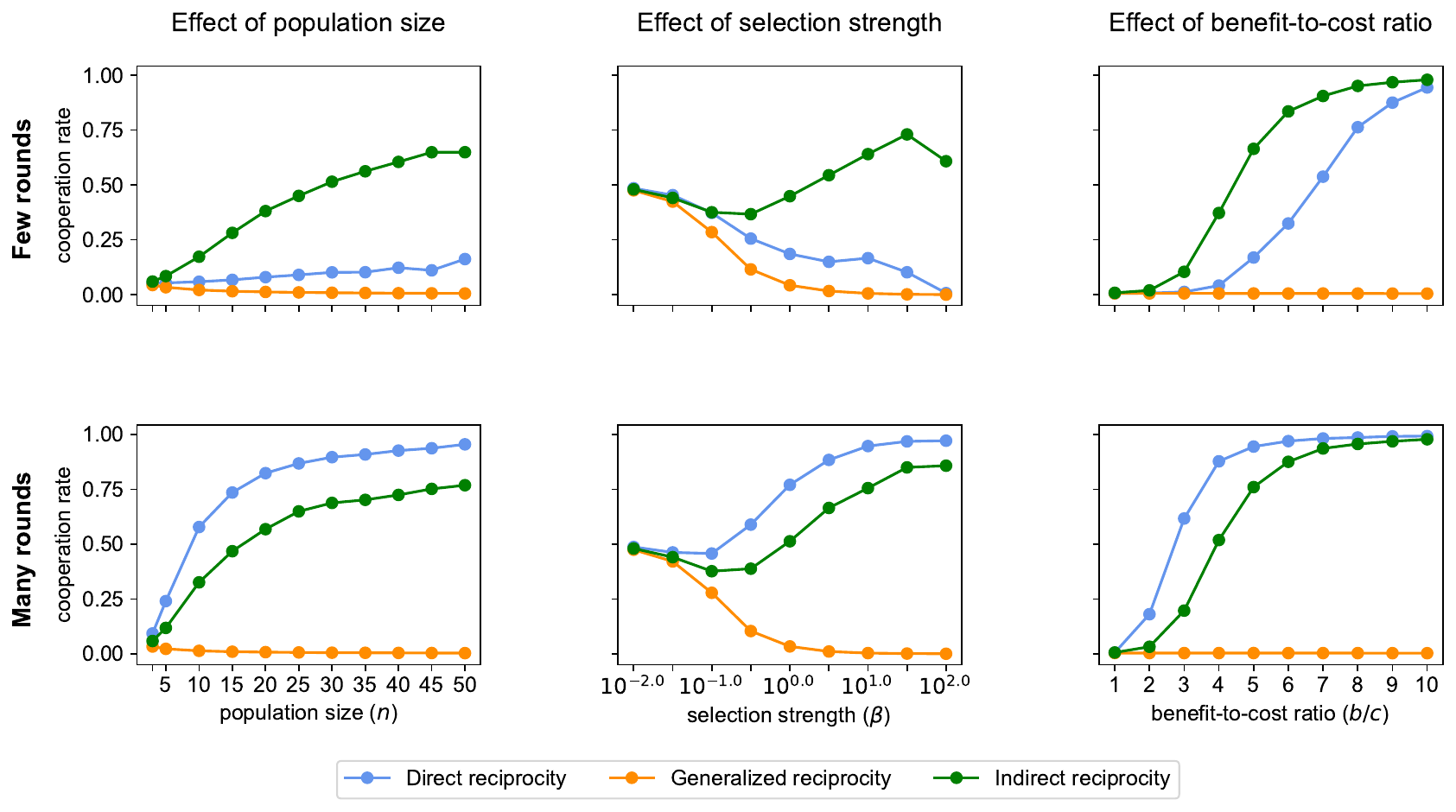}
    \vspace{0.2cm}
    \caption{\textbf{The evolution of cooperation when populations rely on a single mode of reciprocity and rarely explore.} We explore how the effect of Figure 3 (from main text) is influenced by changing different parameters of the model. Specifically, we examine the effects of population size, selection strength, and the benefit-to-cost ratio. We consider two scenarios: one with a few rounds ($\delta = 0.5$, top row) and one with many rounds  ($\delta = 1$, bottom row). Each data point represents the average cooperation rate of the simulation process. In the bottom panel, with many rounds, it is the average of a single simulation. For the top panel, it is the average of three independent simulations. Each simulation stops after $10^7$ mutant strategies have been introduced into the population. If a parameter is not varying, it is fixed to the same value we use for Figure 3. That is, population size
    $n = 50$ and benefit of cooperation, $b = 5$, cost of cooperation, $c=1$, and selection strength, $\beta = 10$. For all these simulations, we assume that there is no error in observing third party interactions, i.e, $\varepsilon=0$. We observe that cooperation fails to evolve via generalized reciprocity across a range of values of $n, \beta$ and $b/c$.}
    \label{fig:extended-figure-single-reciprocity}
\end{figure}

\begin{figure}[h!]
    \centering
    \includegraphics[width = \textwidth]{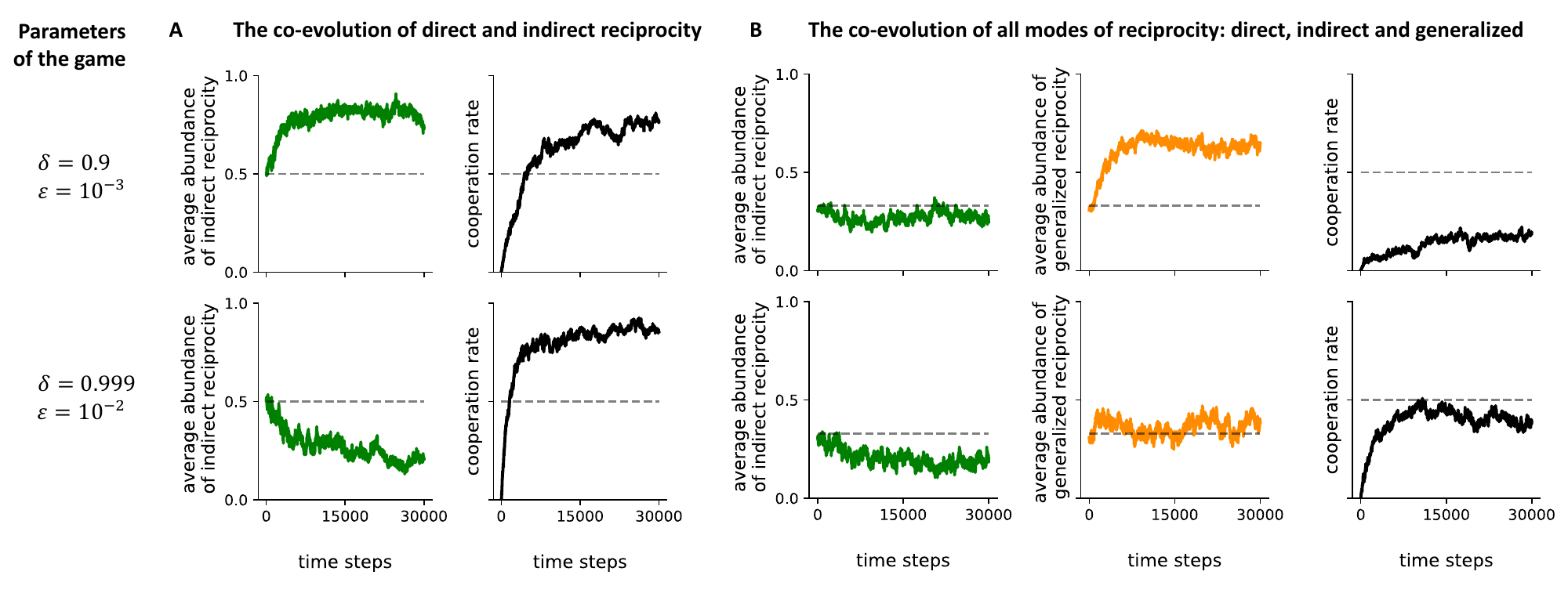}
    \caption{\textbf{Evolutionary dynamics with and without generalized reciprocity demonstrates the impact of generalized reciprocity on cooperation.} Here we present average evolutionary trajectories for two kinds of evolutionary dynamics as in Fig.4 from main text. We show the time series of the likelihood that players use indirect reciprocity ($\alpha_\mathrm{I}$), the likelihood that they use generalized reciprocity ($\alpha_\mathrm{G}$) in a population, averaged over 150 independent simulations. In simulations performed for \textbf{(A)}, individuals can only adopt direct and indirect reciprocity. However in \textbf{(B)}, they can adopt all the three modes of reciprocity. We also plot the average cooperation in populations as they evolve over time. We show these time series for two parameter combinations: $\delta = 0.9, \varepsilon = 10^{-3}$ (top row) and $\delta = 0.999, \varepsilon = 10^{-2}$ (bottom row). Cooperation rates remain considerably low in the evolutionary process where individuals can adopt generalized reciprocity.}
    \label{fig:treatment-comparison-timeseries-evolution}
\end{figure}

\begin{figure}[t!]
    \centering
    \includegraphics[width = \textwidth]{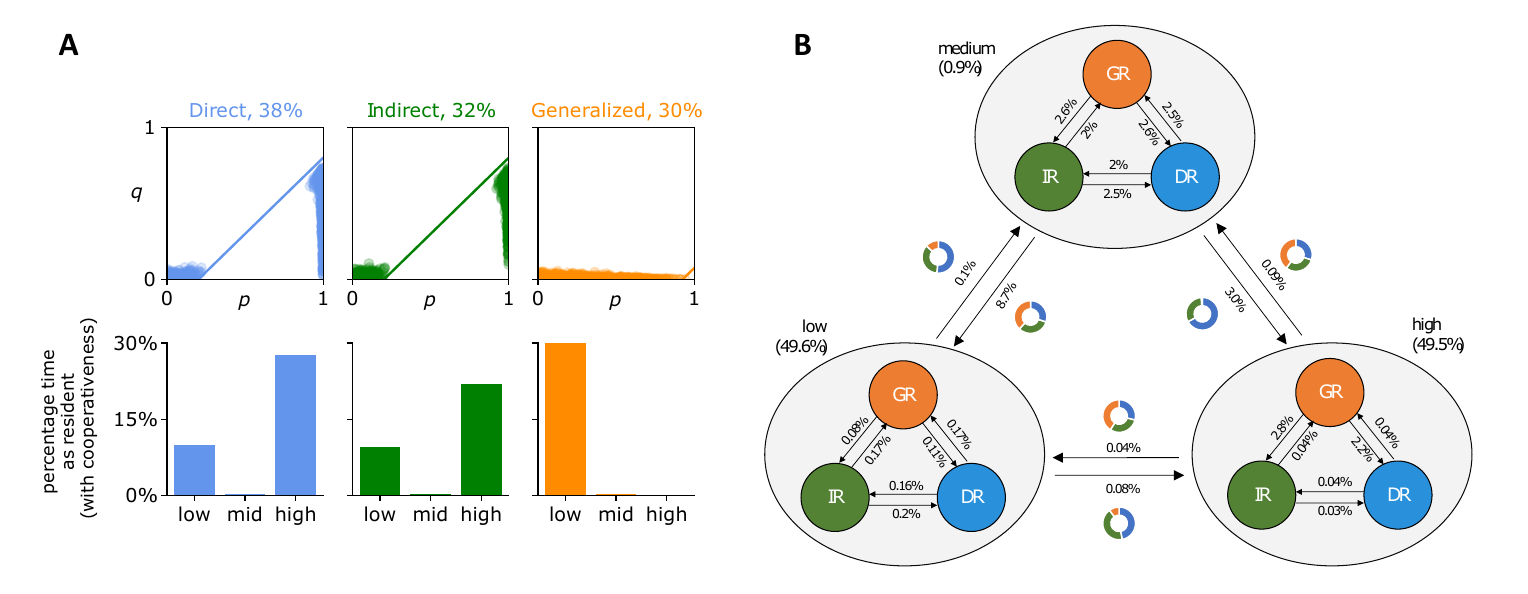}
    \caption{\textbf{The analysis of a simulation with many pairwise interactions and rare observational errors reveals a mechanism behind the breakdown of cooperation in the presence of generalized reciprocity.} We conduct an evolutionary simulation at low mutation rate with the parameters $\delta = 0.999$, and $\varepsilon = 10^{-3}$. Individuals in the simulation were free to adopt any of the three modes of reciprocity. \textbf{(A),} Among the 47,000 resident populations that emerged, 38\% used direct reciprocity, 32\% used indirect reciprocity and the rest, 30\%, used generalized reciprocity. We show the strategy elements $p$ and $q$ of the 1000 most successful residents from each reciprocity class using colored dots. We categorize each resident population into one of the three classes: low (less than 1/3), medium (between 1/3 and 2/3) and high (more than 2/3), based on the probability of observing cooperation in them. The most abundant residents are high DR and low GR. \textbf{(B),} In this illustration we show the estimated transition probabilities between the cooperative classes (arrows between the 3 large ovals) and estimated transition probabilities between residents that use different modes of reciprocity but fall under the same cooperative class (arrows between the colored circles inside each oval). We estimate these transition probabilities using data from our simulation. For each transition between the cooperative classes, we denote the relative contribution of each mode of reciprocity for that transition using colors on a circular scale (blue for direct reciprocity, green for indirect reciprocity and orange for generalized reciprocity). All other parameters are the same as in Fig. 3 of main text. 
    }
    \label{fig:GR-reason-breakdown}
\end{figure}


\newpage
\begin{figure}[h!]
    \centering
    \includegraphics[width = 0.5\textwidth]{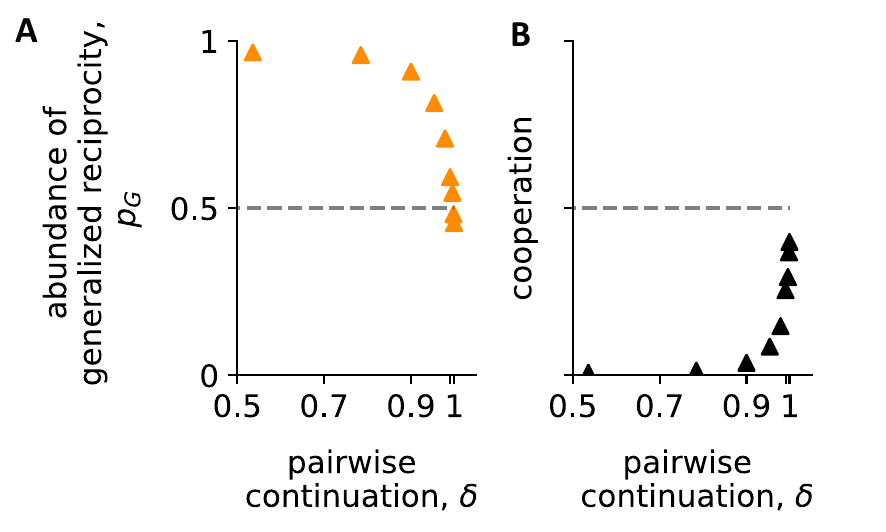}
    \vspace{-0.2cm}\caption{\textbf{The evolutionary dynamics of direct and generalized reciprocity.} In the evolutionary simulations that we conduct for this Figure, individuals can only choose to adopt either pure direct reciprocity or pure generalized reciprocity. The evolutionary process is conducted for rare mutations. In all simulations, the evolutionary process begins with a strategy that always defects. In \textbf{(A)}, we show the average time series of the likelihood that individuals use indirect reciprocity, $\alpha_\mathrm{I}$ (which is trivially always zero since we restrict our simulations to exclude indirect reciprocity) and the likelihood that they use generalized reciprocity, $\alpha_\mathrm{G}$ to update reputations in homogenous resident populations. We also plot the evolution of cooperation over time in the corresponding resident populations. We show these time series for two values of pairwise continuation probability in the population, $\delta = 0.9$ and $\delta = 0.999$. In \textbf{(B)}, We plot the time average of $\alpha_\mathrm{I}$, $\alpha_\mathrm{G}$ and cooperation rate in populations over long simulations for values of $\delta$ ranging between 0.5 and 0.999. The likelihood that individuals use generalized reciprocity drops with more expected rounds. The cooperation rate grows simultaneously.} 
    \label{fig:evolution-with-and-without-IR}
\end{figure}
\clearpage

\begin{figure}
    \centering
    \includegraphics[width = \textwidth]{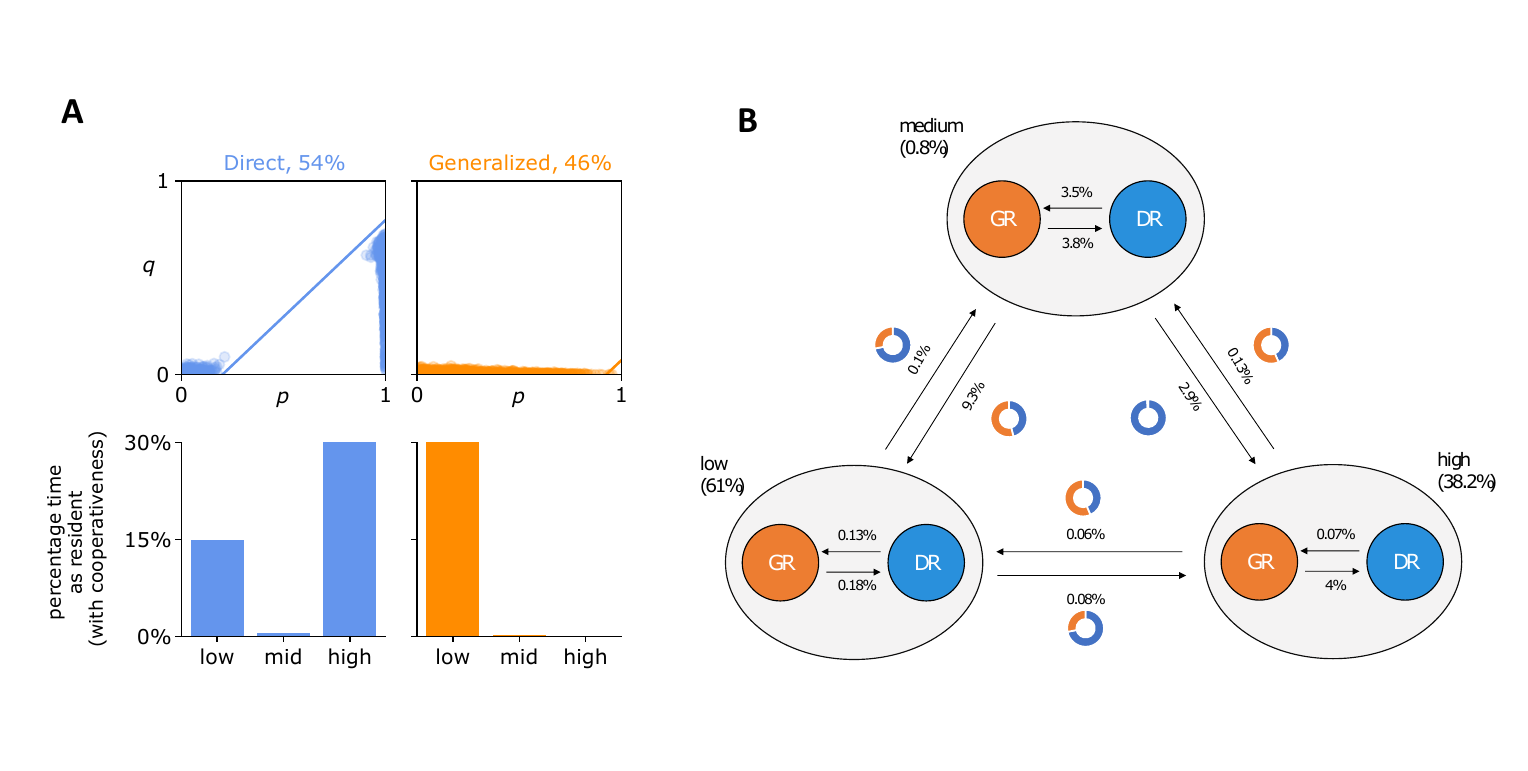}
    \caption{\textbf{The distribution of strategies and transitions between them in an evolutionary simulation where individuals were only allowed to select between pure direct and pure generalized reciprocity.} For this figure, evolutionary simulations were performed for a high value of pairwise continuation probability, $\delta = 0.999$. The methods behind the simulation and the related parameter values are the same as the simulation used to create Fig. 5 in main text.} 
    \label{fig:IR-off-high-continuation}
\end{figure}



\clearpage
\section*{Appendix: Proofs}

\begin{proof}[Proof of Proposition ~\ref{Prop:coop-payoff-homogenous}]
\noindent We define a function $f_{ij}(T)$ in the following way,

\begin{equation}
    f_{ij}(T) := (1-d) \sum_{\tau = 0}^T d^\tau (dx_{ij}(\tau + 1) - x_{ij}(\tau)) 
\end{equation}\\
In the above definition we first insert the recursive expression of $x_{ij}(\tau + 1)$ in terms of $x_{ij}(\tau)$ from Eq.~(\ref{Eq: recursion_2}) and then evaluate $\lim_{T \to \infty} f_{ij}(T)$. While doing the steps, we use the definition average reputation of $j$ with respect to $i$, $x_{ij}$, from Eq.~(\ref{Eq:discounted_x}) to simplify and get,\\\\
\begin{equation}
\begin{split}
    \lim_{T \to \infty} f_{ij}(T) &= (d-dw-d(\bar{w}-w)(\lambda_i + \gamma_i))x_{ij} + dwr_ix_{ji} + dw\lambda_ir_i(1-2\varepsilon)\displaystyle\sum_{l \neq i,j}x_{jl} \\[10pt] &+ dw\gamma_i r_i \displaystyle \sum_{l \neq i,j} x_{li} + dwq_i + dw\gamma_iq_i(n-2) + dw\lambda_i(q_i + \varepsilon r_i)(n-2) \\[10pt] &- x_{ij} 
\end{split}
\label{Eq:limit-first-expression}
\end{equation}
There is way to evaluate another expression for $\lim_{T \to \infty} f_{ij}(T)$. This is by observing that the sum in the definition of $f_{ij}(T)$ is a telescopic sum,
\begin{equation}
\begin{split}
f_{ij}(T):&= (1-d) \sum_{\tau = 0}^T d^\tau (dx_{ij}(\tau + 1) - x_{ij}(\tau)) \\[10pt]
&=(1-d) \displaystyle (d^{T + 1}x_{ij}(T+1) - x_{ij}(0))
\end{split}    
\end{equation}
and therefore,
\begin{align}
\lim_{T \to \infty} f_{ij}(T)T:&= -(1-d)x_{ij}(0) = -(1-d)y_i 
\label{Eq:limit-second-expression}
\end{align}
Comparing Eq.(\ref{Eq:limit-first-expression}) and Eq.(\ref{Eq:limit-second-expression}) we equate the two right hand sides to get,
\begin{equation}
\begin{split}
 &(d - dw - d(\bar{w}-w)(\lambda_i + \gamma_i)-1)x_{ij} + dwr_ix_{ji} + dw\lambda_ir_i(1-2\varepsilon)\displaystyle\sum_{l \neq i,j}x_{jl} \\[10pt] &+ dw\gamma_i r_i \displaystyle \sum_{l \neq i,j} x_{li} + dwq_i + dw\gamma_iq_i(n-2) + dw\lambda_i(q_i + \varepsilon r_i)(n-2)  = -(1-d)y_i\\[10pt] 
\end{split}
\label{Eq:master-equation-work}
\end{equation}
In a homogenous population playing the strategy $\sigma = (y,p,q,\lambda,\gamma)$, $x_{ij} = x_{ji} = x_{jl} = x_{li} = x$. Then, Eq.~(\ref{Eq:master-equation-work}) reduces to, 

\begin{equation}
\begin{split}
  &(d - dw - d(\bar{w}-w)(\lambda + \gamma)-1)x + dwrx + dw\lambda r (1-2\varepsilon)(n-2)x \\[10pt] &+ dw\gamma r (n-2)x + dwq + dw\gamma q(n-2) + dw\lambda(q + \varepsilon r)(n-2)  = -(1-d)y\\[20pt]   
  \implies x &= \frac{(1-d)y + dwq + d(\bar{w} - w)(q\lambda + q\gamma + \lambda \varepsilon r)}{(1 - d) + dw(1-r) + d(\bar{w} - w)((\lambda + \gamma)(1 - r) + 2 \varepsilon \lambda r)} \\ \\
\end{split}
\end{equation}
Now using the expressions $w = 2/(n(n-1))$ and $\bar{w} = 2/n$ and the expression for $\delta$ in Eq.(\ref{Eq:delta-with-d-n}), one can simplify $x$ as, \\[5pt]
\begin{equation}
    x = \frac{(1 - \delta)y + \delta q + \delta (n-2)(q(\lambda + \gamma) + \lambda \varepsilon r)}{1 - \delta r + \delta (n-2)((\lambda + \gamma)(1 - r) + 2 \lambda \varepsilon r)} \\[10pt]
\label{Eq:proof-x-homogenous}
\end{equation}
Here $r = p-q$. The payoff of every player in this homogenous population is $(b-c)x$. For $0 < \delta < 1$, $n>2$ and $\varepsilon > 0$, one can observe from the numerator of the expression in Eq.~(\ref{Eq:proof-x-homogenous}) that $x = 0$ (a fully defective population) if and only if either a) $y=q=0$ and $p=0$ or b) $y=q=0$ and $\lambda = 0$. Here $y=q=0$ is a necessary condition for $x = 0$, and, together with either $p=0$ or $\lambda = 0$ it is also sufficient. For $0 < \delta < 1$, $n>2$ and $\varepsilon = 0$, $y=q=0$ is necessary as well as sufficient for $x=0$. The same is true for $n = 2$. \\

\noindent Similarly, now if we plug $x = 1$ (a fully cooperative population) in Eq.~(\ref{Eq:proof-x-homogenous}), we get,
\begin{equation}
    (1-\delta)y = 1 - \delta p + \delta (n-2)((\lambda+\gamma)(1-p) + \lambda \varepsilon (p-q))
\end{equation}
When $n = 2$, the only solution for the above equation is $y = p = 1$. However, when $n > 2$, the left hand side in itself is always less than or equal to $1-\delta$ while the right hand side in itself is always greater than or equal to $1-\delta$. When $0 < \delta < 1$, the only case when they are exactly equal to $1-\delta$ is when $y = p = 1$ and either $q=1$ or $\lambda = 0$. 
\end{proof}

\begin{proof}[Proof of Lemma ~\ref{claim: first linear relationship}]
In the case where there is one mutant in a population of size labeled as player $n$, we denote the average probability that a resident considers another resident \textit{good} with $x_{ik}$, and that a resident considers the mutant \textit{good} as $x_{in}$ and that the mutant considers the resident \textit{good} as $x_{ni}$. The resident plays the strategy $(y,p,q,\lambda,\gamma)$. Then, rewriting Eq.~(\ref{Eq:master-equation-work}) from the perspective of a resident $i$, towards the mutant $n$, i.e., by replacing $j$ with $n$ in Eq.~(\ref{Eq:master-equation-work}) we get, 
\begin{equation}
\begin{split}
 &(d - dw - d(\bar{w}-w)(\lambda + \gamma)-1)x_{in} + dwrx_{ni} + dw\lambda r(1-2\varepsilon)(n-2)x_{ni} \\[10pt] &+ dw\gamma r (n-2) x_{ik} + dwq + dw\gamma q(n-2) + dw\lambda(q + \varepsilon r)(n-2)  = -(1-d)y\\[10pt]
\end{split}
\end{equation}
Rearranging the terms of the above equation so that the coefficient of $x_{in}$ is 1, we get, 
\begin{equation}
    x_{in} = B_1 + B_2x_{ni} + B_3x_{ik}
    \label{eq:first-linear}
\end{equation}
where $B_{1}, B_2$ and $B_3$ are given in Eq.~(\ref{Eq:coefficients-Bs}). If we rewrite again Eq.~(\ref{Eq:master-equation-work}) but this time from the perspective of a resident $i$, towards another resident $k$, i.e., replacing $j$ with $k$ in Eq.~(\ref{Eq:master-equation-work}) we get, 
\begin{equation}
\begin{split}
    &(d - dw - d(\bar{w}-w)(\lambda + \gamma)-1)x_{ik} + dwrx_{ik} + dw\lambda r(1-2\varepsilon)(n-3)x_{ik} \\[10pt] &+ dw\gamma r (n-3) x_{ik} + dw\lambda r (1-2\varepsilon)x_{in} + dw\gamma r x_{ni} + dwq + dw\gamma q(n-2) + \\[10pt] &+dw\lambda(q + \varepsilon r)(n-2)  = -(1-d)y\\[10pt] 
\end{split}
\end{equation}
Rearranging the terms of the above equation so that the coefficient of $x_{ik}$ is 1, we get, 
\begin{equation}
    x_{ik} = C_1 + C_2x_{ni} + C_3x_{in}
    \label{eq:second-linear}
\end{equation}
where $C_{1}, C_2$ and $C_3$ are given in Eq.~(\ref{Eq:coefficients-Bs}).
\end{proof}
\newpage
\begin{proof}[Proof of Proposition~\ref{Prop:average-payoff-mutant-linear} ]
The linear equations in the previous lemma hold only if $n>2$. In this case, combining Eq.~(\ref{eq:first-linear}) and Eq.~(\ref{eq:second-linear}) and eliminating $x_{ik}$, one gets for $n>2$,

\begin{equation}
x_{in} = \left( \frac{B_1 + B_3C_1}{1-B_3C_3} \right) + \left( \frac{B_2 + B_3C_2}{1-B_3C_3} \right) x_{ni} \\[10pt]
\end{equation}
\noindent For the simpler case of $n=2$, the relationship between $x_{in}$ and $x_{ni}$ is only due to direct reciprocity and is given by,
\begin{equation}
    x_{in} = (1-\delta)y + \delta q + \delta r x_{ni}
\end{equation}
The two equations above can be rewritten using $K_1$ and $K_2$ from~(\ref{Eq:coefficients-K1}) and  Eq.~(\ref{Eq:coefficients-K2}) respectively, 
\begin{equation}
    x_{in} = K_1 + K_2 x_{ni}
\end{equation}
\noindent The payoff of the mutant $\pi_n$ can be simplified using the linear relationship above in the following manner,
\begin{align}
   \pi_n &= \frac{1}{n-1} \sum_{i \neq n} bx_{in} - cx_{ni} \\
   &= \frac{1}{n-1} \sum_{i \neq n} b(K_1 + K_2x_{ni}) - cx_{ni} \\
   &= K_1b + (K_2b - c)x_{ni}
\end{align}
\end{proof}

\begin{proof}[Proof of Lemma~\ref{claim:algebraic-relationship}]
With some algebraic steps it can be confirmed that for $n>2$, $K_2b - c$ can be expressed as, $(\Lambda_1 r^2 + \Lambda_2 - \Lambda_3)/(1-B_3C_3)$. To complete the proof we only need to show that $1-B_3C_3 > 0$. From the expressions of $B_3$ and $C_3$ in Eq.~(\ref{Eq:coefficients-Bs}), it is easy to confirm that $0 \leq B_3 < 1$ and $0 \leq C_3 < 1$. Thus, $1-B_3C_3 > 0$. The signs of $K_2b - c$ and $\Lambda_1 r^2 + \Lambda_2 r - \Lambda_3$ are therefore always the same. 
\end{proof}

\begin{proof}[Proof of Theorem~\ref{Th:generic-Nash-equilibria}]
Resident strategies enforce a linear relationship on the mutant's payoff based on how cooperative the mutant is to the residents (Proposition~\ref{Prop:average-payoff-mutant-linear}). Residents can be classified into three groups: a) ones which enforce a zero slope on this relationship: $K_2b - c = 0$ (equalizer strategies) or b) ones which enforce a positive slope: $K_2b - c > 0$ or c) ones which enforce a negative slope: $K_2b - c < 0$. From Lemma~\ref{claim:algebraic-relationship} we know that it is sufficient to check the sign of the quadratic on $r$: $\Lambda_1 r^2 + \Lambda_2 r - \Lambda_3$ to group resident strategies into these three classes. Let's look at each case:
\\

\noindent \textit{When $\Lambda_1 r^2 + \Lambda_2 r - \Lambda_3=0$}: All resident strategies that equalize mutants' payoff are Nash equilibrium strategies. Since for all values of average cooperation the mutant's payoff is the same, every possible strategies is a best response to the resident. The resident strategy is therefore also a best response to itself. We classify these Nash strategies under \textit{generic Nash}.\\

\noindent \textit{When $\Lambda_1 r^2 + \Lambda_2 r - \Lambda_3 < 0$}: Resident strategies that enforce a negative slope on the mutant's payoff relationship necessitates full defection as the only best response for a mutant against the resident. Resident strategies that fully defect in a homogenous ($x = 0$) \textit{and} enforce a negative slope for the mutant player's payoff are thus best response to themselves. These are strategies that simultaneously satisfy, $\Lambda_1 r^2 + \Lambda_2 r - \Lambda_3 < 0$ and $y=q=0$ and additionally $\lambda = 0$ (if $\varepsilon \neq 0$). One special strategy that satisfies these properties is \textrm{ALLD}: $y=p=q=r=0$. For this strategy, $\Lambda_1 r^2 + \Lambda_2 r - \Lambda_3 = -\Lambda_3 < 0$. We also classify \textrm{ALLD} under the class of generic Nash.\\

\noindent \textit{When $\Lambda_1 r^2 + \Lambda_2 r - \Lambda_3 > 0$}: Resident strategies that enforce a positive slope on the mutant's payoff relationship necessitates full cooperation as the only best response for a mutant against the resident. Resident strategies that fully cooperate in a homogenous ($x = 1$) \textit{and} enforce a positive slope for the mutant player's payoff are thus best response to themselves. These are strategies that simultaneously satisfy, $\Lambda_1 r^2 + \Lambda_2 r - \Lambda_3 > 0$ and $y=p=1$ and additionally $\lambda = 0$ (if $\varepsilon \neq 0$). 
\end{proof}

\begin{proof}[Proof of Theorem~\ref{Th:two-player-Nash-equilibria}]
For $n=2$, the linear relationship between mutant's payoff and its average cooperation to the resident (now just a single player), $x_{ni}$ is given by,
\begin{equation}
    \pi_n = (1-\delta)y + \delta q + (\delta (p-q)b - c)x_{ni}
\end{equation}
When $p-q = c/(b\delta)$, the resident is an equalizer and therefore a Nash equilibrium (using the same argument as the previous proof). When $p-q < c/b$, the strategy that always defects against itself: $y=q=0$ is a Nash equilibrium. That is, $y=q=0$, $p < c/(b\delta)$ is also a Nash equilibrium. When $p-q < c/(b\delta)$ a strategy that always cooperates with itself: $y=p=1$ is a Nash equilibrium. That is, $y=p=1$ and $q < 1 - c/(b\delta)$ is  Nash equilibrium. 
\end{proof}

\begin{proof}[Proof of Proposition~\ref{Prop:cooperative-Nash}]
By definition, a generic cooperative Nash equilibrium are generic Nash strategies. So, strategies which result in $\Lambda_1 r^2 + \Lambda_2 r - \Lambda_3 \neq 0$ are ruled out. In addition, these strategies have cooperation in the homogenous population, $x \to 1$ as $\varepsilon \to 0$. In the limit of $\varepsilon \to 0$, \\
\begin{equation}
    \lim_{\varepsilon \to 0} x = \displaystyle \frac{(1-\delta)y + \delta q + \delta(n-2)q(\lambda + \gamma)}{(1-\delta p) + \delta q + \delta (n-2) q (\lambda + \gamma) + \delta(n-2)(\lambda + \gamma)(1-p)} 
\end{equation} \\
\noindent The right hand side of the above expression is 1 only if $y=p=1$. So, generic cooperative Nash strategies need to simultaneously satisfy $y=p=1$ and $\Lambda_1 (1-q)^2 + \Lambda_2 (1-q) - \Lambda_3 = 0$. 
\end{proof}
\begin{proof}[Proof of Proposition~\ref{Prop:GR-coop-Nash}]
A strategy of the form $(1,1,q^*_{\mathrm{G}}, 0, 1)$ already satisfies the first necessary condition for a generic cooperative Nash (from Proposition~\ref{Prop:cooperative-Nash}). It only needs to satisfy the second necessary condition, $\Lambda_1 (1-q^*_{\mathrm{G}})^2 + \Lambda_2 (1-q^*_{\mathrm{G}}) - \Lambda_3 = 0$ to be a cooperative Nash. For $\lambda = 0$ and $\gamma = 1$ (which is the case for this strategy), the quadratic on $(1-q^*_{\mathrm{G}})$ becomes only a linear in $(1-q^*_{\mathrm{G}})$ since $\Lambda_1 = b\delta^2(n-2) - b\delta^2(n-2) = 0$. The remaining coefficients $\Lambda_2, \Lambda_3$ are given by,
\begin{equation}
\begin{split}
   \Lambda_2 &= (1 + \delta(n-2))(b\delta + c\delta(n-2)) \\
   \Lambda_3 &= c(1 + \delta(n-2))^2
\end{split}
\end{equation}
Since $(1 + \delta(n-2)) > 0$, the solution of $q^*_{\mathrm{G}} = 1 - \Lambda_3/\Lambda_2$ is, 
\begin{equation}
    q^*_{\mathrm{G}} = 1 - \frac{c + c\delta(n-2)}{b\delta + b\delta(n-2)}
\end{equation}
\end{proof}
\begin{proof}[Proof of Proposition~\ref{Prop:Existence-Uniqueness-GR-equilibira} ]
A strategy of the form $(1,1,q,0,\gamma)$ is a generic cooperative Nash equilibrium if $\Lambda_1(1-q)^2 + \Lambda_2(1-q) - \Lambda_3 = 0$. For $\lambda = 0$, the coefficients of this quadratic are given by, 
\begin{equation}
\begin{split}
    \Lambda_1 &= -b\delta^2(1-\gamma)(1 + (n-2)\gamma)\\
    \Lambda_2 &= b\delta(1 + \delta(n-2)\gamma) + c\delta(1+\delta(n-2)\gamma)(1+(n-3)\gamma)\\
    \Lambda_3 &= c(1+\delta(n-2)\gamma)^2\\
\end{split}
\end{equation}
The coefficients have the following sign : $\Lambda_1 \leq 0$ $\Lambda_2 > 0$ and $\Lambda_3 > 0$. Therefore, the quadratic in $r:= 1-q$: $F(r) := \Lambda_1 r^2 + \Lambda_2 r - \Lambda_3$ has the following properties: 
\begin{enumerate}
    \item $F(0) < 0$
    \item $F'(0) > 0$
    \item $F''(r) < 0$
\end{enumerate}

\noindent Due to the above properties, if $\Lambda_1 + \Lambda_2 - \Lambda_3 = F(1) \geq 0$ then there is exactly one root of $r \in (0,1]$, which implies exactly one root for $q \in [0,1)$. This results in the following condition,

\begin{equation}
\begin{split}
\Lambda_1 + \Lambda_2 - \Lambda_3 = &(1 + \delta(n-2)\gamma)(b\delta + c\delta - c -c\delta\gamma) - b\delta^2(1-\gamma)(\gamma(n-2)+1) \geq 0 \\[10pt]
\implies &\delta(1-\gamma)((n-2)\gamma + 1) \leq (1 + \delta(n-2)\gamma)\left(1 + \frac{c}{b} - \frac{c}{b\delta} - \frac{c \gamma}{b} \right).
\end{split}
\label{Eq:show-F-equals-condition}
\end{equation}

\noindent We now show that, conversely, if there is only one root of $r \in (0,1]$ then it implies that $F(1) \geq 0$ (or equivalently, Eq. (\ref{Eq:show-F-equals-condition}) holds). In this case, since the quadratic $F(r)$ satisfies the above properties, a unique root $r^* \in (0,1]$  may occur in two ways: either $F'(r^*) > 0$ or $F'(r^*) = 0$. When $F'(r^*) > 0$ and $r^*$ is the unique root in $(0,1]$, it implies that $F(1) \geq 0$. We now show that the degenerate case $F'(r^*) = 0$ cannot arise. We prove this by contradiction.\\

\noindent We consider that $\exists r^* \in (0,1]$ such that $F(r^*) = F'(r^*) = 0$. Then, without loss of generality, one could express $F(r)$ as $F(r) = -u(r-r^*)^2$ where $u > 0$. By comparing coefficients we get

\begin{equation}
\begin{split}
-u &= \Lambda_1, \\
2ur^* &= \Lambda_2, \\
u(r^*)^2 &= \Lambda_3.
\end{split}
\end{equation}
\noindent This implies that $\Lambda_2^2 = -4\Lambda_1 \Lambda_3$. After some simplifications, we get
\begin{equation}
\begin{split}
    b\delta^2(1+\delta n \gamma)^2\left(1 + \frac{c}{b}(1-\gamma + n\gamma)\right)^2 &= 4bc\delta^2(1-\gamma)(1+n\gamma)(1+\delta n \gamma)^2, \\[10pt]
    \implies (2n'^2\theta^2 + 4n'\theta + 4\theta)\gamma^2 - 2n'\theta(1-\theta)\gamma + (1-\theta)^2 &= 0.
\end{split}
\end{equation}
\noindent Here $\theta = c/b < 1$ and $n' = n-1 \geq 2$. One can observe that the discriminant of this quadratic in $\gamma$, $\mathrm{D} := -4(1-\theta)^2(4n'^2\theta^2+4n'\theta + 4\theta) < 0$ thus implying that for fixed $n$ and $c/b<1$ there is no $\gamma$ and $r^*$ respectively in $[0,1]$ and $(0,1]$ such that $F(r)$ takes the form $-u(r-r^*)^2$. Therefore, a unique $r^* \in (0,1]$ such that $F(r^*) = 0$ requires $F'(r^*) > 0$ and thus $F(1) \geq 0$.

\end{proof}
\begin{proof}[Proof of Proposition~\ref{Prop:highest-generosity-among-coop-NE}]
The proof of this proposition follows from the result of the previous proposition. If for a value of $\gamma \in [0,1]$, the strategy $(1,1,q^*,0,\gamma)$ is the only generic cooperative Nash equilibrium strategy with $\lambda=0$, then it must be true that $F(1) \geq 0$ (the quadratic $F$ is defined in the previous proof). 
\\\\
\noindent As $F(1) \geq 0$ and $F(0) < 0$, there is exactly one root of the quadratic, $r^* \in (0,1]$. 
If $r^* < 1$, $F(r) > 0$, for all $r \in (r^*,1]$. Equivalently, $F(1-q) >0$ for all $q \in [0,q^*)$, where $q^* = 1 - r^*$. The strategy $(1,1,q,0,\gamma)$, for all $q \in [0,q^*)$, satisfies $y=p=1$ and $F(1-q) > 0$. So, by Theorem 1 (point 3), these strategies form a continuum of non-generic equilibria. Using the same Theorem, we can conclude that there is no Nash equilibrium of the form $(1,1,q,0,\gamma)$ for $q > q^*$. When $r^* = 1$, there are no non-generic Nash equilibria of this form because there is no $q$ for which $F(1-q) > 0$.   
\end{proof}
\bibliographystyle{unsrt}
\bibliography{bibliography.bib}